\newtcolorbox{sgd}[2][]
{
  breakable,
  colframe = gray!50,
  colback  = gray!10,
  coltitle = gray!10!black,
  before skip = 10pt,
  after skip = 10pt,
  title    = \textbf{#2.},
  #1,
}
\newtcolorbox{pca}[2][]
{
  breakable,
  colframe = red!50,
  colback  = red!10,
  coltitle = red!10!black,
  before skip = 10pt,
  after skip = 10pt,
  title    = \textbf{Example: Local convergence for streaming $k$-PCA #2},
  #1,
}
\newif\ifdraft
\newcommand{\Snotes}[1]{\ifdraft{\color{orange}[JS: #1]}\else\fi}
\newenvironment{qcircuit}{\begin{tikzpicture}[qcircuit] }{\end{tikzpicture}}
\def\circleH{.16} 
\def\targetH{.75*\circleH} 
\def\controlH{.375*\circleH} 
\tikzset{fc/.style={path picture={
	\filldraw[fill=white] (0,0) circle (\circleH);
	}}} 
\tikzset{fcb/.style={path picture={
	\filldraw[fill=black] (0,0) circle (\controlH);
	}}} 
\tikzset{plus/.style={path picture={
	\draw (0,0) circle (\targetH);
	\filldraw[fill=black] (-\targetH,0) -- (\targetH,0) (0,-\targetH) -- (0,\targetH) ;
	}}} 
\tikzstyle{gate} = [draw,fill=white,minimum size=1.5em,inner sep=2pt] 
\tikzstyle{ket} = [fill=white,minimum size=1.5em]
\tikzset{qcircuit/.style={thick, minimum size=3ex}} 
\tikzset{meter/.append style={draw, inner sep=5, rectangle, font=\vphantom{A}, minimum width=23, line width=.8,
 path picture={\draw[black] ([shift={(.1,.2)}]path picture bounding box.south west) to[bend left=50] ([shift={(-.1,.2)}]path picture bounding box.south east);\draw[black,-latex] ([shift={(0,.1)}]path picture bounding box.south) -- ([shift={(.3,-.1)}]path picture bounding box.north);}}}
\renewcommand{\paragraph}{%
  \@startsection{paragraph}{4}%
  {\z@}{1.75ex \@plus 1ex \@minus .2ex}{-1em}%
  {\normalfont\normalsize\bfseries}%
}
\numberwithin{equation}{section}
\declaretheoremstyle[bodyfont=\it,qed=\qedsymbol,headpunct=.\vphantom{$p_{p_{p_p}}$},postheadspace=\newline,headformat=\NAME\  \NUMBER\,\NOTE]{noproofstyle} 
\newtheoremstyle{break}%
{}{}%
{\itshape}{}%
{\bfseries}{.\vphantom{$p_{p_{p_p}}$}}%
{\newline}
{\thmname{#1}\thmnumber{ #2}\thmnote{\ \,\textmd{(#3)}}}
\theoremstyle{break}
\declaretheorem[name=Observation,numbered=no]{observation*}
\declaretheorem[numberlike=equation]{problem}
\declaretheorem[numberlike=equation]{theorem}
\declaretheorem[name=Theorem,numbered=no]{theorem*}
\declaretheorem[numberlike=equation]{lemma}
\declaretheorem[name=Lemma,numbered=no]{lemma*}
\declaretheorem[numberlike=equation]{corollary}
\declaretheorem[name=Corollary,numbered=no]{corollary*}
\declaretheorem[numberlike=equation]{parameter}
\declaretheorem[name=Parameter,numbered=no]{parameter*}
\declaretheorem[numberlike=equation]{proposition}
\declaretheorem[name=Proposition,numbered=no]{proposition*}
\declaretheorem[name=Claim,numbered=no]{claim*}
\declaretheorem[numberlike=equation]{conjecture}
\declaretheorem[name=Conjecture,numbered=no]{conjecture*}
\declaretheorem[name=Question,numbered=no]{question*}
\declaretheoremstyle[bodyfont=\it,headpunct=.\vphantom{$p_{p_{p_p}}$},postheadspace=\newline,headformat=\NAME\  \NUMBER\,\NOTE]{defstyle} 
\declaretheorem[numberlike=equation,style=defstyle]{definition}
\declaretheorem[unnumbered,name=Definition,style=defstyle]{definition*}
\declaretheorem[unnumbered,name=Example,style=defstyle]{example*}
\declaretheorem[unnumbered,name=Notation=defstyle]{notation*}
\declaretheorem[unnumbered,name=Construction,style=defstyle]{construction*}
\declaretheoremstyle[]{rmkstyle} 
\newcommand{\handout}[5]{
   \renewcommand{\thepage}{#1-\arabic{page}}
   \noindent
   \begin{center}
   \framebox{
      \vbox{
    \hbox to 5.78in { {\bf #1}
     	 \hfill #2 }
       \vspace{4mm}
       \hbox to 5.78in { {\Large \hfill #5  \hfill} }
       \vspace{2mm}
       \hbox to 5.78in { {\it #3 \hfill #4} }
      }
   }
   \end{center}
   \vspace*{4mm}
}
\newcommand{\iprod}[1]{\left\langle{#1}\right\rangle}
\DeclareMathOperator*{\lE}{\mathbb{E}}
\renewcommand{\E}{\lE}
\newcommand{\eps}{\epsilon}
\newcommand{\val}{\mathsf{val}}
\newcommand{\QAOA}{\text{QAOA}}
\newenvironment{proof-sketch}{\noindent{\bf Sketch of Proof:}\hspace*{1em}}{\qed\bigskip}
\newenvironment{proof-idea}{\noindent{\bf Proof Idea}\hspace*{1em}}{\qed\bigskip}
\newenvironment{proof-of-lemma}[1]{\noindent{\bf Proof of Lemma #1}\hspace*{1em}}{\qed\bigskip}
\newenvironment{proof-attempt}{\noindent{\bf Proof Attempt}\hspace*{1em}}{\qed\bigskip}
\newenvironment{remark}{\noindent{\bf Remark}\hspace*{1em}}{\bigskip}
\newcommand{\ourparam}{Let $(k,\eta,\tau,d,n,p)$ be parameters satisfying~\autoref{param}.}
\newcommand{\ourparamwithgamma}{Let $(k,\eta,\tau,d,n,p)$ be parameters satisfying~\autoref{param} and $\gamma>0$ from~\autoref{thm:coup-hypergraph-qaoa-output-overlap}.}
\def\fnum@figure{{\bf Figure \thefigure}}
\def\fnum@table{{\bf Table \thetable}}
\long\def\@mycaption#1[#2]#3{\addcontentsline{\csname
  ext@#1\endcsname}{#1}{\protect\numberline{\csname
  the#1\endcsname}{\ignorespaces #2}}\par
  \begingroup
    \@parboxrestore
    \small
    \@makecaption{\csname fnum@#1\endcsname}{\ignorespaces #3}\par
  \endgroup}
\def\mycaption{\refstepcounter\@captype \@dblarg{\@mycaption\@captype}}
\newcommand{\mathify}[1]{\ifmmode{#1}\else\mbox{$#1$}\fi}
\newcommand{\bigO}O
\newcommand{\remove}[1]{}
\newcommand{\ignore}[1]{}
\def\R{\mathbb{R}}
\def\N{\mathbb{N}}
\def\cH{{\cal H}}
\def\cX{{\cal X}}
\def\bbE{{\mathbb E}}
\newcommand{\Ex}{\mathop{\bbE}}
\newcommand{\kxors}{\ensuremath{\mathsf{Max}\text{-}k\text{-}\mathsf{XOR}}}%
\newcommand{\maxcut}{\ensuremath{\mathsf{Max}\text{-}\mathsf{Cut}}}%
\newcommand{\poisson}{\ensuremath{\mathrm{Poisson}}}%
\author{Chi-Ning Chou\thanks{School of Engineering \& Applied Sciences, Harvard University,     Cambridge, Massachusetts, USA. Supported by NSF awards CCF 1565264 and CNS 1618026.         Email: \url{chiningchou@g.harvard.edu}.}
\and 
Peter J. Love\thanks{Department of Physics \& Astronomy, Tufts University, Medford, Massachusetts, USA. Supported by DARPA ONISQ program award HR001120C0068. Email: \url{peter.love@tufts.edu}.} \and
Juspreet Singh Sandhu\thanks{School of Engineering \& Applied Sciences, Harvard University, Cambridge, Massachusetts, USA. Supported by DARPA ONISQ program award HR001120C0068. Email: \url{jus065@g.harvard.edu}.} \and
Jonathan Shi\thanks{Department of Computer Science, Bocconi University, Milan, Italy. Supported by European Research Council (ERC) award No. 834861. Email: \url{jonathan.shi@unibocconi.it}.}
}
\title{Limitations of Local Quantum Algorithms on Random \kxors~and Beyond}
\begin{document}
\maketitle
\begin{abstract}

We introduce a notion of \emph{generic local algorithm} which strictly generalizes existing frameworks of local algorithms such as \emph{factors of i.i.d.} by capturing local \emph{quantum} algorithms such as the Quantum Approximate Optimization Algorithm (QAOA).

Motivated by a question of Farhi et al.~[arXiv:1910.08187, 2019] we then show limitations of generic local algorithms including QAOA on random instances of constraint satisfaction problems (CSPs).
Specifically, we show that any generic local algorithm whose assignment to a vertex depends only on a local neighborhood with $o(n)$ other vertices (such as the QAOA at depth less than $\epsilon\log(n)$) cannot arbitrarily-well approximate boolean CSPs if the problem satisfies a geometric property from statistical physics called the coupled overlap-gap property (OGP) [Chen et al., Annals of Probability, 47(3), 2019].
We show that the random $\mathsf{MAX}$-$k$-$\mathsf{XOR}$ problem has this property when $k\geq4$ is even by extending the corresponding result for diluted $k$-spin glasses. 

Our concentration lemmas confirm a conjecture of Brandao et al.~[arXiv:1812.04170, 2018] asserting that the landscape independence of QAOA extends to logarithmic depth---in other words, for every fixed choice of QAOA angle parameters, the algorithm at logarithmic depth performs almost equally well on almost all instances.

One of these concentration lemmas is a strengthening of McDiarmid's inequality, applicable when the random variables have a highly biased distribution, and may be of independent interest. 

\end{abstract}
\thispagestyle{empty}

\newpage
\thispagestyle{empty}
\setcounter{tocdepth}{2}
{
    \hypersetup{linkcolor=blue}  
    \tableofcontents
    \thispagestyle{empty}
}

\newpage
\clearpage
\pagenumbering{arabic}

\section{Introduction}\label{sec:intro}
Recent developments~\cite{arute2020hartree,gong2021quantum,ebadi2021quantum} of noisy intermediate-scale quantum (NISQ) devices~\cite{preskill2018quantum} have brought us to the door of near-term quantum computation. As experimentalists can now build programmable quantum simulators up to 256 qubits~\cite{ebadi2021quantum}, this motivates an important theoretical question: what computational advantage can such a NISQ device provide?

One of the constraints of NISQ devices is the inability to create high-fidelity global entanglement. This motivates the study of the power of quantum algorithms that are \emph{local}.~A leading candidate in this regime of quantum algorithms is the \emph{Quantum Approximate Optimization Algorithm (QAOA)}~\cite{farhi2014quantum} at shallow depths. While there have been some recent results~\cite{hastings2019classical, barak2021classical, marwaha2021local} that formally examine the QAOA algorithm at depth $p = 1$ or $2$, very few results exist for super-constant depth QAOA~\cite{farhi2020quantum, farhi2020quantumw}.

Given the imminent quest of demonstrating quantum computational advantage, it is important to clarify
for what optimization problems can near-term quantum algorithms (such as \emph{local} quantum algorithms) reliably be expected to demonstrate computational advantage.


We show that local \emph{quantum} algorithms, a large natural class of NISQ algorithms, are obstructed by a geometric property of the solution space known as the \emph{coupled Overlap-Gap Property}~\cite{chen2019suboptimality}.
We conjecture that this property is satisfied by most CSPs (\autoref{prob:csp-ogps}).
Specific problems known to have this property include the diluted $k$-spin glass Hamiltonian (equivalent to a max-cut problem on random $k$-hypergraphs)~\cite{chen2019suboptimality}, independent set on random graphs~\cite{farhi2020quantum}, planted clique~\cite{gamarnik2019landscape}, and many other problems that so far seem to elude efficient algorithms and be algorithmically hard~\cite{gamarnik2021overlap}.
In this manuscript, we also demonstrate that the random \kxors{} problem has this property (see~\autoref{sec:signed-interpolation}). 


Critical to our approach is a new definition of local algorithms we term \emph{generic local algorithms} (See~\autoref{sec:p-local-algorithms}).
Previous work relating statistical-physics-derived OGPs to local algorithms leveraged the \emph{factors of i.i.d.}~framework for local algorithms, which fails to contain local quantum algorithms, as we demonstrate in \autoref{prop:p-local-vs-factors}.
Our definition of \emph{generic local algorithms} subsumes local quantum and classical algorithms (see~\autoref{prop:p-local-vs-factors} and~\autoref{prop:qaoa-2p-local}) but still satisfies strong concentration properties (see~\autoref{lem:concentration-local-functions} and~\autoref{lem:concentration-random-coupled-hypergraphs}), allowing obstruction techniques for local classical algorithms \cite{chen2019suboptimality} to apply to the quantum case.
Two of our core technical contributions involve showing that the random $\mathsf{MAX}$-$k$-$\mathsf{XOR}$ problem has a coupled Overlap Gap Property (see~\autoref{sec:signed-interpolation}) by extending the techniques of Chen et al~\cite{chen2019suboptimality} and deriving a strengthened version of McDiarmid's inequality for highly-biased random variables using a martingale argument (see~\autoref{lem:stronger-mcdiarmids}).

The rest of the paper is organized as follows: In~\autoref{sec:intro diluted spin glass} we give a brief introduction to the motivating spin glass literature, defining the notion of a diluted $k$-spin glass; in~\autoref{sec:related} we introduce the relevant prior work; in~\autoref{subsec:results-informal} we state our main theorems (informally); in~\autoref{sec:tech overview} we briefly explain the architecture of our proof and compare our techniques with those of Chen et al.~\cite{chen2019suboptimality} and Farhi et al.~\cite{farhi2020quantum}; in~\autoref{sec:prelim} we introduce the necessary mathematical preliminaries and notation, including a rigorous definition of local classical algorithms, the QAOA algorithm and Overlap-Gap Properties; in~\autoref{sec:p-local-algorithms} we introduce the notion of a \emph{generic local algorithm}, how to sample from correlated runs of them, and finally show separation of different families of local algorithms; in~\autoref{sec:obstruction-thm-proof} we state our main theorems formally and give proof sketches; in~\autoref{sec:concentration-results} we state and prove multiple concentration of measure statements about \emph{generic local algorithms}; in~\autoref{sec:proofs-main-thms} we make the proof showing obstructions against~\emph{generic local algorithms} using the same interpolation procedure of Chen et al.~\cite{chen2019suboptimality}; in~\autoref{sec:stronger-mcdiarmids} we state and prove a strengthened version of \emph{McDiarmid's} inequality; in~\autoref{sec:signed-interpolation} we state and prove an OGP for random~\kxors{}; in~\autoref{sec:conjectures} we conclude by summarizing our results and mentioning many natural open problems closely related to and/or motivated by our work.

\subsection{Diluted \texorpdfstring{$k$}{k}-spin glasses, maximum cut of sparse hypergraphs, and \texorpdfstring{\kxors{}}{MAX-k-XOR}}\label{sec:intro diluted spin glass}

Spin glass theory is a central theoretical framework in statistical physics. The Sherrington-Kirkpatrick model (SK model) \cite{sherrington1975solvable} is one of the most well studied mathematical models in the theory and consists of two variables: spins $\{\sigma_i\}_{i\in[n]}$ and interactions $\{J_{i,j}\}_{i,j\in[n]}$. A spin $\sigma_i$ takes values in $\{\pm 1\}$ and the interaction $J_{i,j}$ between two spins $\sigma_i,\sigma_j$ is a real-valued variable that captures whether the physical system prefers the two spins to be the same ($J_{i,j}>0$) or different ($J_{i,j}<0$). The goal is to understand what spin configurations $\sigma\in\{-1,1\}^n$ maximize the following quantity  (a.k.a. Hamiltonian):
\[
H(\sigma) = \sum_{i,j} J_{i,j}\sigma_i\sigma_j \, .
\]
The setting is easily generalized to higher order interactions, i.e., $J_{i_1,\dots,i_k}$ acting on $k$ spins, and this is known as the $k$-spin model. See Panchenko~\cite{panchenko2014introduction} for a comprehensive survey.

There is a natural correspondence between spin glass theory and combinatorial optimization problems. In a combinatorial optimization problem (e.g., \maxcut), a variable corresponds to a spin and a constraint corresponds to an interaction. Through this correspondence, the maximization of the above Hamiltonian $H(\sigma)$ serves as a proxy for maximizing the number of satisfied constraints in the combinatorial optimization problem.

A spin glass model additionally specifies a particular distribution on the interactions $\{J_{i,j}\}$ for all $i,j\in[n]$. The quantity of interest is the asymptotic maximum value
\[
  H^* := \lim_{n \to \infty}\frac{1}{n}\max_{\sigma} H(\sigma)\, ,
\]
(a.k.a.~the ground state energy density). 
Also of interest are spin configurations $\sigma$ with $H(\sigma) \approx H^*$.
There are many well-studied spin glass models in physics and various mathematical insights about these have been discovered over the years \cite{concetti2018full, dembo2017extremal, sen2018optimization, panchenko2004bounds}. For example, for the SK model \cite{sherrington1975solvable} Parisi~\cite{parisi1980sequence} proposed the infamous \emph{Parisi Variational Principle} to capture the exact value of $H^*$. This was later rigorously proved by Talagrand~\cite{talagrand2006parisi} and again by Panchenko~\cite{panchenko2014parisi} in greater generality. 
These successes give hope to design local algorithms that simulate the physical system and output a final configuration as an approximation to the corresponding combinatorial optimization problem.

While traditional spin glass models consider the underlying non-trivial interactions as either lying on a certain physically-realistic graph (e.g., the non-zero $J_{i,j}$ form a 2D-grid) or being a \emph{mean field approximation} (for example, where every $J_{i,j}$ is non-trivial), the applications in combinatorial optimization often require the underlying constraint graphs to be sparse and arbitrary.
We use two methods of bridging the gap between the two settings:
\begin{itemize}
    \item By studying the \emph{diluted $k$-spin glass model} where one first samples a sparse hypergraph and then assigns non-trivial interactions on top of its hyperedges. Intuitively, approximating the $H^*$ of the diluted $k$-spin glass corresponds to approximating the maximum cut over random sparse hypergraphs. This correspondence is made more precise in~\autoref{sec:prelim diluted k spin}.
    \item Using the techniques of \emph{Gaussian interpolation} \cite{guerra2002thermodynamic} and \emph{Poisson interpolation} \cite{franz2003replica,chen2019suboptimality} from statistical physics to relate the behavior of random dense spin glass models to random sparse CSPs.
    More specifically, we relate the random \kxors{} problem to mean-field $p$-spin glasses (\autoref{sec:signed-interpolation}).
\end{itemize}

\begin{table}[H]
\centering
\begin{tabular}{|c|c|}
\hline
\textbf{Spin glass models}                   & \textbf{Combinatorial optimization problems}              \\ \hline
Spins $\sigma\in\{-1,1\}^n$            & An assignment to boolean variables                 \\ \hline
Interactions $\{J_{i_1,\dots,i_k}\}_{i_1,\dots,i_k\in[n]}$ & Constraints (i.e., hyperedges)                     \\ \hline
Hamiltonian $H(\sigma)$                & Value of an assignment (i.e., $\val_\Psi(\sigma)$) \\ \hline
Ground state energy $H^*$              & Optimal value (i.e., $\val_\Psi$)                  \\ \hline
Mean field model (e.g., SK model)      & The underlying hypergraph being complete                \\ \hline
Diluted spin glass model               & The underlying hypergraph being sparse                  \\ \hline
\end{tabular}
\caption{A dictionary between spin glass models and combinatorial optimization problems.}
\label{tab:dictionary}
\end{table}

\subsection{Related work}\label{sec:related}

\paragraph{Constraint-satisfaction problems \& hardness for classical algorithms}
CSPs (described formally in \autoref{def:k-d-csp}) are a natural class of combinatorial optimization problems that have been studied extensively in theoretical computer science~\cite{brailsford1999constraint, kumar1992algorithms}. Many $\mathsf{NP\text{-}Complete}$ problems such as $\mathsf{k\text{-}SAT}$, $\mathsf{k\text{-}NAE\text{-}SAT}$, $\mathsf{MAX\text{-}CUT}$ and $\mathsf{k\text{-}XOR}$, can be framed as CSPs. Consequently, unless $\mathcal{P} = \mathcal{NP}$, finding optimal solutions to these problems is infeasible. A natural question then is to understand how well can approximate answers to instances of these problems be constructed by efficient algorithms. Under the now widely believed \emph{Unique Games Conjecture}~\cite{khot2005unique}, upper bounds on the approximability of CSPs are known~\cite{khot2007optimal, raghavendra2008optimal}. These bounds, however, are only worst-case and do not necessarily explicitly demonstrate a family of instances of a CSP that are hard to approximate. Additionally, they remain conditional on a positive resolution to the Unique Games Conjecture, which is still a difficult open problem in the field. In the average-case regime, the goal is to ask how well a \emph{typical} instance of a CSP can be approximated, where the instance is chosen from a ``natural" distribution over the set of instances. Perhaps surprisingly, great insight has been drawn about the algorithmic hardness (or lack thereof) about random instances of many CSPs based on work originating in the Statistical Physics community, particularly in Spin Glass Theory~\cite{mezard2001bethe, franz2003replica, parisi2014diluted}. This was so because the problem of finding spin configurations of particles in many spin glass models that put a system in the ground state could naturally be interpreted as a CSP. Various iterative algorithms were proposed to study the problem of explicitly finding near-ground states of \emph{typical} instances of various spin glass models~\cite{yedidia2003understanding, braunstein2005survey}. It was observed that these algorithms either consistently got better with the number of iterations, or hit a threshold which they could not exceed. To understand this, the work of Achlioptas et al~\cite{achlioptas2006solution} studied the solution geometry of the $\mathsf{k\text{-}SAT}$ problem and found that most good solutions were in well separated clusters. Additionally, most variables in a good solution could only take a single value (i.e., they were ``frozen"). This observation was stated as an intuitive reason for the failure of local algorithms on random instances of $\mathsf{k\text{-}SAT}$. Gamarnik et al~\cite{gamarnik2014limits} made this more formal and precise by showing that \emph{no} classical local algorithm (described formally as \emph{factors of i.i.d.}, see \autoref{sec:factors-of-iid}) could approximate the $\mathsf{MAX\text{-}IND\text{-}SET}$ problem arbitrarily well on sparse random graphs. Critical to their argument was the fact that \emph{all} (not \emph{most}) nearly-optimal solutions to the problem satisfied the \emph{Overlap Gap Property} - they were in well separated clusters. In various works that followed up, many problems have been shown to have near-optimal solutions conform to this solution geometry and algorithmic hardness for various families of classical algorithms has been established~\cite{chen2019suboptimality, gamarnik2020low, gamarnik2021overlap}.

\paragraph{Results about QAOA.}
In their seminal work, Farhi et al.~\cite{farhi2014quantum} introduced $\QAOA$ as a possible way to approximately solve certain hard combinatorial optimization problems. To illustrate the capabilities of $\QAOA$, its performance at $p=1$ was shown to achieve an approximation ratio of at least $.6924$ for the~\maxcut{}~ problem on triangle-free 3-regular graphs~\cite{farhi2014quantum}. In a follow up work, Wurtz et al.~\cite{wurtz2020bounds} improved this to $.7559$ for $3$-regular graphs with $p = 2$ and made the empirical observation that the bound was tight for graphs with no cycles of length $< 7$. 
Shortly after $QAOA_p$ was proposed, however, a local classical algorithm was designed that outperformed it on these graphs at depth 1~\cite{hastings2019classical}. Consequently, because of a flurry of follow up results, QAOA has been shown to be outmatched by local classical algorithms up to depth 2~\cite{marwaha2021local,barak2021classical} for the $\mathsf{MAX\text{-}CUT}$ problem on $d$-regular graphs with large girth. In fact, under the widely-believed conjecture in the Spin-Glass Theory community that the SK model does not satisfy the \emph{Overlap Gap Property}~\cite{auffinger2020sk}, an AMP algorithm was recently proposed that outputs arbitrarily good cuts for large (but constant) degree random regular graphs~\cite{alaoui2021local}. However, this result~\cite{alaoui2021local} does not \emph{completely} rule out a possibility for quantum advantage (see~\autoref{subsec:max-cut-d-reg}). To analyze the performance of $QAOA_p$ on a problem that possesses an OGP, Farhi et al.~\cite{farhi2020quantum} established that $QAOA_p$ with depth $p \leq \epsilon\log(n)$ could not output independent sets of size better than $.854$ times the optimal for sparse random graphs. This work suggested that the OGP may broadly prove to be an obstacle for $QAOA_p$ while it is local as much as it does for various classical algorithms. However, $\mathsf{MAX\text{-}IND\text{-}SET}$ is not a (maximum) CSP and, additionally, the prior work~\cite{farhi2020quantum} does not give an analysis that generalizes to CSPs. Our work establishes this generalization and also immediately positively resolves the ``landscape independence" conjecture of $QAOA_{\epsilon\log(n)}$ proposed by Brandao et al.~\cite{brandao2018fixed}. This immediately suggests that quantum advantage is unlikely to be found up to this depth for CSPs with an OGP, and we conjecture that \emph{almost all} CSPs will have an OGP (see \autoref{subsec:which-csps-have-ogp}).

It was shown by Bravyi et al.~\cite{bravyi2020obstacles} that $\QAOA$ would not output cuts better than $\frac{5}{6} + O(\frac{1}{\sqrt{d}})$ times the optimal value for some infinite family of $d$-regular graphs (which happen to be bipartite). This was achieved as a corollary to their proof for a $\log(n)$-depth version of the NLTS conjecture. Farhi et al.~\cite{farhi2020quantumw} improved on this via an indistinguishability argument which utilized the fact that local neighborhoods of random $d$-regular graphs are trees with high probability to then conclude that there are $d$-regular graphs (specifically random bipartite ones) on which $\QAOA$ wouldn't do better than $\frac{1}{2} + O(\frac{1}{\sqrt{d}})$ for sufficiently large $n$.

\paragraph{Local algorithms for spin glasses.}
The performance and limitations of various algorithms, such as factors of i.i.d.\ and message passing algorithms, have been established on different models of spin glasses \cite{chen2019suboptimality, gamarnik2021overlap, montanari2021optimization, elalaoui2021optimization}. In particular, the literature often provides two kinds of results: An arbitrary approximation to the ground state in the \emph{absence} of an Overlap Gap Property via an appropriate algorithm \cite{elalaoui2021optimization, montanari2021optimization} or a barrier to arbitrary arbitrary approximation for some family of algorithms in the \emph{presence} of an Overlap Gap Property \cite{chen2019suboptimality, gamarnik2021overlap}. The first work, to the authors' knowledge, that analyzed the performance of $\QAOA$ on a spin glass model was by Farhi et al.~\cite{farhi2019quantum}. In this work, \cite{farhi2019quantum} provide an analytic expression for the expected value that $\QAOA_p$ outputs on typical instances of the SK model, which can be evaluated by a "looping procedure" implemented on a circuit with $O(16^p)$ gates.~Numerical results are provided demonstrating evidence that at $p = 11$ this beats the best known SDP-based solver. In this paper, we show that the \emph{Overlap Gap Property} of diluted $k$-spin glasses poses an obstacle for fixed angle $\QAOA_p$ when $p < \epsilon\log(n)$. The generalization to the $k$-spin mean field model is substantially more challenging to analyze, as in that setting the $\QAOA_p$ algorithm is not local even at depth $p = 1$. However, a \emph{coupled} Overlap Gap Property is known to exist for the $k$-spin mean field model \cite{chen2019suboptimality}. 

\subsection{Our results}\label{subsec:results-informal}
In this work, we show that at shallow-depth the $\QAOA$ algorithm cannot output a spin configuration that has Hamiltonian $(1-\epsilon_0)$-close to the $H^*$ in a random diluted $k$-spin glass.

\begin{theorem}[Obstruction to $\QAOA$ over diluted $k$-spin glass, informal]\label{thm:main-informal-one}
For every even $k\geq4$, there exists $d_0\in\N$ and the following holds: There exists $\epsilon_0 > 0$ such that if $\QAOA_p$ outputs a solution $\sigma\in\{-1,1\}^n$ with $H(\sigma)$ being $(1-\epsilon_0)$-close to the $H^*$ of a random diluted $k$-spin glass of average degree $d\geq d_0$ with probability at least $0.99$, then $p=\Omega(\log n)$.
\end{theorem}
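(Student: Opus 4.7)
The plan is to combine three ingredients already advertised in the introduction: first, that $\QAOA_p$ with fixed angles can be realized as a generic $2p$-local algorithm (\autoref{prop:qaoa-2p-local}), so the marginal distribution of $\sigma_i$ depends only on the radius-$2p$ hypergraph neighborhood of vertex $i$; second, that the random diluted $k$-spin glass for even $k \geq 4$ satisfies a coupled Overlap-Gap Property (OGP) in the sense of Chen et al.~\cite{chen2019suboptimality}, ruling out pairs of near-ground states at certain intermediate overlaps in a coupled instance pair; and third, that generic local algorithms are Lipschitz under gradual resampling of the instance, quantified by the concentration lemmas \autoref{lem:concentration-local-functions} and \autoref{lem:concentration-random-coupled-hypergraphs}. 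The overall strategy is the classical OGP-obstruction argument via interpolation, adapted from the independent-set case of Farhi et al.~\cite{farhi2020quantum} and the spin-glass case of Chen et al.~\cite{chen2019suboptimality}.

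Concretely, I would set up a family of coupled instance pairs $\{(\Psi_t^{(1)}, \Psi_t^{(2)})\}_{t \in [0,1]}$ that interpolates between identity at $t=0$ (where $\Psi_0^{(1)} = \Psi_0^{(2)}$) and independence at $t=1$ (where $\Psi_1^{(1)} \perp \Psi_1^{(2)}$), for instance by resampling each potential hyperedge with probability $t$ to obtain the second copy. For a fixed QAOA angle vector, let $\sigma_t, \tau_t$ denote the outputs of the algorithm on the two components of the coupled instance at interpolation parameter $t$, run on coupled randomness. The normalized overlap $q(t) := \tfrac{1}{n}\langle \sigma_t, \tau_t\rangle$ equals $1$ at $t=0$ and, assuming the instances are effectively independent at $t=1$, concentrates near $0$ at the right endpoint. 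The coupled OGP asserts that if both $\sigma_t$ and $\tau_t$ are $(1-\epsilon_0)$-optimal then $q(t)$ must avoid a forbidden interval $[\nu_1, \nu_2] \subset (0,1)$, so the only way for $q$ to move from $1$ to $\approx 0$ is a discontinuous jump.

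The heart of the argument is showing that no such jump exists: along a fine discretization $0 = t_0 < t_1 < \cdots < t_N = 1$, I would bound $|q(t_{j+1}) - q(t_j)|$ by noting that altering a single potential hyperedge only changes those coordinates $\sigma_i$ (and $\tau_i$) whose radius-$2p$ neighborhood touches the perturbed edge. By $2p$-locality and the expected degree bound $d$, the number of affected coordinates is at most $(kd)^{O(p)}$ in expectation, so for $p < \epsilon \log n$ with small enough $\epsilon$ the overlap can shift by $o(1)$ between consecutive $t_j$ with high probability. Combining this local Lipschitz estimate with the concentration lemma for local functionals of coupled hypergraphs and a union bound over $j$, I would conclude that $q(\cdot)$ is uniformly close to a continuous curve, and therefore crosses the forbidden interval with high probability on the good event that both outputs remain $(1-\epsilon_0)$-optimal throughout $[0,1]$. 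Since this good event has probability $\geq 0.99^2 - o(1)$ by hypothesis and a union bound, we reach a contradiction.

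The step I expect to be the main obstacle is obtaining concentration that is tight enough to survive the union bound over the discretization of $[0,1]$. The diluted setting is delicate because each potential hyperedge indicator is a highly biased Bernoulli variable (present with probability $O(d / n^{k-1})$), so the worst-case bounded-differences coefficient in McDiarmid's inequality is of constant order per edge even though the typical change is tiny; a naive application would give bounds far too weak to union-bound over $\Theta(n^k)$ potential edge perturbations or even over the coarser discretization. This is exactly where I would invoke the strengthened McDiarmid inequality \autoref{lem:stronger-mcdiarmids}, which replaces the squared bounded-differences constants by the biased variance contributions, together with \autoref{lem:concentration-random-coupled-hypergraphs}, to obtain subgaussian-type concentration with exponent scaling as $nd \cdot \mathrm{poly}(p)$. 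Carefully choosing the discretization granularity and verifying that $p = o(\log n)$ still leaves room for the union bound is where the logarithmic-depth threshold ultimately emerges.
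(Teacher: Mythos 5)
Your overall architecture mirrors the paper's almost exactly — encode the diluted $k$-spin glass as a random sparse hypergraph, use \autoref{prop:qaoa-2p-local} to view $\QAOA_p$ as a generic local algorithm, set up a coupled interpolation, and argue that the coupled OGP of \autoref{thm:ogp-spin-glasses-coupled} is violated because the overlap cannot jump past the forbidden window. The main stylistic divergence is that you discretize $[0,1]$ and try to establish a local Lipschitz bound on the \emph{realized} overlap $q(t)$, whereas the paper avoids the discretization entirely: it proves that the \emph{expected} overlap $R(t) := \E_{G_1(t),G_2(t),\sigma_1,\sigma_2}[\langle\sigma_1,\sigma_2\rangle/n]$ is exactly a continuous function of $t$ (\autoref{lem:cont-overlap}), applies the intermediate value theorem once to locate a \emph{deterministic} bad point $t^\star$ with $R(t^\star)$ inside the forbidden interval, and then invokes the overlap concentration (\autoref{thm:coup-hypergraph-qaoa-output-overlap}) only at that single $t^\star$. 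This is cleaner and dodges the union bound over discretization points that you flag as the delicate step.

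The genuine gap is in your probability bookkeeping at the end. You define the good event as ``both outputs remain $(1-\epsilon_0)$-optimal throughout $[0,1]$'' and claim it has probability $\geq 0.99^2 - o(1)$ by ``hypothesis and a union bound.'' This is not right: the hypothesis gives near-optimality with probability $\geq 0.99$ at each discretization point individually (the marginal law of each coupled run matches a fresh run), so a union bound over $N$ discretization points only yields $\Pr[\text{good event}] \geq 1 - 0.02N$, which is vacuous for $N \geq 50$ — and your own Lipschitz analysis requires $N$ polynomially large in $n$. The escape route the paper takes is to first upgrade the crude $0.99$-success hypothesis into an exponentially-good guarantee via concentration of the objective value itself (\autoref{cor:conc-p-local-energy} and \autoref{thm:qaoa-conc-log-depth}): since $\val_\Psi(\sigma)$ concentrates within $\delta n$ of its (landscape-independent) mean with probability $1 - e^{-O(n^\gamma)}$, a success probability of $\geq 0.99$ forces the mean to be near-optimal, which then forces the success probability to be $\geq 1 - e^{-O(n^\gamma)}$. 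Only after this upgrade does a polynomial-in-$n$ union bound become affordable — or, more simply, only after this upgrade can one apply the OGP at the single intermediate $t^\star$ with high probability. You cite the concentration lemmas but deploy them solely for the overlap, not for the objective value; without the objective-value concentration step, the contradiction with the OGP does not close.
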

The formal version of the theorem is stated in~\autoref{thm:obstruction-1}. This result can be interpreted as a weak obstruction to logarithmic-depth $\QAOA$ in approximating a random diluted $k$-spin glass, which is equivalent to the random \kxors{} problem when all clauses check for odd parity of non-negated variables.
We also demonstrate the same result for the general case of random \kxors{}.

\begin{theorem}[Obstruction to $\QAOA$ on random \kxors, informal]\label{thm:main-informal-kxors}
For every even $k\geq4$, there exists $d_0\in\N$ and the following holds: There exists $\epsilon_0 > 0$ such that if $\QAOA_p$ outputs a solution $\sigma\in\{-1,1\}^n$ with $H(\sigma)$ being $(1-\epsilon_0)$-close to the $H^*$ of a random \kxors{} instance of average degree $d\geq d_0$ with probability at least $0.99$, then $p=\Omega(\log n)$.
\end{theorem}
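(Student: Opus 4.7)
The plan is to follow the same blueprint used for \autoref{thm:main-informal-one}, substituting the coupled OGP for diluted $k$-spin glasses with the coupled OGP for random \kxors{} established via signed interpolation. At a high level, the argument is a contradiction of the following shape: if $\QAOA_p$ returns a near-optimal assignment with probability at least $0.99$ and $p = o(\log n)$, then by local-algorithm concentration the overlap between the outputs on two correlated instances must be a nearly-continuous function of the coupling parameter, which cannot coexist with the forbidden overlap intervals guaranteed by the coupled OGP.

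First I would invoke \autoref{prop:qaoa-2p-local}, so that $\QAOA_p$ is a $2p$-local generic local algorithm and, for $p < \epsilon \log n$, its assignment at any vertex depends only on an $o(n)$-sized neighborhood of clauses. Next I would transport the interpolation-style coupling construction of \cite{chen2019suboptimality} into the \kxors{} setting: couple two instances by a parameter $t \in [0,1]$ so that at $t = 0$ the two instances are independent and at $t = 1$ they are identical, while every intermediate coupling produces two instances whose marginal law is exactly the target random \kxors{} ensemble. Applying the concentration lemmas (\autoref{lem:concentration-local-functions} and \autoref{lem:concentration-random-coupled-hypergraphs}) would then show that the overlap between the two QAOA outputs is tightly concentrated around its mean and varies slowly in $t$. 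Combining this continuous overlap trajectory with the coupled OGP for random \kxors{} (established in \autoref{sec:signed-interpolation}) yields the desired contradiction, forcing $p = \Omega(\log n)$.

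The main obstacle is establishing the coupled OGP for random \kxors{} itself. Unlike the diluted $k$-spin case, where each hyperedge carries a fixed symmetric interaction, \kxors{} instances carry an independent random sign per clause; the Chen et al.\ interpolation argument must be extended to absorb this additional randomness. My plan is to introduce a \emph{signed Poisson interpolation} between the mean-field $k$-spin glass (whose coupled OGP is already known) and the sparse signed hypergraph ensemble, tracking the geometry of near-ground-state configurations across the interpolation and showing that forbidden overlap intervals are preserved at the sparse endpoint. A secondary but essential technical step is to replace standard McDiarmid concentration with the strengthened version of \autoref{lem:stronger-mcdiarmids}: when the coupling parameter is small, almost every clause is shared between the two instances, so the individual coordinates exhibit highly biased changes, and a variance-sensitive martingale inequality is required to obtain concentration sharp enough to close the gap argument. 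Once these ingredients are in place, the obstruction for \kxors{} follows from essentially the same interpolation-plus-concentration calculation as the proof of \autoref{thm:main-informal-one}, with $d_0$ and $\epsilon_0$ chosen so that the OGP's forbidden overlap window has width exceeding the concentration error.
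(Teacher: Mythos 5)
Your proposal follows the paper's route closely: QAOA is $p$-local (\autoref{prop:qaoa-2p-local}), the coupled-hypergraph interpolation plus concentration (\autoref{lem:concentration-local-functions}, \autoref{lem:concentration-random-coupled-hypergraphs}, \autoref{lem:stronger-mcdiarmids}) gives a continuous concentrated overlap trajectory, and a coupled OGP for signed random \kxors{} obtained by extending the Guerra--Toninelli/Poisson interpolation to absorb random Rademacher signs yields the contradiction. This is exactly how the paper derives \autoref{thm:obstruction-kxors} as a corollary of \autoref{thm:obstruction-2}, \autoref{thm:ogp-kxors-coupled}, and \autoref{lem:kxors-overlap}.

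There is one genuine gap in your final step (``the obstruction follows from essentially the same interpolation-plus-concentration calculation as the proof of \autoref{thm:main-informal-one}''). In the diluted unsigned case, the endpoint $t = 0$ is handled by \autoref{lem:overlap-t-0}, which rests on \cite[Lemma 3.3]{chen2019suboptimality} (\autoref{lem:overlap-t-0-cgpr}): near-optimal configurations of the \emph{unsigned} diluted $k$-spin glass have small magnetization, so two independent near-optimal outputs have overlap below $a$. That magnetization bound does \emph{not} carry over once each clause carries an independent random sign, because the random negations destroy the preferred magnetization direction (indeed, the signed Hamiltonian is invariant in distribution under any global sign flip of a subset of coordinates, so near-optimal solutions need not concentrate near low Hamming weight). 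The paper therefore needs the separate endpoint result \autoref{lem:kxors-overlap}, derived via \autoref{lem:pspin-mag-convex} and the signed interpolation (\autoref{lem:interpolation-kxor}), showing directly that pairs with overlap at least $a$ from independent signed instances are suboptimal by a fixed margin. You have the right tool (your signed interpolation) to prove this, but your plan as written would try to reuse the unsigned magnetization argument and fail there. Make the $t = 0$ endpoint a separately-stated lemma, proved by instantiating your signed interpolation at $t = 0$ with the overlap set $S = [a,1]$, rather than inheriting the Hamming-weight bound from the unsigned case.
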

This is stated formally in \autoref{thm:obstruction-kxors}, and answers a question of~\cite{farhi2019quantum}, where the authors ask if $QAOA_p$ would perform well on $k$-spin generalizations of the SK model, citing \kxors{} in particular~\cite{chen2019suboptimality}.

In fact, we can prove results stronger in three ways: (i) the same approximation resistance holds for a more general family of algorithms defined as generic local algorithms (\autoref{def:local}), (ii) the same approximation resistance holds for a broader family of optimization problems (\autoref{def:k-d-csp}) provided they exhibit a certain solution geometry (\autoref{def:coupled-ogp-informal}), and (iii) we show that random \kxors{} with negations is one of the optimization problems with this geometry (\autoref{thm:ogp-kxors-coupled}), whereas previous work only handled it without negations so that all clauses needed to be of odd parity. We begin by informally introducing generic local algorithms, random constraint satisfaction problems (CSPs), and the coupled overlap gap property (OGP).

\paragraph{Generic local algorithms.}
As traditional notions of local algorithms do not capture $\QAOA$\footnote{This is made formal in \autoref{prop:p-local-vs-factors}}, we generalize the definition to a broader family and call it \emph{generic local algorithms}. A randomized algorithm $A$ on a hypergraph $G=(V,E)$ can be viewed as outputting labels $A(G)\in S^V$ from a label set $S$ (e.g., $S=\{-1,1\}$). As both $A$ and $G$ are random, $A(G)$ is a set of random variables and the \emph{independence structure} of $A(G)$ captures how local $A$ is. Next, for a hypergraph $G$ and a vertex set $L\subset G$, the $p$-neighborhood of $L$ is the induced subgraph of $G$ by the vertices that can reach $L$ in $p$ steps. 

\begin{definition}[Generic local algorithms, informal]\label{def:local informal}
Let $p\in\N$ and let $S$ be a finite label set.
We say an algorithm $A$ (which takes a hypergraph $G$ as an input) is generic $p$-local if the following hold:
\begin{itemize}
    \item (Local distribution determination.) For every set of vertices $L \subset V$, the joint marginal distribution of the labels $(A(G)_v)_{v\in L}$ depends only on the union of the $p$-neighborhoods of $v 
    \in L$ in $G$.
    \item (Local independence.) $A(G)_{v}$ is statistically independent of the joint distribution of $\{A(G)_{v'}\}$ for every $v'$ that is farther than a distance of $2p$ from $v$.
\end{itemize}
\end{definition}

The main difference between our notion of generic local algorithms and the ones used by previous works is that~\autoref{def:local informal} captures the \textit{evolution of correlations}, without assuming any concrete model of randomness. This is crucial in the interpolation step of the proof (see~\autoref{sec:tech overview}). See~\autoref{def:local} for the formal definition.

\begin{definition}[Random $(k, d)$-$\mathsf{CSP}(f)$]\label{def:k-d-csp}
    A (signed) random $(k, d)$-$\mathsf{CSP}(f)$ instance with a local constraint function $f: \{-1, 1\}^k \to \{0, 1\}$ is constructed as follows:
    \begin{enumerate}
        \item Choose $r \sim \mathrm{Poisson}(dn/k)$.
        \item Sample $r$ clauses of size $k$ by choosing each clause $C_i$ independently as a collection of $k$ variables uniformly at random from $\{x_1,\dots,x_n\}^k$, and, in the case of a signed random CSP, random signs $s_{i,1}$, $\dots$, $s_{i,k} \in \{\pm 1\}$.
    \end{enumerate}
    To each clause $C_i$ there are $k$ variables associated: $\{x_{i_1},\dots,x_{i_k}\}$. A clause is satisfied if there is some assignment to every $x_{i_j} \in \{-1, 1\}$, such that, $f(x_{i_1},\dots,x_{x_k}) = 1$ (or $f(s_{i,1}x_{i_1},\dots,s_{i,k}x_{x_k}) = 1$ if signed). The value of an assignment $\sigma\in\{-1,1\}^n$ is defined as $\mathsf{val}_{\Psi}(\sigma):=\#\{C_i: f(\sigma_{i_1},\dots,\sigma_{i_k})=1\}$ (or $\#\{C_i: f(s_{i,1}\sigma_{i_1},\dots,s_{i,k}\sigma_{i_k})=1\}$ if signed). The optimal value of $\Psi$ is defined as $\mathsf{val}(\Psi):=\max_{\sigma}\mathsf{val}_{\Psi}(\sigma)$.
\end{definition}
When unspecified, we will be referring to unsigned CSPs.

We say that a random (un)signed $(k, d)$-$\mathsf{CSP}(f)$ satisfies a \textit{coupled overlap-gap property (OGP)} if, given two instances  $\Psi, \Psi'$ constructed so that they share a random $t$-fraction of clauses with the remaining $(1-t)$-fraction chosen independently, any two ``good" solutions $\sigma$ of $\Psi$ and $\sigma'$ of $\Psi'$ are either very similar of dissimilar.

\begin{definition}[Coupled OGP, informal]\label{def:coupled-ogp-informal}
A signed or unsigned $(k, d)$-$\mathsf{CSP}(f)$ satisfies a coupled OGP if there exists $\epsilon_0>0$ and $0<a<b<1$ such that the following hold for every $t \in [0, 1]$: Given two $(k, d)$-$\mathsf{CSP}(f)$ instances $\Psi, \Psi'$ constructed so that they share a random $t$-fraction of their clauses and have the remaining $(1-t)$-fraction of clauses chosen independently and uniformly at random, then for every $0<\epsilon<\epsilon_0$, the overlap between any $(1-\epsilon)$-optimal solution $\sigma$ of $\Psi$ and $\sigma'$ of $\Psi'$ satisfies
\[
\frac{1}{n}\langle \sigma, \sigma' \rangle \notin [a, b]\,
\]
with high probability.
\end{definition}

A formal definition of the property above is provided in \autoref{thm:ogp-spin-glasses-coupled}, and the formal definition of the interpolation used to create two ``$t$-coupled" instances is given in \autoref{def:coupled-interpolation}. Note that an instance of a  $(k, d)\text{-}\mathsf{CSP}(f)$ can be thought of as a random sparse $k$-hypergraph, and this is made more precise in the proof of \autoref{thm:qaoa-conc-log-depth}.

Now, we are able to state the most general form of our main result.
\begin{theorem}[Obstruction to generic local algorithms given coupled OGP, informal]\label{thm:informal-obstruct-everything}
For every $k\geq2$, and a constraint function $f:\{-1,1\}^k\to\{0,1\}$, suppose there exists $d_0$ where a random signed or unsigned $(k, d)$-$\mathsf{CSP}(f)$ satisfies the coupled OGP (i.e., \autoref{def:coupled-ogp-informal}) for every $d\geq d_0$ and $p(n)$ is such that it satisfies the requirements of~\autoref{lem:concentration-local-functions} and~\autoref{lem:concentration-random-coupled-hypergraphs}, then the following holds: There exists $\epsilon_0>0$ such that a generic $p(n)$-local algorithm cannot output a solution that is better than $(1-\epsilon_0)$-optimal with high probability.
\end{theorem}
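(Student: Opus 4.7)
The strategy adapts the overlap-gap-property obstruction framework of Chen et al.~\cite{chen2019suboptimality} to our more permissive notion of generic $p$-local algorithms, with the concentration lemmas of \autoref{sec:concentration-results} doing the work of boosting the hypothesis to exponentially small failure probability. Suppose for contradiction that a generic $p(n)$-local algorithm $A$ outputs a $(1-\epsilon_0)$-optimal solution with probability at least $0.99$, where $\epsilon_0$ is chosen below the gap parameter witnessed by the coupled OGP of \autoref{def:coupled-ogp-informal}.

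\emph{Step 1 (Concentration boost).} Apply \autoref{lem:concentration-local-functions} (with the strengthened McDiarmid inequality of \autoref{lem:stronger-mcdiarmids} handling the highly-biased clause variables) to the objective value $\val_{\Psi}(A(\Psi))$. Because $\val$ is a sum of local constraint indicators and $A$ is $p(n)$-local, swapping one clause or one piece of internal randomness changes $\val_{\Psi}(A(\Psi))$ by at most $O(|B_{p(n)}|) = o(n)$. Combining exponential concentration around the mean with the $0.99$-probability hypothesis forces the mean to be at least $(1-\epsilon_0 - o(1))\val(\Psi)$ and, in turn, upgrades the hypothesis to $\Pr[A(\Psi)\text{ is }(1 - 2\epsilon_0)\text{-optimal}] \geq 1 - e^{-\Omega(n)}$.

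\emph{Step 2 (Discretized coupled interpolation).} Fix a target instance $\Psi$ and an independent copy $\Psi'$. Following \autoref{def:coupled-interpolation}, define $\Psi(t)$ to share a $t$-fraction of its clauses with $\Psi$ and draw the remaining $(1-t)$-fraction from $\Psi'$, discretized at $t_k = k/T$ for $k = 0, \ldots, T$ with $T = \Theta(n)$ so that $\Psi(t_{k+1})$ and $\Psi(t_k)$ differ in a single clause. Use the generic coupling provided by \autoref{def:local informal} (developed rigorously in \autoref{sec:p-local-algorithms}) to produce labels $\sigma_k := A(\Psi(t_k))$ simultaneously with shared internal randomness; by construction $\sigma_T = A(\Psi)$ and $\sigma_0$ is independent of $\sigma_T$.

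\emph{Step 3 (Lipschitz overlap).} By generic $p(n)$-locality, $\sigma_k$ and $\sigma_{k+1}$ can differ only at vertices whose $p(n)$-neighborhood in $\Psi(t_k) \cup \Psi(t_{k+1})$ contains the swapped clause. On the high-probability event (from \autoref{lem:concentration-random-coupled-hypergraphs}) that every vertex has a $p(n)$-neighborhood of size $o(n)$, the Hamming distance $d_H(\sigma_k, \sigma_{k+1})$ is $o(n)$, so the normalized overlap $Q_k := \tfrac{1}{n}\langle \sigma_T, \sigma_k\rangle$ satisfies $|Q_{k+1} - Q_k| = o(1)$. Moreover $Q_T = 1$, and $Q_0$ concentrates near the marginal overlap $\bar q$ between independent outputs of $A$, which (using the coupled OGP applied at $t = 0$ against two independent good solutions) is forced to lie in $[0, a)$.

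\emph{Step 4 (Intermediate value and contradiction).} By the discrete intermediate value theorem, some $k^\ast$ has $Q_{k^\ast} \in [a, b]$. By a union bound over the $T+1 = \poly(n)$ values of $k$ applied to the exponentially-small failure bound from Step 1, with probability $1 - \poly(n) \cdot e^{-\Omega(n)} = 1 - o(1)$ \emph{every} $\sigma_k$ is a $(1 - 2\epsilon_0)$-optimal solution of $\Psi(t_k)$. In particular $\sigma_T$ and $\sigma_{k^\ast}$ are simultaneously good solutions of the $t_{k^\ast}$-coupled pair $(\Psi, \Psi(t_{k^\ast}))$, yet their normalized overlap lies in $[a,b]$, directly contradicting the coupled OGP.

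The main obstacle I expect is Step 1: converting the rather weak $0.99$-probability hypothesis into the $1 - e^{-\Omega(n)}$ tail needed to survive a polynomial union bound in Step 4. Standard McDiarmid fails because each clause is satisfied on most of $\{-1,1\}^k$, so the typical variance of per-coordinate differences is much smaller than the worst-case Lipschitz bound; this is exactly the regime addressed by the strengthened McDiarmid inequality of \autoref{lem:stronger-mcdiarmids}. Carrying this through cleanly under the generic $p(n)$-local coupling---without implicitly assuming any factor-of-i.i.d.~representation of $A$---is the delicate technical heart of the argument.
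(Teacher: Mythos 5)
Your high-level intuition is correct — use the coupled OGP, interpolate between independent and identical instances, and let concentration plus an intermediate-value argument produce a contradiction — but the specific mechanism you propose in Steps 2–3 has a genuine gap that the paper's proof is designed precisely to avoid.

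You propose a \emph{discrete} chain of $T+1 = \Theta(n)$ instances $\Psi(t_0),\dots,\Psi(t_T)$ with adjacent pairs differing by a single clause, and you assert that you can ``produce labels $\sigma_k := A(\Psi(t_k))$ simultaneously with shared internal randomness'' so that (i) adjacent outputs $\sigma_k,\sigma_{k+1}$ differ on only $o(n)$ coordinates and (ii) the endpoints $\sigma_0$ and $\sigma_T$ are independent. These two properties are in tension, and for a generic $p$-local algorithm (which by \autoref{prop:p-local-vs-factors} need not admit any factors-of-i.i.d./latent-seed representation) there is no obvious joint distribution of all $T+1$ outputs that delivers both. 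The paper's coupling (\autoref{def:coupled-runs}) is strictly a \emph{pairwise} construction: it pins down the labels on a random subset of vertices with consistent $p$-neighborhoods and conditionally resamples the rest. You can chain these pairwise couplings to define a joint law on $(\sigma_0,\dots,\sigma_T)$ with correct marginals and the desired Lipschitz-in-$k$ property, but then $\sigma_0$ and $\sigma_T$ are mediated through the entire chain and are emphatically \emph{not} independent. Since the argument that $Q_0 < a$ (\autoref{lem:overlap-t-0}, following Chen et al.\ Lemma 3.3) crucially uses independence of the two runs — both near-optimal solutions have small magnetization, and the expected overlap of \emph{independent} solutions with small magnetization is small — the claim $Q_0 \in [0,a)$ does not follow under the chain coupling, and the discrete intermediate-value argument loses its endpoint. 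The paper explicitly flags this obstacle in the technical overview, noting that the discrete-interpolation approach of Farhi et al.\ ``does not clearly generalize from their independent set analysis to the setting of CSPs.''

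The paper's actual route sidesteps the simultaneous coupling entirely. For each $t \in [0,1]$ it considers a separate pairwise $t$-shared-randomness run on a $t$-coupled pair $(G_1,G_2)\sim\cH_{d,k,n,t}$ and defines $R(t) := \E[\frac{1}{n}\langle\sigma_1,\sigma_2\rangle]$. It then proves continuity of $R(t)$ directly (\autoref{lem:cont-overlap}) by observing that the sampling probabilities are continuous in $t$; establishes $R(0)<a$ (\autoref{lem:overlap-t-0}), which is legitimate because at $t=0$ the pairwise coupling genuinely reduces to two independent runs; shows $R(1)=1$ (\autoref{lem:overlap-t-1}); applies the continuous IVT to locate $t^*$ with $R(t^*)\in[a,b]$; and finally uses concentration of the \emph{observed} overlap around $R(t^*)$ at that $t^*$ (\autoref{thm:coup-hypergraph-qaoa-output-overlap}), together with concentration of objective value and Hamming weight (\autoref{thm:qaoa-conc-log-depth}, \autoref{cor:conc-p-local-hamming-wt}), to contradict the coupled OGP. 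No joint coupling of more than two runs is ever required. Your Step 1 (using \autoref{lem:stronger-mcdiarmids} to push the failure probability down to exponentially small, so that near-optimality holds essentially surely) matches the paper's use of the concentration machinery; it is Steps 2–3, and specifically the unsupported independence claim at $t=0$, where your argument breaks.
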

The formal version of the above theorem is stated in~\autoref{thm:obstruction-2}. The theorem effectively obstructs \emph{any} algorithm that makes assignments for variables by looking at $o(n)$ sized local neighborhoods \emph{irrespective} of how these decisions are made and what kind of randomness is used, provided the problem exhibits a coupled OGP.

\paragraph{Confirmation of landscape independence.}
As a consequence of our proof techniques, we also confirm a prediction of Brandao et al.~\cite{brandao2018fixed} in the $\Theta(\log n)$-depth regime for $\QAOA$ by showing that the output values of $\QAOA$ on a random $(k, d)$-$\mathsf{CSP}(f)$ instance (with depth $p$ as stated in \autoref{thm:vanish-nghbd}) concentrate very heavily around the expected value. Once again, the expectation here is with respect to the input distribution as well as the internal randomness of the algorithm.

\begin{theorem}[Confirmation of landscape independence, informal]
    Given a random instance $\Psi$ of a $(k, d)$-$\mathsf{CSP}(f)$ chosen as stated in \autoref{def:k-d-csp}, and a $QAOA_p$ circuit with depth $p < g(d, k)\log(n)$ for some function $g$, the solution $\sigma$ output by $QAOA_p$ with value $\val_{\Psi}(\sigma)$ concentrates as,
    \[
        \Pr\left[\left|\val_{\Psi}(\sigma) - \lE[\val_{\Psi}(\sigma)]\right| \geq \delta n\right] \leq o_n(1)\, ,
    \]
    for every every $\delta > 0$ and the probability taken over both the input distribution and internal randomness of the algorithm.
\end{theorem}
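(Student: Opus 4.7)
The plan is to exploit the $2p$-locality of $\mathsf{QAOA}_p$ established in \autoref{prop:qaoa-2p-local}, which ensures that each output bit $\sigma_v$ depends only on the radius-$p$ neighborhood of $v$ in the factor graph of the random CSP instance $\Psi$. Writing $\val_\Psi(\sigma)=\sum_{i=1}^{r}\mathbf{1}[f(\sigma_{i_1},\dots,\sigma_{i_k})=1]$ with $r\sim\poisson(dn/k)$, each summand becomes a measurable function of the induced sub-hypergraph on the radius-$(p+1)$ neighborhood of clause $C_i$ together with the local internal quantum randomness that $\mathsf{QAOA}_p$ consumes on that neighborhood. The proof then follows the standard Doob-martingale plus bounded-differences route, but instantiated with the strengthened McDiarmid inequality \autoref{lem:stronger-mcdiarmids} to cope with the fact that the per-coordinate Lipschitz constants are themselves random.

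Concretely, I would first condition on $r$ (which lies within $O(\sqrt{n})$ of $dn/k$ by a Chernoff bound for Poisson tails) and then expose the clause identities $C_1,\dots,C_r$ one at a time, together with the independent quantum randomness attached to the variables those clauses touch. Resampling a single clause $C_i$ can only alter the contribution of clauses $C_j$ whose radius-$(p+1)$ neighborhood shares a variable with $C_i$, that is, clauses within graph-distance roughly $2p+2$ of $C_i$; the per-coordinate bounded-differences constant is therefore at most the size $L_i$ of the corresponding ball in $\Psi$. Choosing $g(d,k)$ so that $(kd)^{2p+2}=n^{o(1)}$ ensures, via the usual comparison with a Galton--Watson branching process of offspring distribution $\poisson(kd)$, that $L_i\le L=n^{o(1)}$ simultaneously for all $i$ with probability $1-o(1)$; this is essentially the content of \autoref{lem:concentration-random-coupled-hypergraphs} specialized to a single instance.

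The main obstacle is exactly that the ball sizes $L_i$ are random and heavy-tailed, so a naive application of McDiarmid with worst-case Lipschitz constants does not yield $o(n)$ concentration. This is where \autoref{lem:stronger-mcdiarmids} is crucial: on the typical event $\mathcal{E}$ that every $L_i\le L$, the exposure-martingale increments are uniformly $O(L)$ and the strengthened inequality (applicable because the indicator of ``my neighborhood is small'' is a highly-biased random variable) yields Gaussian-type concentration on scale $O(\sqrt{r}\cdot L)=o(n)$; on $\mathcal{E}^c$, which has probability $o(1)$, the crude bound $\val_\Psi(\sigma)\le r$ times that small probability contributes only $o(n)$ in expectation, which can be absorbed by adjusting the centering. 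The internal quantum randomness of $\mathsf{QAOA}_p$ is handled analogously using \autoref{lem:concentration-local-functions}: conditional on $\Psi$, the outputs $\sigma_v$ and $\sigma_{v'}$ are independent whenever $v,v'$ are farther than $2p$ apart by generic $2p$-locality, so a parallel bounded-differences argument over a coloring of the variables into $O((kd)^{2p})$ independent classes absorbs that source of randomness. Combining the two martingale estimates yields the claimed $o_n(1)$ tail bound at scale $\delta n$ for every fixed $\delta>0$.
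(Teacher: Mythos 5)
Your overall architecture — split the deviation into an instance-randomness piece and an internal-randomness piece, control both by bounded-differences martingale arguments, and use a Galton--Watson comparison to keep neighborhood sizes at $n^{o(1)}$ — matches the paper's proof, which encodes $\Psi$ as a hypergraph and invokes \autoref{cor:conc-p-local-energy} (built on \autoref{lem:concentration-local-functions} for the algorithm's randomness and \autoref{lem:concentration-random-hypergraphs} for the instance's randomness).

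There is, however, one genuine misattribution worth correcting. You say \autoref{lem:stronger-mcdiarmids} is needed ``to cope with the fact that the per-coordinate Lipschitz constants are themselves random,'' and later that it applies ``because the indicator of `my neighborhood is small' is a highly-biased random variable.'' Neither is what that lemma does. Random Lipschitz constants are handled in the paper (and implicitly in your sketch) by the standard truncation device: replace $f$ by its Lipschitz completion $g$ that agrees with $f$ on the typical event where all $2p$-balls are small, apply a bounded-differences inequality to $g$, and pay a union-bound cost of $e^{-n^{a/2}}$ for the atypical event. The role of \autoref{lem:stronger-mcdiarmids} is orthogonal: the paper decomposes a $\poisson(dn/k)$ random hypergraph into $n^k$ \emph{independent} coordinates (one per possible hyperedge $e \in [n]^k$, each carrying a $\poisson(dn/(kn^k))$ multiplicity). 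Standard McDiarmid over $n^k$ coordinates with Lipschitz constant $n^A$ would give $\exp(-\Theta(n^2/(n^k n^{2A})))$, which is vacuous for $k\geq 2$. The strengthened inequality replaces the factor $n^k$ by the \emph{effective} number of non-degenerate coordinates, $n^k \cdot p \approx dn/k$, because each of the $n^k$ multiplicities is zero with probability $1-O(n^{1-k})$ --- that is the highly-biased distribution to which the lemma applies, not an indicator of neighborhood sizes.

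Your alternative decomposition (condition on $r$, expose the clause identities $C_1,\dots,C_r$ as $r$ i.i.d.\ uniform coordinates on $[n]^k$) actually \emph{sidesteps} the need for the strengthened inequality: with only $r=\Theta(n)$ exchangeable coordinates and truncated Lipschitz constant $c = n^A$ with $A<1/2$, the ordinary McDiarmid/Azuma bound already yields $\exp(-\Omega(\delta^2 n^{1-2A}))$, and you then pay a Chernoff tail for the $\poisson(dn/k)$ fluctuation of $r$. That is a legitimate and perhaps more elementary route, but then your invocation of \autoref{lem:stronger-mcdiarmids} is unnecessary, and if you do want to cite it you should be clear that its payoff is in the paper's edge-indexed decomposition, not in taming heavy-tailed ball sizes. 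Also be slightly more careful with the ``absorb the atypical event by adjusting the centering'' step: McDiarmid concentrates around the unconditional expectation, so the clean way is the Lipschitz-completion $g$, not a conditional-expectation argument.
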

The theorem above is made formal in \autoref{thm:qaoa-conc-log-depth}, and the proof follows by encoding a random $(k, d)$-$\mathsf{CSP}(f)$ instance in a random sparse $k$-hypergraph and then applying \autoref{cor:conc-p-local-energy} with the local function on every hyperedge being set to $f$.

\paragraph{Discussion and open problems.}
Our results reveal that a coupled OGP is tightly related to the obstruction of $\QAOA$. This motivates many open problems that are either inspired by this work or are closely related to it, and these are discussed in greater detail in~\autoref{sec:conjectures}.

\subsection{Technical overview}\label{sec:tech overview}
Our proofs for the main theorems follow the analysis framework of~\cite{chen2019suboptimality}, which shows the approximation resistance of random diluted $k$-spin glasses to a weaker\footnote{In particular, $\QAOA$ is not captured by factors of i.i.d.\hspace{-1mm} and we show a separation in~\autoref{prop:p-local-vs-factors}. Refer to~\autoref{sec:proof-bell-generalization} for more details.} class of classical algorithms called \textit{factors of i.i.d.}\hspace{-1mm} local algorithms. We start with briefly giving an overview of their proof and pointing out where their analysis does not extend to $\QAOA$. See also~\autoref{fig:tech overview} for a pictorial overview. %

\paragraph{Chen et al.~\cite{chen2019suboptimality} analysis.} They establish a coupled overlap-gap property (OGP) for diluted $k$-spin glasses (\autoref{thm:ogp-spin-glasses-coupled}). The property says that for two ``coupled" random instances and any nearly optimal solutions $\sigma_1,\sigma_2\in\{-1,1\}^n$ of these, the solutions either have large or small overlap on the assignment values to the variables, i.e., there exists an interval $0<a<b<1$ such that $\langle\sigma_1,\sigma_2\rangle/n\notin[a,b]$. The coupled OGP holds over an interpolation of a pair of hypergraphs $\{(G_1(t),G_2(t))\}_{t\in[0,1]}$ with the following three properties: for every $t\in[0,1]$, denote $\sigma_1(t)$ and $\sigma_2(t)$ as the outputs of a factors i.i.d. algorithm on inputs $G_1(t)$ and $G_2(t)$ respectively. (i) when $t=0$, $(G_1(0),G_2(0))$ are independent random hypergraphs and $\langle\sigma_1(0),\sigma_2(0)\rangle/n < a$ with high probability; (ii) when $t=1$, $G_1(1)=G_2(1)$ are the same random hypergraph and $\langle\sigma_1(1),\sigma_2(1)\rangle/n=1$ with high probability; (iii) for each $t\in[0, 1]$, the correlation $\langle\sigma_1(t),\sigma_2(t)\rangle/n$ between the two solutions is highly concentrated (with respect to the randomness of $G_1(t),G_2(t)$ and the algorithm) to a value $R(t)$, and $R(t)$ is a continuous function of $t$. This contradicts the OGP if the solutions are nearly optimal and hence no such factors of i.i.d.\hspace{-1mm} algorithm can exist. Note that it is also important to assert that the hamming weight and the objective function values output by the algorithm also concentrate. %

\paragraph{Our analysis.} The key part of the Chen et al.~\cite{chen2019suboptimality} proof that does not work for $\QAOA$ is item (iii) of step 2. Specifically, $\QAOA$ is not a factors of i.i.d.\hspace{-1mm} local algorithm and hence their concentration analysis on the correlation between solutions to coupled instances does not apply. Intuitively, this is because local quantum circuits can induce entanglement between qubits in a local neighborhood which cannot be explained by a local hidden variable theory~\cite{bell1964einstein}. We overcome this issue by first generalizing the notion of factors of i.i.d.\hspace{-1mm} algorithms to what we call \textit{generic local algorithms} (\autoref{def:local}). %

To establish concentration of overlap for generic local algorithms, the challenge lies in how to capture the local correlations of $G_1(t)$ and $G_2(t)$. We achieve this by defining a new notion of a random vector being \textit{locally independent} (\autoref{def:local indp vec}). 
The locally independent structure enables us to show concentration on a fixed instance over multiple runs of the generic local algorithm with respect to its internal randomness (\autoref{lem:concentration-local-functions}). Finally, to establish concentration between a pair of correlated instances ($G_1(t)$ and $G_2(t)$), we strengthen McDiarmid's inequality for biased distributions (\autoref{lem:stronger-mcdiarmids}) and this allows the concentration analysis of the correlation function $R(t)$ to pull through (\autoref{thm:coup-hypergraph-qaoa-output-overlap}). We complete the analysis by showing that the hamming weight and objective function values output by a generic $p$-local algorithm also concentrate (\autoref{cor:conc-p-local-hamming-wt}, \autoref{cor:conc-p-local-energy}). In fact, we show this for a broader class of problems (\autoref{thm:qaoa-conc-log-depth}).

\begin{figure}[ht]
    \centering
    \includegraphics[width=16cm]{./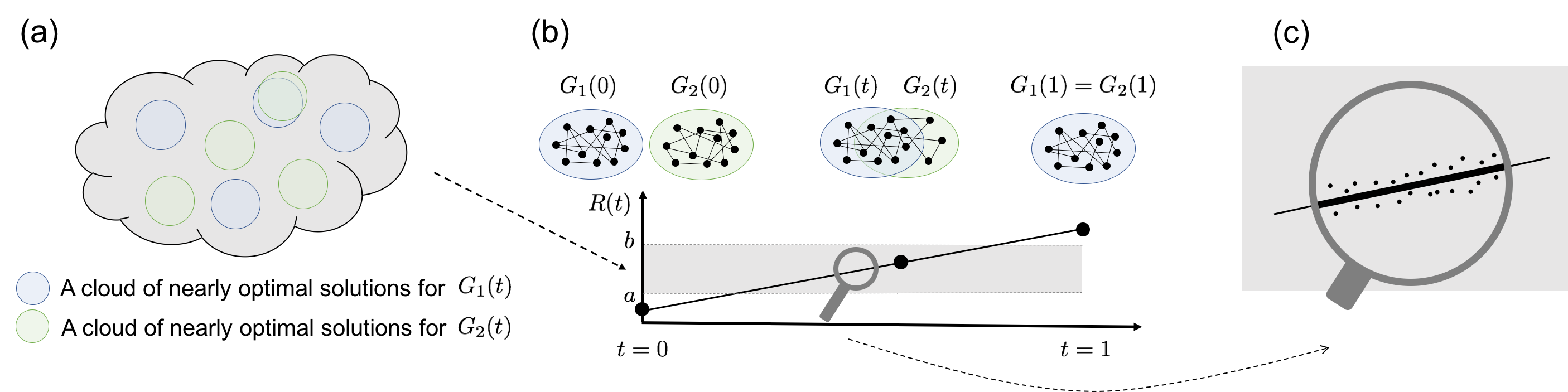}
    \caption{Overview of the proof ideas. (a) The coupled overlap-gap property (OGP) by~\cite{chen2019suboptimality}. Pictorially, the property guarantees that the nearly optimal solutions of a pair of independent instances form multiple disjoint clouds in the solution space $\{-1,1\}^n$. (b) Interpolation of a pair of diluted $k$-spin glass instances. $\{(G_1(t),G_2(t))\}_{t\in[0,1]}$ describes the interpolation from totally independent (i.e., $t=0$) to totally correlated (i.e., $t=1$). $R:[0,1]\to[0,1]$ is the correlation function of a local algorithm (e.g., $\QAOA$) on this interpolation. The coupled OGP prevents $R(t)$ to take values in $[a,b]$ (i.e., the grey area). (c) To contradict the coupled OGP, we would like to show that $R(t)$ is a continuous function and this requires showing that the correlation between the outputs of $G_1(t)$ and that of $G_2(t)$ is concentrated. This is the main challenging step in the proof.}
    \label{fig:tech overview}
\end{figure}

\paragraph{Comparison with Chen et al.~\cite{chen2019suboptimality} and Farhi et  al.~\cite{farhi2020quantum}.} We augment the techniques of \cite{farhi2020quantum} to handle a coupled OGP over a continuous interpolation, as opposed to the coupled OGP in \cite{farhi2020quantum} which is over a fixed discrete interpolation. 
The advantage of this is to enable the use of a broader family of coupled OGPs provable using statistical mechanics methods, whereas the coupled OGP of \cite{farhi2020quantum} requires reasoning about explicit sequences of instances in a way that does not clearly generalize from their independent set analysis to the setting of CSPs.
Our statements additionally are more general and show much stronger concentration than those of \cite{chen2019suboptimality} which is necessary to demonstrate that polynomially many runs of the algorithm will (with high probability) not succeed.
We also show a locality bound of $\log n$ instead of constant, requiring different techniques for analyzing locality than those used in \cite{chen2019suboptimality}, who study only regimes where all neighborhoods are locally isomorphic to trees.
Finally, we demonstrate the coupled OGP (and therefore obstruct generic local algorithms) for general-case random \kxors{}, rather than the case where all clauses require odd parity of their associated variables, without negations.


\section{Preliminaries}\label{sec:prelim}
In this paper, we adopt the following conventions on notations in a CSP (in a spin glass model). 
$n$ denotes the number of variables (the number particles); $k$ denotes the arity of a constraint (the number of particles involved in an interaction) and $k\geq4$ throughout the paper; $m$ denotes the number of hyperedges (the total number of non-trivial interactions); $d$ denotes the degree of a variable (the number of interactions a particle is involved in on average).

The rest of this section is organized as follows. We first recall some elementary definitions and results from spin glass theory in~\autoref{sec:prelim spin glass}. Then, we formally define local algorithms such as factors of i.i.d.~and $\QAOA$ in~\autoref{sec:prelim local alg}. We then provide the complete definition of the OGP and coupled OGP as well as some relevant theorems in~\autoref{sec:prelim ogp}. Finally, in~\autoref{sec:vanishing local neighborhood} we end with a statement which states that sufficiently local neighborhoods of sparse random hypergraphs see a vanishing fraction of the total hypergraph.

\subsection{Spin glass theory}\label{sec:prelim spin glass}
As introduced in~\autoref{sec:intro diluted spin glass}, a spin-glass model is specified by a collection of interactions $\{J_{i_1,\dots,i_k}\}$ on $n$ particles. Physicists are interested in studying Hamiltonians of the form,
\[
H(\sigma) = \sum_{i_1,\dots,i_k}J_{i_1,\dots,i_k}\sigma_{i_1}\cdots\sigma_{i_k}\, ,
\]
where $\sigma\in\{-1,1\}^n$ is the spin configuration. Specifically, it is of importance to understand the \textit{ground state energy}, i.e., $H^*:=H(\sigma)$, as well as the spin configurations $\sigma$ that have energy $H(\sigma)$ close to $H^*$. Note that this naturally connects spin glass theory to combinatorial optimization problems where $\{J_{i_1,\dots,i_k}\}_{i_1,\dots,i_k\in[n]}$ specifies the input constraints, $\sigma$ corresponds to the variables, and $H(\sigma)$ is the objective function.

In condensed matter physics, it is critical to understand the average-case setting and, therefore, the interactions $\{J_{i_1,\dots,i_k}\}_{i_1,\dots,i_k\in[n]}$ are sampled from a certain distribution. We now introduce two common spin glass models:
diluted $k$-spin glasses (\autoref{sec:prelim diluted k spin}) and the $k$-spin mean field model (\autoref{sec:prelim k spin mean field}).

\subsubsection{Diluted $k$-spin glasses}\label{sec:prelim diluted k spin}
In the diluted $k$-spin glass model, the interactions are sampled from a random sparse k-uniform hypergraph defined as follows.

\begin{definition}[Hypergraphs on $n$ Vertices]
    A hypergraph $G$ on $n$ vertices with $m$ hyperedges is characterized by its set of labelled vertices $V = \{1, \dots, n\}$ and hyperedges $E = \{e_1, \dots, e_m\}$, where every hyperedge $e_i = (v_{i, 1}, \dots, v_{i, k}) \in E$ is an ordered $k_i$-tuple in $V$, and $k_i \in \mathbb{N}$, $\forall i\ \in [m]$.
\end{definition}
We restrict our attention to sparse instances of such hypergraphs, which amounts to asserting that the number of hyperedges is $m = nd/k = O(n)$. Additionally, we also restrict to the case that our hyperedges are $k$-uniform, that is, each of them contains $k$-vertices. For the rest of the paper, we will always assume that $d = O(1)$, $k\geq 4$ and $k$ is even, and that $\mathcal{H}_{n, d, k}$ denotes the set of all such $k$-uniform hypergraphs over $n$ vertices with $nd/k$ hyperedges.
\begin{definition}[Random Sparse $k$-Uniform Hypergraphs]\label{def:dil-k-spin-glass}
    A hypergraph $G \sim \mathcal{H}_{n, d, k}$ is chosen by first choosing the number of edges $m=|E| \sim \mathrm{Poisson}(dn/k)$, and then choosing hyperedges $e_1,\dots,e_{m}$ i.i.d. uniformly at random from the set $[n]^k$ of all vertex $k$-tuples.
\end{definition}
Let $G$ be a sparse $k$-uniform hypergraph, the Hamiltonian of the corresponding diluted $k$-spin glass is
\begin{equation}\label{eq:diluted-ham}
    H^G(\sigma) = -\sum_{i=1}^m\prod_{j = 1}^k\sigma_{v_{i,j}}\, ,
\end{equation}
where $\sigma_{v_{i,j}}$ denotes the spin of the $j$-th vertex in the $i$-th hyperedge. Note that maximizing this Hamiltonian corresponds to finding a configuration $\sigma$ such that $H^G(\sigma)$ is maximized. \\

\paragraph{Correspondence to \kxors.}
Maximizing the Hamiltonian in a diluted $k$-spin glass is equivalent to maximizing the number of satisfying constraints in a certain instance $\Psi$ of \kxors. Recall that a \kxors\ instance consists of constraints of the form $x_{i_1}\oplus\cdots\oplus x_{i_k} = b_i$. Let $x\in\{0,1\}^n$ be a boolean assignment, the value of $x$ on $\Psi$ is then defined as $\val_\Psi(x):=\sum_{i}b_i \oplus x_{i_1}\oplus\cdots\oplus x_{i_k}$. Let $G$ be a sparse $k$-uniform hypergraph, we associate it with a \kxors\ instance $\Psi^G$ with constraint $x_{v_{i,1}}\oplus\cdots x_{v_{i,k}} = 0$ for every $i=1,\dots,m$. Finally, we associate a spin configuration $\sigma\in\{-1,1\}^n$ to a boolean assignment $x\in\{0,1\}^n$ by sending $-1\mapsto 1$ and $1\mapsto 0$. Thus, the product of spins on a hyperedge $e$ is mapped to the parity of the corresponding boolean variables:
\[
    \prod_{j=1}^k\sigma_{v_{i,j}} = (-1)^{\bigoplus_{j=1}^kx_{v_{i,j}}}
\]
for each $i=1,\dots,m$. Moreover, the Hamiltonian and the CSP value have the following correspondence.
\begin{equation*}
    H^{G}(\sigma) = -\sum_{i=1}^m \prod_{j=1}^k\sigma_{v_{i,j}} = \sum_{i=1}^m \left(2\cdot(-1)^{\bigoplus_{j=1}^kx_{v_{i,j}}}-1\right) = 2\val_{\Psi^G}(x) - m \, .
\end{equation*}
That is, maximizing that Hamiltonian $H^G$ is equivalent to maximizing the value $\val_{\Psi^G}$. As a remark, note that when $k=2$, \kxors\ becomes \maxcut. As a \emph{signed} extension of the diluted $k$-spin glass hamiltonian, one can define a hamiltonian for the random~\kxors~problem.

\begin{definition}[Random \kxors]\label{def:random-k-xor}
    Sample a hypergraph $G \sim \mathcal{H}_{n, d, k}$ by the same procedure mentioned in~\autoref{def:dil-k-spin-glass}. The hamiltonian corresponding to the random \kxors~instance $\Psi_G$ generated by this hypergraph $G$ is
    \begin{equation}
        H^G_{\mathrm{signed}} = -\sum_{i=1}^m\prod_{j=1}^k p_{ij}\sigma_{v_{ij}}\, ,
    \end{equation}
    where every $p_{ij} \sim \{\pm 1\}$ is an i.i.d.~Rademacher random variable.
\end{definition}

\paragraph{Typical behavior.}
It is important to understand the typical value of $H^G$ when $G$ is a random sparse graph. For example, the following quantity
\begin{equation}\label{eq:diluted-ham-limit}
    M(k, d) := \lim_{n \rightarrow \infty}\lE_{G\sim\mathcal{H}_{n,d,k}}\left[\max_{\sigma \in \{-1,1\}^n} \frac{H^G(\sigma)}{n} \right] \, ,
\end{equation}
is a well-defined limit whose existence is inferred from arguments similar to those presented in \cite{de2004random}. Furthermore, by standard concentration arguments, the ground state energy concentrates heavily around $M(k, d)$. The exact computation of the value of $M(k, d)$ is beyond the scope of this paper. However, as mentioned in \autoref{sec:prelim k spin mean field}, the value can be related to the free-energy of a typical instance of the $k$-spin mean field Hamiltonian in the large-$d$ regime.

\subsubsection{$k$-spin mean field model}\label{sec:prelim k spin mean field}
The $k$-spin mean field model is a special case of the infinite-range model with each interaction $J_{i_1,\dots,i_k}=g_{i_1,\dots,i_k}/\sqrt{n^{k-1}}$ where $g_{i_1,\dots,i_k}$ are i.i.d. standard Gaussian random variables. Just as in the case of diluted $k$-spin glasses, spin configurations that maximize the Hamiltonian are of particular interest. Specifically, we are interested in typical ground state configurations in the thermodynamic limit ($n \rightarrow \infty$). The optimal (normalized) value of the ground state is characterized by the following term,
\begin{equation}\label{eq:mean-field-ham-limit}
    P(k) := \lim_{n \rightarrow \infty}\frac{1}{n} \lE_{g_{i_1,\dots,i_k}}\left[\max_{\sigma \in \{-1,1\}^n}H(\sigma)\right]\, ,
\end{equation}
where $P(k)$ denotes the famous Parisi constant. In a sequence of recent works \cite{dembo2017extremal, sen2018optimization} this limit was precisely related to the limit of the ground state energy of diluted $k$-spin glasses in the large degree limit as,
\begin{equation}\label{eq:diluted-to-mean-field}
    M(k, d) \overset{d \rightarrow \infty}{\longrightarrow} \sqrt{\frac{d}{k}}\cdot P(k) + o\left(\sqrt{d}\right)\, .
\end{equation}

\subsection{Local algorithms}\label{sec:prelim local alg}
A local algorithm assigns a (random) label to each vertex $v$ independently at the beginning and then updates it based on the labels of a small neighborhood of $v$. 
Intuitively, the labels associated to the vertices form a stochastic process and in the end the local algorithm assigns a value to each vertex according to its final label.
In~\autoref{sec:factors-of-iid} we introduce factors of i.i.d.~algorithms~\cite{gamarnik2014limits, chen2019suboptimality}. These algorithms parameterize a family of local algorithms that capture most common classical local algorithms. We then introduce the QAOA in~\autoref{sec:qaoa-algorithm}.

\subsubsection{Factors of i.i.d.~algorithms}\label{sec:factors-of-iid}
A local algorithm takes an input (hyper-) graph $G$ and a label set $S$, runs a stochastic process $\{X^G(t)\}_{t}$ that associates to each vertex $v$ a label $X^G_v(t)\in S$ at time $t$, and outputs the assignment $\sigma_v$ to each vertex $v$ according to its final label. While there is a huge design space for local algorithms, a factors of i.i.d.~algorithm of radius $p$ has the following restrictions: (i) the initial label for each vertex $v$ is set to be an i.i.d.~set of random variables $X^G_v(0)$. (ii) For each vertex $v$, the assignment $\sigma_v$ is a random variable that only depends on the labels from a $p$-neighborhood of $v$. (iii) The assignment function for each vertex is the same. Common local algorithms such as Glauber dynamics and Belief Propagation are examples of factors of i.i.d.~algorithms.

To be more concrete, let us start with a formal definition of the $p$-neighborhood of a vertex in a hypergraph, which is a generalization from the $p$-neighborhood of a graph by considering two vertices $v$ and $w$ to be adjacent if they belong to the same hyperedge $e$.

\begin{definition}[$p$-neighborhood and hypergraphs with radius $p$]\label{def:p-neighborhood}
Let $G$ be a hypergraph, $v\in V(G)$, and $p\in\N$. The $p$-neighborhood of $v$ is defined as
\[
    B_G(v, p) := \{ w\in V(G)\, |\, w\text{ is }p\text{ hyperedges away from }v\}\, .
\]
Let $G$ be a hypergraph, $v\in V(G)$, and $p\in\N$. We say $(G,v)$ has radius $p$ if $B_p(G_v)=G$. Further, let $k\in\N$, we define
\[
\mathcal{G}_p := \left\{ (G,v)\, |\, \text{$(G,v)$ has radius $p$ and $G$ is connected, finite, and $k$-uniform}  \right\}
\]
be the collections of hypergraphs with radius at most $p$.
\end{definition}

Next, to capture the fact that local algorithms assign the value of a vertex $v$ by only looking at a $p$-neighborhood, it is natural to define an equivalent classes of local induced subgraphs rooted at $v$ as follows.

\begin{definition}[Rooted-isomorphic graphs]\label{def:rooted-isomorphism}
Let $G_1,G_2$ be two hypergraphs and $v\in V(G_1)\cap V(G_2)$. We say $G_1$ and $G_2$ are rooted-isomorphic at $v$, denoted as $G_1 \cong_v G_2$, if there exists a hypergraph isomorphism $\phi: V(G_1) \to V(G_2)$ such that $\phi(v) = v$.
Similarly, let $L\subseteq V(G_1)\cap V(G_2)$, we say $G_1 \cong_L G_2$ if there exists a hypergraph isomorphism $\phi: V(G_1) \to V(G_2)$ with $\phi(v) = v$ for all $v \in L$.

\end{definition}

In the future usage of~\autoref{def:rooted-isomorphism}, we think of $G_1$ and $G_2$ as some neighborhoods. Intuitively, when the neighborhood of $v_1$ and $v_2$ are rooted-isomorphic, then the local algorithm will give the same output to them.

The last notion of local algorithms to capture is the assigning process $f$ from the labels of a $p$-neighborhood to a value. In particular, a local algorithm should produce the same output value for $v_1$ and $v_2$ when the induced subgraphs of their $p$-neighborhood are rooted-isomorphic. For simplicity, we focus on the case where the label set $S=[0,1]$.

\begin{definition}[Factor of radius $p$, {\cite[Section 2]{chen2019suboptimality}}]
Let $p\in\N$. We define the collection of all $[0,1]$-labelled hypergraphs of radius at most $p$ as 
\[
\Lambda_p := \left\{ (G,v,X) \, |\, \text{$(G,v)\in\mathcal{G}_p$ and $X\in[0,1]^{V(G)}$} \right\}
\]

We say $(G_1, v_1, X_1), (G_2, v_2, X_2) \in \Lambda_r$ are isomorphic if there exists a hypergraph isomorphism $\phi:V(G_1)\to V(G_2)$ such that (i) $\phi(v_1)=v_2$ and (ii) $X_2\circ\phi=X_1$.

Finally, we say $f: \Lambda_p \rightarrow \{-1, 1\}$ is a factor of radius $p$ function if (i) it is measurable and (ii) $f(G_1,v_1,X_1)=f(G_2,v_2,X_2)$ for every isomorphic $(G_1, v_1, X_1), (G_2, v_2, X_2) \in \Lambda_p$.
\end{definition}

Now, we are ready to define factors of i.i.d.~algorithms. Intuitively, the output distribution of a factors of i.i.d. algorithm with radius $p$ on a vertex $v$ is determined by the $p$-neighborhood of $v$.

\begin{definition}[Factors of i.i.d., {\cite[Section 2]{chen2019suboptimality}}]
Let $k,p\in\N$. A factors of i.i.d.~algorithm $A$ with radius $p$ is associated with a factor of radius $p$ function $f$ with the following property. On input a $k$-uniform hyper graph $G$, the algorithm $A$ samples a random labeling $X=\{X(v)\}_{v\in V(G)}$ where $X(v)$'s are i.i.d.~uniform random variables on $[0,1]$. The output of $A$ is $\sigma\in\{-1,1\}^{V(G)}$ where
\[
\sigma_v := f(B_p(G,v), v, \{X(w)\}|_{w\in B_p(G,v)})
\]
for each $v\in V(G)$.
\end{definition}

\subsubsection{The QAOA algorithm}\label{sec:qaoa-algorithm}
\paragraph{The algorithm.}
The $\QAOA$ algorithm was proposed by Farhi et al.~\cite{farhi2014quantum} as a way to approximately solve hard combinatorial optimization problems. The $\QAOA$ algorithm works by applying, in alternation, weighted rotations in the $X$-basis to introduce mixing over the uncertainty in the solution space and weighted cost $e^{-i\beta_j H_c(G)}$ unitaries to introduce correlation spreading encoded by the Hamiltonian of the desired cost-function to maximize. This weighting is accomplished by giving some assignment of weights to weight vectors $\hat{\beta} = (\beta_1,\dots,\beta_p)$ and $\hat{\gamma} = (\gamma_1,\dots,\gamma_p)$, and then running a classical optimizer to help find the ones that maximize the output of $\QAOA$. Various methods, including efficient heuristics, to optimize these angles are studied in the literature~\cite{brandao2018fixed, zhou2020quantum}.

The $\QAOA$ circuit parametrized by angle vectors $\hat{\gamma}$ and $\hat{\beta}$ looks as follows,
\[
    U_p(\hat{\beta}, \hat{\gamma}) = \prod_{j=1}^pe^{-i\beta_j\sum_{k=1}^nX_k}e^{-i\gamma_jH_c(G)}\, .
\]
Typically, the initial state on which the circuit is applied is a \emph{symmetric product state}, most notably $\ket{0}^{\otimes n}$ or $\ket{+}^{\otimes n}$.  The expected value that $\QAOA$ outputs after applying the circuit on some initial state $\ket{\psi_0}$ is,
\[
    \bra{\psi_0}U^{\dagger}H_c(G)U\ket{\psi_0}\, .
\]
It is the expectation value above that is optimized (by maximizing) for various choices of $\hat{\beta}$ and $\hat{\gamma}$ under a classical optimizer, and the solution corresponding to this solution comes from a measurement in the $Z$-basis of the state $U_p(\hat{\beta}, \hat{\gamma})\ket{\psi_0}$. We will notate by $\QAOA(\hat{\beta}, \hat{\gamma})$ a $\QAOA$ circuit of depth $2p$ with angle parameters $\hat{\beta}$ and $\hat{\gamma}$. In our regime, we will work with \emph{any} collection of \emph{fixed} angles $(\hat{\beta}, \hat{\gamma})$. The fixed angles regime is necessary when reasoning about the concentration of overlaps of the solutions produced by a $p$-local algorithm on coupled instances with shared randomness. If the angles of $\QAOA$ vary between the coupled instances, then we cannot assert that the coupled instances will share randomness when labeling vertices with identical neighborhoods.

\paragraph{The diluted \texorpdfstring{$k$}{k}-spin glass Hamiltonian and \texorpdfstring{$\QAOA$}{QAOA}.}
We can rewrite $H_{k, d, n}$ as a Hamiltonian for a quantum system by replacing $\sigma_i$ with the Pauli $Z$ matrix. This yields a $k$-local Hamiltonian that the $\QAOA$ ansatz tries to maximize. The Hamiltonian is,
\begin{equation}\label{eq:diluted-ham-quatnum}
    \widehat{H}^G_{k, d, n}(\sigma) = -\hspace{-5mm}\sum_{(v_1,\dots,v_k) \in E(G)}\bigotimes_{i=1}^{k}\sigma_z(v_i)\, ,
\end{equation}
where $\sigma_z(v_i)$ is a $2 \times 2$ Pauli Z matrix for the $i$-th vertex (qubit) in the hypergraph $G$. We want to maximize the following expectation value,
\begin{equation}\label{eq:qaoa-diluted-max}
    \max_{\alpha_j, \beta_j, j \in [p]}\bra{\psi_0}U^{\dagger}H_{k, d, n}(G)U\ket{\psi_0} = \max_{\alpha_j, \beta_j, j \in [p]}-\hspace{-4mm}\sum_{(v_1,\dots,v_k) \in E(G)}\left( \bra{\psi_0}U^{\dagger}\left(\bigotimes_{i=1}^{k}\sigma_z(v_i)\right)U\ket{\psi_0}\right)\, .
\end{equation}

\paragraph{\texorpdfstring{$\QAOA$}{QAOA} at shallow depth.}
Note that the only "spreading" of correlation is introduced by the operator $e^{-i\beta_jH_C(G)}$ which is applied only $p$ times. The hamlitonian in consideration is $k$-local, and therefore, after $p$ operations a qubit $i$ will interact with no more than 
\[
|B_G(v_i, p)| \leq \left((k-1)\cdot\max_{i \in [n]}|\Pi_i(E(HG))|\right)^p\, 
\]
vertices, where,
\begin{equation}\label{eq:projection-qubit-int-hyperedges}
    \Pi_i(E(G)) = \{ e \in E(G)\ |\ e\text{ is a hyperedge that contains }v_i\}\, .
\end{equation}
\autoref{prop:qaoa-2p-local} makes a precise statement about the locality of $\QAOA$ with fixed angles. Bounding the size of $\max_{i \in [n]}\Pi_i(E(HG))$ is the main subject of \autoref{thm:vanish-nghbd}, which parameterizes $p$ appropriately as a function of the number of vertices $n$ in the graph (logarithmic) as well as the parameters of $k$ and $d$ so that this size is $o(n)$. This is sufficiently small for the purposes of our obstruction theorem.

\subsection{Overlap-gap properties}\label{sec:prelim ogp}
We now state the OGP as it holds for the diluted $k$-spin glass model in both uncoupled and coupled form. To do so, we begin by introducing the notion of an overlap between two spin-configurations $\sigma_1$ and $\sigma_2$, which is equivalent to the number of spins that are the same in both configurations subtracted by the number of different spins, normalized by the number of particles in the system. Formally,
\begin{definition}[Overlap between spin configuration vectors]\label{def:overlap}
Given any two vectors $\sigma_1, \sigma_2 \in \{-1,1\}^n$, the overlap between them is defined as,
\[
    R(\sigma_1, \sigma_2) = \frac{1}{n}\langle \sigma_1, \sigma_2 \rangle = \frac{1}{n}\sum_{i\in[n]} (\sigma_1)_i(\sigma_2)_i\, .
\]
\end{definition}
We first state the OGP for diluted $k$-spin glasses about the overlap gaps in a \emph{single instance}.
\begin{theorem}[OGP for Diluted $k$-Spin Glasses, {\cite[Theorem 2]{chen2019suboptimality}}]\label{thm:ogp-spin-glasses}
    For every even $k \geq 4$, there exists an interval $0 < a < b < 1$ and parameters $d_0 > 0$, $0 < \eta_0 < P(k)$ and $n_0 > 1$, such that, for $d \geq d_0$, $n \geq n_0$ and $L = L(\eta_0, d)$, with probability at least $1 - Le^{-n/L}$ over the random hypergraph $G \sim \mathcal{H}_{n, d, k}$, whenever two spins $\sigma_1,\ \sigma_2$ satisfy
    \[
        \frac{H^G(\sigma_i)}{n} \geq M(k, d)\left(1 - \frac{\eta_0}{P(k)}\right)\, ,
    \]
    then also, $|R(\sigma_1, \sigma_2)| \notin (a, b)$.
\end{theorem}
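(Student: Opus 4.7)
The plan is to prove the overlap gap by controlling the maximum of the joint energy $H^G(\sigma_1) + H^G(\sigma_2)$ over pairs with $|R(\sigma_1,\sigma_2)| \in (a,b)$. Concretely, I would aim to show that, with probability $1 - L e^{-n/L}$ over $G \sim \mathcal{H}_{n,d,k}$,
\[
\max_{|R(\sigma_1,\sigma_2)| \in (a,b)} \frac{H^G(\sigma_1) + H^G(\sigma_2)}{2n} \;\leq\; M(k,d)\left(1 - \frac{2\eta_0}{P(k)}\right).
\]
Since any two $(1-\eta_0/P(k))$-near-optimal solutions achieve joint energy at least $2 M(k,d) n (1 - \eta_0/P(k))$, such an inequality forces $|R(\sigma_1,\sigma_2)| \notin (a,b)$, proving the theorem. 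Note that evenness of $k$ makes $H^G$ invariant under $\sigma \mapsto -\sigma$, which is why $|R|$ rather than $R$ is the natural quantity.

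First I would reduce to the mean-field $k$-spin setting. A Franz-Leone/Panchenko Poisson interpolation argument (the same machinery underlying \eqref{eq:diluted-to-mean-field}) can be applied to the annealed constrained two-replica free energy
\[
F^{(a,b)}_n(\beta) \;:=\; \frac{1}{n} \log \sum_{\substack{\sigma_1,\sigma_2 \in \{-1,1\}^n \\ |R(\sigma_1,\sigma_2)| \in (a,b)}} \exp\bigl(\beta(H^G(\sigma_1) + H^G(\sigma_2))\bigr)
\]
to upper bound it by the corresponding Gaussian mean-field quantity, rescaled by $\sqrt{d/k}$. Sending $\beta \to \infty$ converts the bound into a statement about the constrained diluted ground state in terms of its mean-field analogue.

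Second, and most delicately, I would establish the overlap gap for the mean-field $k$-spin model itself: for even $k \geq 4$, there exist $(a,b) \subset (0,1)$ and $\epsilon > 0$ such that the two-replica mean-field ground-state energy constrained to $|R| \in (a,b)$ is at most $2P(k) - \epsilon$. I would prove this by applying the Guerra-Talagrand replica-symmetry-breaking upper bound to the two-replica partition function with a Lagrange multiplier enforcing the overlap constraint, then analyzing the resulting variational problem. Evenness of $k$ is essential here: it makes the pair Parisi functional invariant under $(\sigma_1,\sigma_2) \mapsto (-\sigma_1,\sigma_2)$, so the relevant constraint is really on $|R|$, and for $k \geq 4$ the full-RSB structure of the Parisi minimizer forces a forbidden band strictly inside $(0,1)$.

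Finally, one must promote the expected bound to a high-probability statement. For each fixed pair $(\sigma_1,\sigma_2)$, the quantity $H^G(\sigma_1) + H^G(\sigma_2)$ is a sum of $m \sim \poisson(dn/k)$ i.i.d.\ bounded terms; Hoeffding conditional on $m$ combined with Poisson tail bounds yields $\Pr[\,\cdot - \E[\cdot] \geq n\delta\,] \leq e^{-c(\delta,d) n}$. A union bound over the at most $2^{2n}$ candidate pairs with $|R| \in (a,b)$ closes provided $c(\delta,d) > 2\log 2$, which can be arranged by choosing $\delta$ a small constant fraction of $\epsilon \sqrt{k/d}$ and taking $d$ sufficiently large. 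The principal obstacle is Step 2: quantifying the gap $\epsilon > 0$ requires delicate Parisi-functional analysis with a constrained overlap, leveraging the machinery of Panchenko and Talagrand; by contrast Steps 1 and 3 are comparatively routine once the mean-field gap is in hand.
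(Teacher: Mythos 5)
This theorem is not proved in the paper; it is imported verbatim from Chen, Gamarnik, Panchenko, Rahman \cite{chen2019suboptimality} (their Theorem 2), and the present paper states it without proof. Measured against that cited proof, your Steps 1 and 2 reconstruct the right architecture: a Guerra--Toninelli / Franz--Leone--Panchenko Poisson interpolation to reduce the constrained two-replica diluted ground state to the mean-field one (cf.\ the paper's signed analogue, \autoref{lem:interpolation-kxor}), and then a Guerra--Talagrand two-replica upper bound with an overlap constraint to exhibit the forbidden band in the mean-field $k$-spin model (Chen et al.\ invoke exactly this, via \cite{chen2018disorder}). The remark on even $k$ and $|R|$ is also correct in spirit.

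Step 3 is where there is a genuine gap. The first-moment union bound over $2^{2n}$ pairs does not close. The quantity you need to bound, $H^{G}(\sigma_1)+H^{G}(\sigma_2)$, has per-pair expectation $\E_G\bigl[H^{G}(\sigma_1)+H^{G}(\sigma_2)\bigr] = -m\bigl(m(\sigma_1)^k + m(\sigma_2)^k\bigr)\le 0$, while the target threshold $2M(k,d)\bigl(1-\eta_0/P(k)\bigr)n \approx 2\sqrt{d/k}\,(P(k)-\eta_0)\,n$ grows like $\sqrt{d/k}$. So the required per-spin deviation $\delta$ must itself scale like $\sqrt{d/k}$, not $\sqrt{k/d}$ as you wrote. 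With $\delta\propto\sqrt{d/k}$, the Hoeffding rate $c(\delta,d)\approx \delta^2 k/(8d)$ becomes a constant independent of $d$ (so ``taking $d$ large'' does not help), and the best achievable rate is of order $(2P(k))^2/8\le (2\sqrt{2\ln 2})^2/8 = \ln 2$, which is strictly less than the $2\ln 2$ needed to beat the $2^{2n}$ entropy. The proof in \cite{chen2019suboptimality}, and the analogous step carried out in this paper (\autoref{lem:kxors-overlap} in the appendix), instead applies a bounded-differences / Azuma argument directly to the \emph{maximum} $\max_{|R|\in[a,b]}\bigl(H^{G}(\sigma_1)+H^{G}(\sigma_2)\bigr)$ as a function of $G$: adding or removing a single hyperedge changes this maximum by at most a constant, so it concentrates at rate $e^{-\Omega(n)}$ around its expectation with \emph{no} competing entropy term, and it then suffices that the \emph{expected} constrained maximum (controlled by your Steps 1--2) lies strictly below the threshold. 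You should replace your union bound with this concentration-of-the-supremum argument.
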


A more general version of the OGP excludes, with high probability, a certain range of overlaps between any two solutions of two different instances jointly drawn from a coupled random process. We first introduce this process, and then state the \emph{coupled} version of the OGP as proven in \cite{chen2019suboptimality}.

\begin{definition}[Coupled Interpolation, {\cite[Section 3.2]{chen2019suboptimality}}]\label{def:coupled-interpolation}
The coupled interpolation $\cH_{d,k,n,t}$
generates a coupled pair of hypergraphs $(G_1, G_2) \sim \cH_{d,k,n,t}$ as follows:
\begin{enumerate}
    \item First, a random number is sampled from $\mathrm{Poisson}(tdn/k)$, and that number of random $k$-hyperedges are uniformly drawn from the set $[n]^k$ and put into a set $E$.
    \item Then, two more random numbers are independently sampled from $\mathrm{Poisson}((1-t)dn/k)$, and those numbers of random $k$-hyperedges are independently drawn from $[n]^k$ to form the sets $E_1$ and $E_2$ respectively.
    \item Lastly, the two hypergraphs are constructed as $G_1 = (V, E \cup E_1)$ and $G_2 = (V, E \cup E_2)$. 
\end{enumerate}
\end{definition}

\begin{theorem}[OGP for Coupled Diluted $k$-Spin Glasses, {\cite[Theorem 5]{chen2019suboptimality}}]\label{thm:ogp-spin-glasses-coupled}
    For every even $k \geq 4$, there exists an interval $0 < a < b < 1$ and parameters $d_0 > 0$, $0 < \eta_0 < P(k)$ and $n_0 > 1$, such that, for any $t \in [0,1]$, $d \geq d_0$, $n \geq n_0$ and constant $L = L(\eta_0, d)$, with probability at least $1 - Le^{-n/L}$ over the hypergraph pair $(G_1, G_2) \sim \mathcal{H}_{n, d, k, t}$, whenever two spins $\sigma_1,\ \sigma_2$ satisfy
    \[
        \frac{H^{G_i}(\sigma_i)}{n} \geq M(k, d)\left(1 - \frac{\eta_0}{P(k)}\right) \, ,
    \]
    then their overlap satisfies $|R(\sigma_1, \sigma_2)| \notin [a, b]$.
\end{theorem}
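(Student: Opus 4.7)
The plan is to extend the argument for the uncoupled OGP (\autoref{thm:ogp-spin-glasses}) to the coupled interpolation by leveraging the Poisson/Franz--Leone interpolation machinery from statistical physics. First I would observe that for $(G_1,G_2)\sim\mathcal{H}_{n,d,k,t}$ with shared edges $E$ and independent edges $E_1,E_2$, the joint Hamiltonian decomposes cleanly as
\[
H^{G_1}(\sigma_1)+H^{G_2}(\sigma_2) = H^{E}(\sigma_1) + H^{E}(\sigma_2) + H^{E_1}(\sigma_1) + H^{E_2}(\sigma_2),
\]
a shared ``coupling'' contribution plus two independent contributions. The theorem is then equivalent to the statement that the \emph{constrained} joint ground-state energy density
\[
\mathcal{E}_{[a,b]} := \max_{\sigma_1,\sigma_2\,:\,R(\sigma_1,\sigma_2)\in[a,b]} \frac{H^{G_1}(\sigma_1)+H^{G_2}(\sigma_2)}{n}
\]
is strictly smaller than $2M(k,d)\bigl(1-\eta_0/P(k)\bigr)$ with probability at least $1-Le^{-n/L}$, since any pair of $\sigma_i$ satisfying the hypothesis with overlap in $[a,b]$ would violate this bound.

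Second, I would perform a two-replica Poisson interpolation (in the spirit of Franz--Leone, and following the single-instance proof of \autoref{thm:ogp-spin-glasses}) between the coupled diluted Hamiltonian and a Gaussian mean-field two-replica $k$-spin system in which the overlap is pinned to $q = R(\sigma_1,\sigma_2)$. The interpolation yields an upper bound of the form
\[
\E\, \mathcal{E}_{\{q\}} \,\le\, \sqrt{d/k}\,\bigl(P(k,q;t)+o_d(1)\bigr),
\]
where $P(k,q;t)$ is the constrained two-replica Parisi functional of the pure $k$-spin model with replica coupling strength dictated by $t$. For even $k\geq 4$ the unconstrained two-replica maximizer concentrates on $q\in\{0,\pm 1\}$ and $P(k,q;t)$ is strictly sub-optimal by an $\Omega(1)$ amount on a non-degenerate interval $[a,b]\subset(0,1)$; this non-concavity is the classical mechanism producing OGPs in pure even $k$-spin models and is the quantitative input that controls the overlap gap here.

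Third, combining this mean-field gap of size $\Omega(\sqrt{d})$ with the $\sqrt{n}$-scale concentration of $H^{G_i}(\sigma_i)$ (via Azuma--Hoeffding applied to the Poisson edge process, using that flipping one edge changes $H^{G_i}$ by $O(1)$) gives a strictly positive gap in the diluted model for all $d\ge d_0$, with $d_0$ depending only on $k$ and $\eta_0$. A careful union bound over all spin pairs with overlap in $[a,b]$, handled exactly as in the single-instance proof of \autoref{thm:ogp-spin-glasses}, then converts this gap into the announced $1-Le^{-n/L}$ tail bound.

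The main obstacle is uniformity in the interpolation parameter $t\in[0,1]$: the \emph{same} constants $a,b,\eta_0,d_0$ must work simultaneously for every $t$. At the endpoints one recovers two independent copies of the uncoupled problem ($t=0$) and the single-instance OGP ($t=1$), so the argument is well-grounded there, but the constrained two-replica Parisi functional $P(k,q;t)$ must be shown to possess a strict gap on a common interval $[a,b]$ for all intermediate $t$. I expect this to require a monotonicity or convexity-in-$t$ analysis of $P(k,q;t)$ together with continuity, so that the worst-case $t$ reduces to one of the two endpoints where the gap is already established, thereby completing the proof.
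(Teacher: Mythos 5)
You are proposing a proof for a theorem the paper does \emph{not} prove itself: \autoref{thm:ogp-spin-glasses-coupled} is cited verbatim from Chen, Gamarnik, Panchenko, and Rahman \cite{chen2019suboptimality} (their Theorem~5). The paper only reproves the analogous interpolation lemma for the \emph{signed} Hamiltonian in \autoref{sec:signed-interpolation} and then inherits the rest of the argument ``as an exact copy'' of the cited proof.

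That said, your outline does match the architecture of the cited proof (and of the paper's signed adaptation): reformulate the claim as an upper bound on the overlap-constrained joint ground-state density $\mathcal{E}_{[a,b]}$, use a Franz--Leone / Guerra--Toninelli Poisson interpolation to transfer to a two-replica mean-field $k$-spin model at scale $\sqrt{d/k}$ with an $O((d/k)^{1/3})$ error, invoke the constrained two-replica free-energy gap for even $k \geq 4$, and finish with Azuma-type concentration plus a union bound. The one place where your speculation diverges from the actual proof is the handling of uniformity in $t$: you suggest a monotonicity or convexity-in-$t$ reduction to the endpoints $t\in\{0,1\}$, but Chen et al.\ do not argue this way. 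They establish the constrained two-replica mean-field gap \emph{uniformly over all} $t\in[0,1]$ directly (via the coupled $k$-spin free-energy analysis of Chen--Handschy--Lerman \cite{chen2018disorder}), and the interpolation derivative bounds producing the $O((d/k)^{1/3})$ error term are likewise uniform in $t$; no reduction to the endpoints is needed. So the main obstacle you correctly flagged is real, but it is resolved by proving the mean-field gap for every $t$ rather than by a convexity argument.
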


We also provide a corresponding coupled OGP for random \kxors{} in \autoref{thm:ogp-kxors-coupled}.

\subsection{Vanishing local neighborhoods of random sparse \texorpdfstring{$k$}{k}-uniform hypergraphs}\label{sec:vanishing local neighborhood}
We state a bound on sufficiently local neighborhoods of random sparse $k$-uniform hypergraphs. 

\begin{lemma}[Vanishing local neighborhoods of random sparse $k$-uniform hypergraphs]\label{thm:vanish-nghbd}
Let $k \ge 2$ and $d \ge 2$ and $\tau \in (0,1)$. Then there exists $a > 0$ and $0 < A < 1$, such that, for $n$ large enough and $p$ satisfying
\[2p+1 \leq \frac{(1-\tau)\log n}{\log(\frac{d(k-1)}{\ln 2})}, \]
the following are true:
\begin{equation*}
    \Pr_{G\sim\cH_{n,d,k}}[\max_i B_G(v_i, 2p) \geq n^A] \leq e^{-n^a}\, ,
\end{equation*}
and
\begin{equation*}
    \Pr_{G\sim\cH_{n,d,k}}[\max_i B_G(v_i, p) \geq n^{\frac{A}{2}}] \leq e^{-n^{\frac{a}{2}}}\, .
\end{equation*}
\end{lemma}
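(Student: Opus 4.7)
My plan is to reduce the bound on neighborhood sizes to a tail bound on a Galton--Watson branching process via a BFS exploration, and then take a union bound over the $n$ starting vertices. The key preliminary observation is that in $\cH_{n,d,k}$ the number of hyperedges incident to any fixed vertex is $\mathrm{Poisson}(d)$: this follows from Poisson thinning of the $\mathrm{Poisson}(dn/k)$ hyperedges, each of which contains a given vertex independently with probability $k/n$. Exposing the neighborhood of $v$ layer-by-layer and letting $Z_j$ denote the number of vertices newly discovered at depth $j$, the conditional distribution of $Z_{j+1}$ given the history is stochastically dominated by $(k-1)\cdot\mathrm{Poisson}(d\,Z_j)$, because each of the $Z_j$ boundary vertices contributes at most $\mathrm{Poisson}(d)$ previously-unseen hyperedges and each such hyperedge reaches at most $k-1$ new vertices. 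Consequently $|B_G(v,2p)|$ is dominated by the total size $T$ of a Galton--Watson tree of depth $2p$ with offspring distribution $(k-1)\cdot\mathrm{Poisson}(d)$, whose expectation is $\sum_{j=0}^{2p}\mu^j=O(\mu^{2p+1})$ where $\mu:=d(k-1)$.

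Next I would bound the upper tail of $T$ by applying a Poisson--Chernoff estimate level-by-level and combining the resulting per-level failure probabilities by a union bound over the $O(\log n)$ levels. Conditional on $Z_j$, the variable $Z_{j+1}$ is dominated by $(k-1)\cdot\mathrm{Poisson}(\mu Z_j)$, so a standard Poisson tail bound gives $\Pr[Z_{j+1}\ge 2\mu Z_j\mid Z_j]\le \exp(-\Omega(\mu Z_j))$ whenever $Z_j\ge 1$. Chaining these estimates shows that with overwhelming probability every level satisfies $Z_j\le(2\mu)^j$, and hence $T\le (2\mu)^{2p+2}$. A slightly more careful accounting yields the cleaner bound $\Pr[T\ge s]\le \exp(-\Omega(s))$ whenever $s$ exceeds a constant multiple of the mean $\mu^{2p+1}$. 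An equivalent route is to iterate the MGF identity $\mathbb{E}[2^{Z_{j+1}}\mid Z_j]\le \exp(dZ_j(2^{k-1}-1))$; working with the base $2=e^{\ln 2}$ is what makes $\ln 2$ appear in the denominator of the hypothesis on $p$.

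Finally, the hypothesis $2p+1\le(1-\tau)\log n/\log(\mu/\ln 2)$ rearranges to $(\mu/\ln 2)^{2p+1}\le n^{1-\tau}$, and since $\ln 2<1$ we in particular get $\mu^{2p+1}\le n^{1-\tau}$. Choosing any $A\in(1-\tau,1)$ and substituting $s=n^A$ in the tail bound from the previous step produces $\Pr[|B_G(v,2p)|\ge n^A]\le\exp(-\Omega(n^A))$, and a union bound over the $n$ root vertices gives the first displayed inequality for any $a\in(0,A)$. The second displayed inequality follows from the same argument applied at radius $p$ rather than $2p$: the mean then is $O(\mu^{p+1})=O(n^{(1-\tau)/2})$, which is dominated by the threshold $n^{A/2}$ since $A>1-\tau$. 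The main obstacle is the concentration step: one must ensure that the per-level Chernoff failures combine additively (not multiplicatively) so that the final exponent $-\Omega(s)$ does not degrade with $p$, and that the bound on $Z_j$ does not require $Z_j$ to already be large; this is precisely the role played by the specific base $\ln 2$, and correspondingly by the denominator $\log(\mu/\ln 2)$, in the hypothesis.
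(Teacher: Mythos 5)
Your setup is right — reduce to a branching-process tail via BFS with $(k-1)\cdot\mathrm{Poisson}(d)$ offspring, dominate it, take a Markov/Chernoff bound at each level, union over levels and over the $n$ roots, and then observe that the hypothesis on $p$ makes $\left(\frac{d(k-1)}{\ln 2}\right)^{2p+1}\le n^{1-\tau}$. This is also the skeleton of the paper's proof (which additionally compares the $(k-1)\cdot\mathrm{Poisson}(d)$ offspring law to $\mathrm{Poisson}(d(k-1))$ via probability generating functions and then invokes the moment-generating-function bound from Farhi et al.). The gap is in your concentration step.

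Your main route — chaining the per-level Chernoff bounds $\Pr[Z_{j+1}\ge 2\mu Z_j\mid Z_j]\le\exp(-\Omega(dZ_j))$ (note: it should be $\mathrm{Poisson}(dZ_j)$, not $\mathrm{Poisson}(\mu Z_j)$, though the constant doesn't matter) — cannot yield failure probability $e^{-n^a}$. The level-$0$ failure already has probability $\Pr[Z_1\ge 2\mu]\approx e^{-\Omega(d)}$, a constant in $n$; so the union over levels is stuck at a constant, not $e^{-\mathrm{poly}(n)}$. More fundamentally, the advertised tail bound $\Pr[T\ge s]\le\exp(-\Omega(s))$ is simply false for a supercritical Galton--Watson process at depth $x$: the moment generating function $\E[e^{tZ_x}]$ is only finite for $t$ shrinking geometrically in $x$ (roughly $t\lesssim(\ln 2/\mu)^x$ — exactly the $\ln 2$ in your denominator), so the best you get from Markov is $\Pr[Z_x\ge s]\le e\cdot\exp\!\left(-s\cdot(\ln 2/\mu)^{x}\right)$, i.e.\ decay in $s/\mu^x$, not in $s$. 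Consequently, substituting $s=n^A$ with $\mu^{2p+1}\lesssim n^{1-\tau}$ gives $\exp(-\Omega(n^{A-(1-\tau)}))$, not $\exp(-\Omega(n^A))$, and the stated ``any $a\in(0,A)$'' is too strong — the correct window is $a < A-(1-\tau)$. This is still enough to conclude ``there exists $a>0$'' as the lemma asks, but only via the corrected tail.

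The fix is precisely the ``equivalent route'' you mention in passing but do not carry out: iterate the MGF identity (or its PGF version) to obtain $\E[e^{tZ_x}]\le e$ at $t=(\ln 2/\mu)^x$, then apply Markov at that $t$. This is what the paper does, and it is where the $\ln 2$ in the denominator of the hypothesis genuinely comes from — not from making per-level Chernoff errors ``combine additively,'' but from the geometric shrinkage of the admissible $t$ as the tree deepens.
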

Intuitively, the above lemma says that the local neighborhood of each vertex is vanishingly small with high probability. 
To prove \autoref{thm:vanish-nghbd}, we utilize a modified version of the proof of Farhi et al.~\cite[Neighborhood Size Theorem]{farhi2020quantum} to handle the case of sparse random hypergraphs and we defer the complete proof to \autoref{sec:proof-vanishing-nhbhd}.

\section{Locality and Shared Randomness}\label{sec:p-local-algorithms}
\subsection{Generic \texorpdfstring{$p$}{p}-local algorithms}

We introduce a concept of ``local random algorithm'' which will allow for different runs of the same local algorithm to "share their randomness", even when run on mostly-different instances.
Later we will demonstrate that $\QAOA$ is a local algorithm under this definition.

\begin{definition}[Generic local algorithms]
\label{def:local}
We consider randomized algorithms on hypergraphs whose output $A(G) \in S^{V}$ assigns a label from some set $S$ to each vertex in $V$. Such an algorithm is generic $p$-local if the following hold.
\begin{itemize}
    \item (Local distribution determination.) For every set of vertices $L \subset V$, the joint marginal distribution of its labels $(A(G)_v)_{v\in L}$ is identical to the joint marginal distribution of $(A(G')_v)_{v\in L}$ whenever $\bigcup_{v\in L} B_{G}(v,p) \cong_L \bigcup_{v\in L} B_{G'}(v,p)$, and,
    \item (Local independence.) $A(G)_{v}$ is statistically independent of the joint distribution of $A(G)_{v'}$ over all $v' \not\in B_{G}(v,2p)$.
\end{itemize}
\end{definition}

Consequently, it will be possible to sample $A(G)_{v}$ without even knowing what the hypergraph looks like beyond a distance of $p$ away from $v$.

This definition is more general than the factors of i.i.d.\ concept used in probability theory \cite{gamarnik2014limits, chen2019suboptimality}. Our definition, for instance, encompasses local quantum circuits whereas factors of i.i.d.\ algorithms satisfy Bell's inequalities and do not capture quantum mechanics.

\begin{proposition}[Generic local strictly generalizes factors of i.i.d.]\label{prop:p-local-vs-factors}
    There exists a generic $1$-local algorithm as defined in \autoref{sec:p-local-algorithms} that is not a $1$-local factors of i.i.d.\ algorithm as defined in \autoref{sec:factors-of-iid}.
\end{proposition}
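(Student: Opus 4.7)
The plan is to construct a generic $1$-local algorithm from a shallow quantum circuit whose output distribution on a carefully chosen hypergraph cannot be reproduced by any $1$-local factors of i.i.d.\ algorithm. The conceptual heart is that a factors of i.i.d.\ algorithm is, by its definition, a \emph{local hidden variable model} in Bell's sense: each label $\sigma_v = f(B_p(G,v), v, X|_{B_p(G,v)})$ is a deterministic measurable function of the i.i.d.\ variables $(X(w))_{w\in B_G(v,p)}$ attached to the $p$-neighborhood of $v$, so conditioning on the entire family $(X(w))_{w\in V}$ makes the whole output deterministic. Bell's theorem, applied to an appropriate joint distribution produced by a local quantum circuit, provides correlations that are incompatible with any such classical representation, and this supplies the separation.

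Concretely I would define a $1$-local quantum algorithm as follows: on input a hypergraph $G$, initialize a qubit in state $|+\rangle$ at every $v\in V(G)$, apply one layer of entangling unitaries (one per hyperedge, each acting only on the qubits of that hyperedge), and measure every qubit in a fixed basis (possibly dependent on the rooted-isomorphism type of its $1$-neighborhood). Both conditions of \autoref{def:local} are then directly verifiable. Local distribution determination holds because the reduced density matrix on any $L\subseteq V$ is determined by the unitaries touching $\bigcup_{v\in L} B_G(v,1)$, which depends only on the rooted-isomorphism type of that region. Local independence holds because qubits lying outside $B_G(v,2)$ are acted on by unitaries supported on hyperedges disjoint from those touching $v$, so the corresponding measurement outcomes factor.

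To complete the separation, I would exhibit a specific small hypergraph $G_0$, entangling unitaries, and measurement bases such that the resulting joint distribution on $\{-1,1\}^{V(G_0)}$ witnesses a Bell-type violation on some tuple of outputs---for example, a Mermin-style inequality arising from a GHZ-like preparation on a single $k$-uniform hyperedge, or a CHSH-type inequality on a four-vertex gadget. Any factors of i.i.d.\ realization of the same distribution would then produce a local hidden variable model respecting the rooted-isomorphism symmetries, whose correlations must obey precisely the inequality we have arranged to violate; this yields a contradiction.

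The main obstacle I expect is that the generic $1$-local framework assigns a \emph{single} label per vertex rather than a choice of measurement setting, so the standard CHSH/Mermin formulations do not apply verbatim. The way around this is to encode the ``choice of measurement'' into the graph structure itself, by attaching distinguishing rooted gadgets to the measurement vertices so that the rooted-isomorphism equivariance required of the algorithm forces different effective observables at different sites. Choosing these gadgets so that the quantum violation survives the encoding while the implication ``local hidden variable model $\Rightarrow$ Bell inequality'' remains valid for any candidate factors of i.i.d.\ realization is the delicate and genuinely technical step of the argument.
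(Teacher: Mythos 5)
Your high-level route is the same as the paper's: both realize a Bell experiment as a generic $1$-local quantum algorithm and then argue that any $1$-local factors of i.i.d.\ reproduction would be a local-hidden-variable (LHV) model obeying a Bell inequality. The difference is in exactly the place you flag as the delicate step, and the paper's resolution is genuinely different from yours. The paper does not encode the two measurement settings as fixed rooted gadgets. Instead it introduces two extra ``control'' vertices $a_c, b_c$ (prepared in $\ket{+}$ and measured in the $Z$ basis) and applies controlled-$U$ gates from $a_c$ to $a_e$ and from $b_c$ to $b_e$, on the graph $G=(\{a_c,a_e,b_e,b_c\},\{\{a_c,a_e\},\{a_e,b_e\},\{b_e,b_c\},\{b_e\}\})$. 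Because $Z_{a_c}$ commutes with a controlled-$U$ whose control is $a_c$, the control measurement can be pulled to the front of the circuit, so each measurement \emph{setting} is a uniformly random \emph{output coordinate} of the very same run. A single run of the algorithm on the fixed $G$ thus produces all four CHSH setting-combinations (each with probability $1/4$), and the conditional distributions $p(\sigma_{a_e},\sigma_{b_e}\mid \sigma_{a_c},\sigma_{b_c})$ supply the CHSH correlators needed for the violation. (The self-loop on $b_e$ is used only to break the $a \leftrightarrow b$ symmetry of the graph; it plays no role in selecting observables.)

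The fix you sketch --- attaching fixed gadgets so that rooted-isomorphism equivariance forces a fixed observable per site --- cannot close the argument as stated, and this is a genuine gap rather than just a detail. With one gadget per measurement site, every vertex in a given instance $G_0$ sees a single effective observable, so a single run exhibits exactly one combination of settings; but any single-setting bipartite joint distribution with shared randomness is trivially LHV-reproducible, so no CHSH, Mermin, or GHZ contradiction can be witnessed. Embedding several setting-combinations as disjoint copies does not help either: distant copies use independent hidden variables in both models, so the LHV comparison across copies carries no constraint. The only way to make a Bell argument bite within a single instance is to make the setting itself a random output, which is exactly the paper's control-vertex device. I would also flag a subtlety that any version of this argument must contend with (the paper elides it): in a $1$-local factors of i.i.d.\ model, $\sigma_{a_c}$ is allowed to be a function of $X(a_e)$, which is part of the randomness shared between $\sigma_{a_e}$ and $\sigma_{b_e}$. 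The ``settings'' are therefore not automatically independent of the hidden variable, so the naive CHSH bound does not apply verbatim; one has to use the additional independences that the target quantum distribution enforces (e.g., $\sigma_{a_c}\perp(\sigma_{b_c},\sigma_{b_e})$ and symmetrically) to rule out measurement-dependence cheats. If you pursue this, that step needs to be made explicit.
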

A proof of this proposition is provided in \autoref{sec:proof-bell-generalization}, and consists of setting up a Bell's inequality experiment within the framework of a generic $1$-local algorithm.

\subsection{Locality properties of \texorpdfstring{$\QAOA$}{QAOA(p)} for hypergraphs}
We show that any $\QAOA$ circuit of depth $p$ with some \emph{fixed} angle parameters $(\hat{\beta}, \hat{\gamma})$ is a $p$-local algorithm. This allows us to describe a process to sample outputs of this circuit when it is run on two different input hypergraphs.

\begin{proposition}
\label{prop:qaoa-2p-local}
For every $p>0$, angle vectors $\hat{\beta}$ and $\hat{\gamma}$,
$\QAOA_p(\hat{\beta}, \hat{\gamma})$ is generic $p$-local under \autoref{def:local}.
\end{proposition}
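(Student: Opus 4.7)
The plan is to prove both properties of \autoref{def:local} by a Heisenberg-picture lightcone argument adapted to hypergraphs. The QAOA measurement produces a label $A(G)_v \in \{\pm 1\}$ at each vertex, and the joint distribution on any vertex set $L$ is completely determined by the expectation values $\langle \psi_0 | U_p^\dagger \prod_{v \in L'} Z_v U_p | \psi_0 \rangle$ for every $L' \subseteq L$. Since the cost Hamiltonian decomposes as $H_c(G) = -\sum_{e \in E(G)} \bigotimes_{u \in e} \sigma_z(u)$, the unitary $e^{-i\gamma H_c(G)}$ factors as a commuting product of per-hyperedge gates. I would first show by induction on $j \leq p$ that $U_j^\dagger Z_v U_j$ has operator support inside $B_G(v,j)$ and depends only on the sub-hypergraph induced on $B_G(v,j)$: the transverse-field rotation preserves single-qubit support, and conjugation by $e^{-i\gamma_j H_e}$ leaves the current operator untouched unless $e$ intersects its support, in which case it may spread to all of $e$; thus any hyperedge contributing at step $j$ must already intersect $B_G(v,j-1)$ and hence lies inside $B_G(v,j)$.

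For local distribution determination, let $N_L := \bigcup_{v \in L} B_G(v,p)$. Because $\ket{\psi_0}$ is a symmetric product state, each joint expectation $\langle \psi_0 | \prod_{v \in L'} U_p^\dagger Z_v U_p | \psi_0 \rangle$ factors across $N_L$ and its complement with the outside piece evaluating to $1$, and the inside piece depends only on the induced sub-hypergraph on $N_L$ together with the root set $L$. Any rooted isomorphism $\bigcup_{v\in L} B_{G}(v,p) \cong_L \bigcup_{v\in L} B_{G'}(v,p)$ relabels qubits while fixing $L$ pointwise, so every such expectation, hence the joint marginal distribution of $(A(G)_v)_{v \in L}$, agrees for $G$ and $G'$. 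For local independence, suppose $v' \notin B_G(v, 2p)$; then the triangle inequality on hyperedge distance gives $B_G(v,p) \cap B_G(v'',p) = \emptyset$ for every $v'' \notin B_G(v,2p)$, so $U_p^\dagger Z_v U_p$ and $U_p^\dagger \prod_i Z_{v'_i} U_p$ act on disjoint qubit sets whenever each $v'_i \notin B_G(v, 2p)$. Since $\ket{\psi_0}$ is a product state across this partition, every joint moment $\langle \psi_0 | U_p^\dagger Z_v^a \prod_i Z_{v'_i}^{a_i} U_p | \psi_0 \rangle$ factors into the product of the two one-sided expectations, and matching factorization of all Fourier--Walsh coefficients of the induced $\pm 1$-valued joint distribution forces the statistical independence required by \autoref{def:local}.

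The least routine ingredient is the inductive lightcone claim: one must check that a hyperedge intersecting $B_G(v,j-1)$ cannot spread operator support outside $B_G(v,j)$, which is immediate from the definition of $B_G(v,j)$ as the vertex set reachable from $v$ in at most $j$ hyperedge steps. Once lightcone containment is in hand, both $p$-locality properties reduce to the standard factorization of expectations on a product initial state, and the result follows uniformly in the fixed angle parameters $(\hat{\beta}, \hat{\gamma})$.
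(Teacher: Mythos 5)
Your proof is correct and takes essentially the same approach as the paper: both use the Heisenberg-picture lightcone argument, observing that $U_p^\dagger Z_v U_p$ has support in $B_G(v,p)$ and is determined by the sub-hypergraph induced there, then leverage the product structure of the initial state to factorize either the joint measurement process (paper) or the Fourier--Walsh coefficients of the joint output distribution (yours). Your version is somewhat more explicit about the inductive lightcone containment and the reduction to factoring expectation values, whereas the paper argues operationally via the freedom to reorder commuting measurements, but the underlying idea is identical.
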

\begin{proof}
    To see this, consider the structure of QAOA: we start with a product state $\ket{\psi_0}$ where each qubit corresponds to a vertex in the hypergraph, apply the unitary transformation $U = U_p(\hat{\beta}, \hat{\gamma})$ to the state, and then measure each vertex $v$ in the computational basis with the Pauli-Z operator $\sigma_z(v)$.
    Equally valid and equivalent is the Heisenberg picture interpretation of this process, where we keep the product state $\ket{\psi_0}$ fixed but transform the measurements according to the reversed unitary transformation $U^{\dagger}$, so that we end up taking the measurements $U^{\dagger}\sigma_z(v)U$ on the fixed initial state.
    
    Because the $\sigma_z(v)$ operators all commute with each other, their unitarily transformed versions $U^{\dagger}\sigma_z(v)U$ also mutually commute, and the measurements can be taken in any order without any change in results.
    Let $M(v) = U^{\dagger}\sigma_z(v)U$ and $M(L) = \{U^{\dagger}\sigma_z(u)U \mid u \in L\}$.
    
    To show that QAOA satisfies the first property of generic $p$-local algorithms, we need to show that the marginal distribution of its assignments to any set $L' \subseteq V$ of vertices depends only on the union of the $p$-distance neighborhoods of $L'$.
    To show this, since we are allowed to take the measurements in any order, take the measurements in $M(L')$ before any other measurement.
    Then since the action of the unitary $U = U_p(\hat{\beta}, \hat{\gamma})$ on qubits in $L'$ does not depend on any feature of the hypergraph outside of a radius of $p$ around $L'$, the operators $M(L')$ are fully determined by the $p$-local neighborhoods of $L'$, and since we take them before every other measurement, the qubits are simply in their initial states when we make these measurements, thus the distribution of outputs is fully determined.
    
    The same type of reasoning shows that the assignment to each $v \in V$ is statistically independent of the assignments to any set of vertices outside of a $2p$-distance neighborhood of $v$. 
    Take $L'' \subset V \setminus B(v, 2p)$.
    Then $M(v)$ acts on a radius-$p$ ball around $v$, and each measurement in $M(L'')$ acts on a radius-$p$ ball around a vertex in $L''$, and by taking $\{M(v)\} \cup M(L'')$ before any of the measurements in $M(\{B(v, 2p) \setminus \{v\})$, we ensure that the qubits being measured by $M(v)$ are disjoint from and unentangled with those measured by anything in $M(L'')$.
    Hence the measurement $M(v)$ is independent of all measurements in $M(L'')$. We conclude that $\QAOA_p$ is a generic $p$-local algorithm.
\end{proof}

\subsection{Shared randomness between runs of a generic local algorithm}\label{sec:shared-core-measurements}

We describe a process to sample the outputs of a generic local algorithm when run twice on two different hypergraphs, so that the two runs of the algorithm can share randomness when the hypergraphs have some hyperedges in common.

This is not meant as a constructive algorithm, but a statistical process with no guarantee of feasible implementation.

The idea is to start with two $t$-coupled hypergraphs, which for large enough $n$, are likely to have some set of vertices $L^+$ whose $p$-neighborhoods are identical between the two hypergraphs.
Since these vertices have identical $p$-neighborhoods, a generic $p$-local algorithm behaves identically on the vertices in $L^+$.
We pick a random $t^+$ fraction of the elements of $L^+$, and assign the same labels to those vertices in the two coupled instances.
Then the remaining labels on each hypergraph are assigned by generic $p$-local algorithms, conditioned on the output being consistent with the already assigned labels.

\begin{definition}[Randomness-sharing for generic local algorithms] \label{def:coupled-runs}
Let $A$ be a generic $p$-local algorithm, $(G_1,G_2)\sim\mathcal{H}_{d,k,n,t}$, and $S$ be a label set.
A pair of runs with \emph{$t^+$-shared randomness} of $A$ on $G_1 = (V,E \cup E_1)$ and $G_2=(V,E \cup E_2)$ with some shared edge set $E$ is defined follows:
\begin{enumerate}
\item Let $L^+$ be the set of all vertices $v\in V$ such that $E(B_{G_1}(v,p)) \subset E$ and $E(B_{G_2}(v,p)) \subset E$.
Generate the vertex set $L \subset L^+$ by including each element of $L^+$ independently with probability $t^+$.

\item Since $\bigcup_{v \in L} (B_{G_1}(v,p)) = \bigcup_{v \in L} (B_{G_2}(v,p))$, the algorithm has the same joint marginal distribution for its outputs on $L$ when it is run on $G_1$ or $G_2$.
Let $\sigma \in S^L$ be a sample from this joint marginal distribution.

\item Let $\sigma_1$ be a sample of $A(G_1)$, conditioned on $(\sigma_1)_v = \sigma_v$ for all $v \in L$. Similarly for $\sigma_2$ being a conditioned sample of $A(G_2)$.
Then $\sigma_1$ and $\sigma_2$ are individually distributed the same as independent runs of the algorithm on $G_1$ and $G_2$ respectively, and together are the output of the two runs with $t^+$-shared randomness.
\end{enumerate}

\end{definition}

\section{Main Theorems}\label{sec:obstruction-thm-proof}
We formally state our main theorems and give the informal proof sketches in this section. The formal proofs are provided in~\autoref{sec:proofs-main-thms}. First, let us specify the choice of parameters we are going to work with in the rest of the paper.

\begin{parameter}\label{param}
For every even $k\geq4$, there exists $\eta_0>0$ such that the following holds: for every $\tau \in (0, 1)$ and $0<\eta \leq \eta_0$, there exist $d_0,n_0 > 0$ from~\autoref{thm:ogp-spin-glasses-coupled} and we consider running a generic $p$-local algorithm on a random $d$-sparse $k$-uniform hypergraph $G$ with size $n \geq n_0$, degree $d \geq d_0$ and $p$ satisfying,
\[
2p+1 \leq \frac{(1-\tau)\log n}{\log(\frac{d(k-1)}{\ln 2})}\, .
\]
\end{parameter}

\subsection{Obstruction for generic local algorithms on diluted \texorpdfstring{$k$}{k}-spin glasses}

\begin{theorem}[Obstruction theorem for diluted $k$-spin glasses]\label{thm:obstruction-1}
\ourparamwithgamma~ Then, on running a generic $p$-local algorithm on a random $d$-sparse $k$-uniform hypergraph $G$ with size $n$ and degree $d$, the probability that the algorithm will output an assignment that is at least ($1 - \eta$)-optimal is no more than $e^{-O(n^{\gamma})}$.
\end{theorem}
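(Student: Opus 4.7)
The plan is to derive a contradiction via the standard interpolation strategy of Chen et al., adapted to generic $p$-local algorithms using the shared-randomness coupling from \autoref{def:coupled-runs}. Suppose for contradiction that a generic $p$-local algorithm $A$ outputs a $(1-\eta)$-optimal assignment with probability strictly more than $e^{-cn^{\gamma}}$ for a suitably chosen constant $c>0$. The strategy is to run $A$ on a coupled pair $(G_1(t), G_2(t)) \sim \mathcal{H}_{d,k,n,t}$ with $t^+$-shared randomness to produce outputs $\sigma_1(t), \sigma_2(t)$, and track the overlap function $R(t) := \frac{1}{n}\langle \sigma_1(t), \sigma_2(t)\rangle$ as $t$ varies continuously from $0$ to $1$.

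First I would pin down the boundary behavior. At $t=1$, the two hypergraphs coincide and the shared-randomness construction forces $\sigma_1(1) = \sigma_2(1)$ on almost all vertices (since $L^+$ covers essentially every vertex when $t=1$ and $t^+$ is chosen close to $1$), so $R(1)\approx 1 > b$. At $t=0$, the two instances are independent; by symmetry of the underlying CSP (each $\sigma_i$ is marginally distributed as an output of $A$ on an independent random instance) combined with the concentration lemma \autoref{lem:concentration-local-functions}, $R(0)$ concentrates tightly around a value smaller than $a$ (one can either invoke that the typical overlap of two independent outputs is near zero, or increase $a$ by a small amount by way of the coupled OGP at $t=0$). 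The coupled OGP (\autoref{thm:ogp-spin-glasses-coupled}) then guarantees that for every fixed $t$, conditioned on both $\sigma_1(t)$ and $\sigma_2(t)$ being $(1-\eta)$-optimal (an event of probability at least roughly $(e^{-cn^\gamma})^2$ by assumption and independence of residual randomness), $R(t) \notin [a,b]$.

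The main step is the interpolation continuity argument. Using \autoref{lem:concentration-random-coupled-hypergraphs} (invoking the strengthened McDiarmid inequality of \autoref{lem:stronger-mcdiarmids} to handle the Poisson/biased edge distribution) together with \autoref{lem:concentration-local-functions} for the algorithm's internal randomness, I would show that $R(t)$ concentrates around a deterministic function $\bar R(t) := \E[R(t)]$ with deviation probability at most $e^{-\Omega(n^{\gamma})}$. Crucially, $\bar R(t)$ is continuous: changing $t$ to $t+\delta$ perturbs only an $O(\delta)$ fraction of hyperedges in expectation, and because $A$ is generic $p$-local with $p = O(\log n)$ and the neighborhood size bound of \autoref{thm:vanish-nghbd} forces $p$-neighborhoods to be of size $n^{o(1)}$, each such edge change affects only $o(n)$ vertex labels. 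Hence $|\bar R(t+\delta) - \bar R(t)| = O(\delta) + o(1)$, so $\bar R$ must cross the interval $[a,b]$ at some $t^*\in(0,1)$.

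Putting these together yields the contradiction: at $t=t^*$, with probability at least $1 - e^{-\Omega(n^\gamma)}$ we have $R(t^*)\in[a,b]$ by concentration around $\bar R(t^*)$, while with probability at least $e^{-2cn^\gamma}$ both outputs are $(1-\eta)$-optimal and thus $R(t^*)\notin[a,b]$ by the coupled OGP; choosing $c$ small enough relative to the $\Omega(n^\gamma)$ in the concentration bound makes these two events incompatible for large $n$. The main obstacle I anticipate is establishing continuity of $\bar R(t)$ at the required quantitative level: ensuring that the $p = \Theta(\log n)$ locality radius together with the neighborhood-size bound yields a perturbation estimate strong enough to close the gap between the doubly-exponentially small success probability $e^{-2cn^\gamma}$ and the concentration rate $e^{-\Omega(n^\gamma)}$, which is precisely where the strengthened McDiarmid inequality for highly-biased increments is needed rather than the vanilla version.
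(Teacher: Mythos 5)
Your proposal follows the same interpolation/IVT architecture as the paper's actual proof, identifying the same key ingredients: coupled OGP, boundary overlaps at $t=0$ and $t=1$, continuity of the expected overlap, concentration over internal randomness via \autoref{lem:concentration-local-functions}, and concentration over the random instance via \autoref{lem:concentration-random-coupled-hypergraphs} and the strengthened McDiarmid inequality. The contradiction structure (concentration forces $R(t^*)\in[a,b]$ for some $t^*$ while the OGP forbids it whenever both solutions are near-optimal) is also correct.

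However, there is a genuine gap at the $t=0$ endpoint. You write that ``by symmetry of the underlying CSP'' the overlap $R(0)$ concentrates around a value below $a$, or alternatively suggest enlarging $a$ using the coupled OGP at $t=0$. Neither works. Symmetry of the problem distribution places no constraint on the algorithm's output distribution: a generic $p$-local algorithm could, say, always output the all-ones vector, in which case two independent runs at $t=0$ have overlap $1>b$. And the OGP at $t=0$ only tells you $R(0)\notin[a,b]$; it does not select the lower branch. The missing ingredient — which the paper's \autoref{lem:overlap-t-0} supplies — is the conditional hypothesis of near-optimality: \autoref{lem:overlap-t-0-cgpr} (Chen et al.\ Lemma 3.3) shows that any $(1-\eta_0/P(k))$-optimal configuration of a diluted $k$-spin glass has normalized magnetization at most $d^{-1/(2k)}(4\eta_0)^{1/k}$, and combining this with concentration of Hamming weight (\autoref{cor:conc-p-local-hamming-wt}) and independence/exchangeability of the two runs at $t=0$ gives $\E[R(0)]\le d^{-1/k}<a$ for $d$ large enough. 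Without this magnetization step the IVT argument has no lower endpoint. As a secondary note, your continuity estimate $|\bar R(t+\delta)-\bar R(t)|=O(\delta)+o(1)$ is imprecise (an additive $o(1)$ that does not vanish with $\delta$ would destroy continuity at fixed $n$); the paper's \autoref{lem:cont-overlap} obtains continuity directly and more simply by observing that $\E[R_{n,t}]$ is a convergent sum of fixed per-instance expectations weighted by Poisson probabilities that are continuous in $t$, with no neighborhood-size estimates needed for that step.
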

\begin{proof-sketch}
The proof follows the coupled interpolation argument in \cite[Section 3.3]{chen2019suboptimality}, and we sketch it briefly - The expected overlap between coupled solutions is continuous (\autoref{lem:cont-overlap}) with the overlap being less than $a$ at $t = 0$ with high probability if the solutions are nearly optimal (\autoref{lem:overlap-t-0}). The overlap is $1$ at $t = 1$ (\autoref{lem:overlap-t-1}). Concentration of the overlap for any value of $t$ is then shown by invoking \autoref{thm:coup-hypergraph-qaoa-output-overlap}. The intermediate value theorem then immediately yields a contradiction to the \emph{coupled} Overlap Gap Property (\autoref{thm:ogp-spin-glasses-coupled}). This completes the proof for the obstruction.
\end{proof-sketch}

\subsection{Obstructions for generic local algorithms on \texorpdfstring{$(k, d)\text{-}\mathsf{CSP}(f)$}{(k,d)-CSP(f)} with coupled OGP}

\begin{theorem}[Obstructions for $(k, d)\text{-}\mathsf{CSP}(f)$ with coupled OGP]\label{thm:obstruction-2}
\ourparamwithgamma~Let $\Psi$ be a random problem instance of a signed or unsigned $(k,d)\text{-}\mathsf{CSP}(f)$ constructed as in \autoref{def:k-d-csp} that satisfies a coupled OGP --- That is, the hypergraph encoding $G_{\Psi}$ of $\Psi$ satisfies \autoref{thm:ogp-spin-glasses-coupled} with the only difference that $k$ can be \emph{any} number $\geq 2$\ ---\ as well as having the property that the $(1-\eta_0)$-multiplicatively optimal pairs of solutions to two independent instances of the CSP have overlap no more than the lower bound ($a$) of the OGP.
Let $\sigma$ be the output of a generic $p$-local algorithm on $G_{\Psi}$.
Then the following holds:
\[
\Pr_{\Psi, \sigma}[\val_{\Psi}(\sigma) \geq (1 - \eta_0)\val_\Psi] \leq e^{-O(n^{\gamma})}\, .
\]
\end{theorem}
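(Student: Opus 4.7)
The plan is to reproduce the coupled-interpolation argument sketched for~\autoref{thm:obstruction-1}, but applied to the hypergraph encoding $G_\Psi$ of a general signed or unsigned $(k,d)\text{-}\mathsf{CSP}(f)$ instance. I first construct a one-parameter family of coupled instance pairs $(\Psi_1(t),\Psi_2(t))$ for $t\in[0,1]$ following~\autoref{def:coupled-interpolation} applied to $G_\Psi$: sample a Poisson-distributed set of ``shared'' clauses at rate $tdn/k$, then independently sample two fresh batches of ``unique'' clauses each at rate $(1-t)dn/k$; in the signed case, each shared clause carries a shared sign vector while each fresh clause receives an independent one. Setting $\Psi_i(t)$ to be the union of the shared batch with the $i$-th fresh batch, the marginal distribution of $\Psi_i(t)$ is always that of a random $(k,d)\text{-}\mathsf{CSP}(f)$ instance.

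Next, I run the generic $p$-local algorithm $A$ on $G_{\Psi_1(t)}$ and $G_{\Psi_2(t)}$ via the $t^+$-shared-randomness procedure of~\autoref{def:coupled-runs}, producing outputs $\sigma_1(t),\sigma_2(t)\in\{-1,1\}^n$. Define $R(t):=\E\bigl[\langle\sigma_1(t),\sigma_2(t)\rangle/n\bigr]$. By essentially the same argument as~\autoref{lem:cont-overlap}, $R$ is continuous in $t$, since the argument there uses only the continuous dependence of the coupled-hypergraph distribution on $t$ together with the local-independence structure of $A$. At $t=1$, the two hypergraph encodings coincide and the shared-randomness construction forces $\sigma_1(1)=\sigma_2(1)$, so $R(1)=1$; at $t=0$, the two instances are independent, and by the hypothesis that $(1-\eta_0)$-optimal pairs of solutions on independent instances have overlap at most $a$, conditioning on the event that both outputs are near-optimal yields $R(0)\le a$.

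To reach a contradiction, I assume $\Pr_{\Psi,\sigma}[\val_\Psi(\sigma)\ge(1-\eta_0)\val_\Psi]$ exceeds $e^{-cn^\gamma}$ for a sufficiently small $c>0$, so that with probability at least $e^{-2cn^\gamma}$ both $\sigma_1(t),\sigma_2(t)$ are $(1-\eta_0)$-optimal for their respective instances. I then invoke~\autoref{thm:coup-hypergraph-qaoa-output-overlap} to obtain that for every fixed $t$ the empirical overlap $\langle\sigma_1(t),\sigma_2(t)\rangle/n$ concentrates around $R(t)$ at scale $e^{-\Omega(n^\gamma)}$, uniformly over a sufficiently fine discretization of $[0,1]$. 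Combining this with continuity of $R$ and the endpoint values $R(0)\le a$, $R(1)=1$, the intermediate value theorem produces some $t^\star\in(0,1)$ at which, with probability greater than the coupled-OGP failure probability, two $(1-\eta_0)$-optimal outputs have empirical overlap inside $[a,b]$ --- directly contradicting the assumed coupled OGP.

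The hard part will be verifying that~\autoref{thm:coup-hypergraph-qaoa-output-overlap} and the continuity step transfer cleanly to the signed CSP setting: the strengthened McDiarmid inequality of~\autoref{lem:stronger-mcdiarmids} must be applied after incorporating the extra randomness from the sign variables $\{s_{i,j}\}$ into the martingale exposure, and the local-independence structure of $A$ must respect this enlarged source of randomness. Once this bookkeeping is done, and once the coupled-interpolation distribution on the signed hypergraph encoding is verified to match the coupled-OGP hypothesis, the rest of the argument is purely structural and mirrors~\autoref{thm:obstruction-1}.
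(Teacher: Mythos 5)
Your proposal follows the paper's overall strategy: encode the CSP as a hypergraph, run the coupled-interpolation argument of the diluted-spin-glass case (Theorem~\ref{thm:obstruction-1}), and contradict the coupled OGP via continuity of the expected overlap $R(t)$ plus the intermediate value theorem. The observation that the signed case requires folding the extra sign randomness into the McDiarmid exposure is also a genuine and correct concern that the paper handles only implicitly (the signed hypergraph encoding has an enlarged probability space, but the bounded-differences structure is unchanged, so Lemma~\ref{lem:stronger-mcdiarmids} still applies).

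However, there is a genuine gap in how you establish the $t=0$ endpoint and set up the contradiction. You assume the success probability exceeds $e^{-cn^{\gamma}}$ and then condition the expected overlap on the event that both outputs are near-optimal, asserting $R(0)\le a$. But the continuity argument you borrow from Lemma~\ref{lem:cont-overlap} applies to the \emph{unconditional} expectation $R(t)=\E[\iprod{\sigma_1(t),\sigma_2(t)}/n]$, not to a conditional expectation. If the success probability is genuinely tiny (say $e^{-cn^\gamma}$), the unconditional $R(0)$ could be anywhere in $[-1,1]$ and need not be $\le a$; you cannot glue the conditional endpoint onto the unconditional curve. The paper resolves this by first invoking concentration of the objective value (Theorem~\ref{thm:qaoa-conc-log-depth}) together with concentration of the Hamming weight (Corollary~\ref{cor:conc-p-local-hamming-wt}, used inside Lemma~\ref{lem:overlap-t-0}): these yield a dichotomy — either the mean output energy is below the near-optimality threshold, in which case the success probability is already $e^{-O(n^{\gamma})}$ and you are done, or the mean is above threshold, in which case near-optimality holds with probability $1-e^{-O(n^{\gamma})}$, the $t=0$ endpoint $R(0)\le a$ is established for the \emph{unconditional} expected overlap, and then the IVT step is valid. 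Your proposal omits these two concentration invocations entirely; without them the IVT setup does not close, because you have no control over $R(0)$ in the low-success-probability regime that your contradiction hypothesis explicitly allows.

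In short, add the concentration of $\val_\Psi(\sigma)$ and of $|\sigma|$ as the first step, derive the dichotomy, and in the high-success branch carry out your coupled-interpolation / IVT argument against the unconditional $R(t)$. Then the rest of your proposal matches the paper.
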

\begin{proof-sketch}
    Once again, we first encode the problem instance $\Psi$ into a representative hypergraph $G_{\Psi}$. Note that, by definition, the encoded instance $G_{\Psi}$ satisfies a coupled OGP as stated in \autoref{thm:ogp-spin-glasses-coupled} over the underlying coupled interpolation stated in \autoref{def:coupled-interpolation}. The concentration of the hamming weight $|\sigma|$ of the solution is established by \autoref{cor:conc-p-local-hamming-wt} and the concentration of the objective value $\val_{\Psi}$ is established by \autoref{thm:qaoa-conc-log-depth}. The concentration of overlap of solutions for coupled instances over the interpolation specified in \autoref{def:coupled-interpolation} is established by \autoref{thm:coup-hypergraph-qaoa-output-overlap}. Then, by an argument similar to the one in the proof sketch of \autoref{thm:obstruction-1}, the obstruction follows.
\end{proof-sketch}

\begin{corollary}[Obstructions for random \kxors{}]\label{thm:obstruction-kxors}
\ourparamwithgamma~Let $\Psi$ be a random problem instance of \kxors{} with $k \ge 4$ even. Let $\sigma$ be the output of a generic $p$-local algorithm on $G_{\Psi}$.
Then the following holds:
\[
\Pr_{\Psi, \sigma}[\val_{\Psi}(\sigma) \geq (1 - \eta_0)\val_\Psi] \leq e^{-O(n^{\gamma})}\, .
\]
\end{corollary}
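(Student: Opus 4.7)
The plan is to derive this corollary as a direct instantiation of \autoref{thm:obstruction-2} once random \kxors{} is cast in the signed $(k,d)\text{-}\mathsf{CSP}(f)$ framework and its coupled overlap-gap property is in hand. First, I would identify a random \kxors{} instance $\Psi$ with a signed $(k,d)\text{-}\mathsf{CSP}(f)$ in the sense of \autoref{def:k-d-csp}, taking $f$ to be the $k$-ary parity predicate on $\{\pm1\}^k$: each clause $x_{v_{i,1}} \oplus \cdots \oplus x_{v_{i,k}} = b_i$ corresponds to a signed hyperedge whose associated sign $p_i = (-1)^{b_i}$ packages the right-hand side. Under the $\sigma_v = (-1)^{x_v}$ encoding one has the identity
\[
\val_\Psi(\sigma) = \frac{m}{2} + \frac{1}{2}\sum_{i=1}^{m} p_i \prod_{j=1}^{k} \sigma_{v_{i,j}} = \frac{m}{2} - \frac{1}{2}H^{G}_{\mathrm{signed}}(\sigma),
\]
so that $(1-\eta)$-optimality of $\sigma$ with respect to $\val_\Psi$ is equivalent (up to normalization by the signed analogue of $M(k,d)$) to being an approximate ground state of $H^{G}_{\mathrm{signed}}$ from \autoref{def:random-k-xor}. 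The hypergraph encoding $G_\Psi$ of $\Psi$ is therefore exactly the random sparse $k$-uniform hypergraph from \autoref{def:dil-k-spin-glass} decorated with i.i.d.\ Rademacher signs.

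Second, I would invoke \autoref{thm:ogp-kxors-coupled} from \autoref{sec:signed-interpolation}, which asserts the coupled OGP for random \kxors{} in exactly the form required by \autoref{thm:obstruction-2}: there are constants $0<a<b<1$, $\eta_0>0$ and $d_0,n_0$ such that, for every $t \in [0,1]$ and $d \ge d_0$, a coupled pair $(\Psi_1,\Psi_2)$ built by the signed analogue of \autoref{def:coupled-interpolation} has the property that any pair of $(1-\eta_0)$-optimal solutions satisfies $|R(\sigma_1,\sigma_2)| \notin [a,b]$ with probability $1 - Le^{-n/L}$. The required overlap condition for independent instances (i.e.\ pairs of $(1{-}\eta_0)$-optimal solutions having overlap at most $a$) is then obtained by specializing this statement to $t=0$: the independent signs ensure that the two instances have no preferred direction in common, so the overlap of nearly-optimal solutions concentrates near $0 < a$ (a simple symmetrization / first-moment bound rules out the high-overlap branch of the gap). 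With these two ingredients in place, \autoref{thm:obstruction-2} applied to $f$ = parity yields the claimed bound $e^{-O(n^\gamma)}$ immediately.

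The genuine work is therefore not in this corollary but in the inputs it black-boxes: (i) \autoref{thm:ogp-kxors-coupled}, which requires extending the Gaussian/Poisson interpolation machinery of Chen et al.\ to the \emph{signed} diluted model, since the per-clause Rademacher signs introduce an additional layer of quenched randomness that must be controlled along the interpolation; and (ii) the concentration results for generic $p$-local algorithms on coupled signed hypergraphs (\autoref{thm:coup-hypergraph-qaoa-output-overlap}), whose application in this signed setting is precisely what motivates the strengthened McDiarmid inequality of \autoref{lem:stronger-mcdiarmids}. The hard part of the overall program is thus lifting the Chen et al.\ coupled OGP to the signed regime while preserving the exponential-probability guarantees needed to survive a union bound over polynomially many runs of the local algorithm; once \autoref{thm:ogp-kxors-coupled} is available, the corollary is a one-line deduction from \autoref{thm:obstruction-2}.
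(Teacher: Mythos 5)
Your overall architecture matches the paper exactly: encode random \kxors{} as a signed $(k,d)\text{-}\mathsf{CSP}(\mathrm{parity})$, feed it to \autoref{thm:obstruction-2}, supply the coupled OGP via \autoref{thm:ogp-kxors-coupled}, and supply the low-overlap condition for independent instances separately. The paper's proof is literally the one-line combination of \autoref{thm:obstruction-2}, \autoref{thm:ogp-kxors-coupled}, and \autoref{lem:kxors-overlap}.

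The one place where your proposal has a genuine gap is the second hypothesis of \autoref{thm:obstruction-2}: that $(1-\eta_0)$-optimal solutions of two \emph{independent} instances have overlap below the lower edge $a$ of the gap. You dispatch this with ``a simple symmetrization / first-moment bound rules out the high-overlap branch.'' That does not work as stated. The OGP at $t=0$ only excludes $|R_{1,2}| \in [a,b]$; it leaves the branch $|R_{1,2}| > b$ untouched, and a naive first-moment count over the overlap ball does not beat a union bound over the (exponentially many) candidate $\sigma^1$ without a much more careful argument. Moreover, for the signed model the unsigned route (small Hamming weight of near-optimal configurations, as in \cite[Lemma 3.3]{chen2019suboptimality}) is unavailable, since the magnetization constraint is killed precisely by the Rademacher signs. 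What the paper actually does (\autoref{lem:kxors-overlap}) is bound $\frac{1}{n}\E\max_{|R_{1,2}|\ge a}\bigl(H_{\mathrm{signed}}^1(\sigma^1) + H_{\mathrm{signed}}^2(\sigma^2)\bigr)$ away from $2P(k)\sqrt{d/k}$ by first proving the analogous statement in the mean-field model (\autoref{lem:pspin-mag-convex}, which rests on a disorder chaos result of Chen--Handschy--Lerman) and then transporting it to the diluted signed model via the same signed Guerra--Toninelli interpolation \autoref{lem:interpolation-kxor} that you already flag as one of the hard inputs. So the $t=0$ overlap bound is itself an interpolation-level fact, of comparable difficulty to \autoref{thm:ogp-kxors-coupled}, not a routine symmetrization. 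You should add a citation to (or a proof of the content of) \autoref{lem:kxors-overlap} in place of the heuristic symmetrization claim; with that change the proposal coincides with the paper's proof.
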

\begin{proof}
Combine \autoref{thm:obstruction-2} with \autoref{thm:ogp-kxors-coupled} and \autoref{lem:kxors-overlap}.
\end{proof}

\subsection{Concentration of objective function values of \texorpdfstring{$\QAOA$}{QAOA(p)}}
For $p = O(1)$, concentration of the objective function value output by $\QAOA_p(\hat{\beta}, \hat{\gamma})$ for sparse random constraint satisfaction problems (CSPs) is shown in \cite{brandao2018fixed}. We state a result below which extends this to $\QAOA_p$ at depth $p < g(d, k)\log n$. While we state the result for $\QAOA_p(\hat{\beta}, \hat{\gamma})$ specifically, this result will apply to any generic local algorithm.
\cite{brandao2018fixed} cite a barrier in applying their techniques to $\QAOA_p$ at depth greater than $p=O(1)$ due to the limitation of McDiarmid's inequality as stated. We overcome this limitation by strengthening the inequality (\autoref{lem:stronger-mcdiarmids}) for highly biased distributions. This confirms the prediction of \cite{brandao2018fixed} about the ``landscape independence" of $\QAOA_p(\hat{\beta}, \hat{\gamma})$ at depth greater than $p=O(1)$. 

\begin{theorem}[$QAOA_p$ landscape independence at $p < g(d, k)\log n$]\label{thm:qaoa-conc-log-depth}
\ourparam~Let $U(\hat{\beta}, \hat{\gamma})$ be the unitary for a $\QAOA_p(\hat{\beta}, \hat{\gamma})$ circuit.
Furthermore, let the hamiltonian $H_{\Psi}$ encode a problem instance $\Psi$ of a $(k,d)\text{-}\mathsf{CSP}(f)$ constructed as in \autoref{def:k-d-csp}, such that,
\[
H_{\Psi} = \sum_{i=1}^{\text{|E|}}H_i\, ,
\]
where each $H_i$ is a $k$-local hamiltonian encoding $f$ for the $i$-th clause. Then, the output $\ket{\psi} = U(\hat{\beta}, \hat{\gamma})\ket{s}$, where $\ket{s}$ is a symmetric product state, has an objective value that concentrates around the expected value as,
\[
\Pr_{\Psi, \ket{\psi}}[|\bra{\psi}H_{\mathcal{P}}\ket{\psi} - \E_{\Psi, \ket{\psi}}[\bra{\psi}H_{\mathcal{P}}\ket{\psi}]| \geq \epsilon\cdot n] \leq e^{-O(n^{\gamma})}\, ,\ \forall \epsilon > 0\, .
\]    
\end{theorem}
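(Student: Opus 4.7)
The plan is to express $\bra{\psi}H_{\Psi}\ket{\psi}$ as a sum of $p$-local functions of the random hypergraph encoding of $\Psi$ and then invoke the previously established concentration bound for such sums, namely \autoref{cor:conc-p-local-energy}.

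First, I would encode the random $(k,d)\text{-}\mathsf{CSP}(f)$ instance $\Psi$ as a random sparse $k$-uniform hypergraph $G_{\Psi} \sim \mathcal{H}_{n,d,k}$: each clause $C_i$ of $\Psi$ becomes a hyperedge $e_i$ whose vertices are the variables appearing in $C_i$ (and in the signed case the per-clause signs are incorporated into $H_{e_i}$). Under this encoding $H_{\Psi} = \sum_{e \in E(G_{\Psi})} H_e$, where $H_e$ is the $k$-local Hamiltonian obtained from $f$ by replacing each boolean variable with the corresponding Pauli-$Z$ observable. Each $H_e$ has operator norm bounded by a constant depending only on $k$ and $f$.

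Second, I would use the locality of QAOA established in \autoref{prop:qaoa-2p-local}. Passing to the Heisenberg picture, the observable $U(\hat{\beta},\hat{\gamma})^{\dagger} H_e U(\hat{\beta},\hat{\gamma})$ is supported only on qubits within the $p$-neighborhood $B_{G_{\Psi}}(e, p)$. Since $\ket{s}$ is a symmetric product state, the per-hyperedge contribution
\[
g_e(G_{\Psi}) := \bra{s} U^{\dagger} H_e U \ket{s}
\]
depends only on the rooted-isomorphism class of $B_{G_{\Psi}}(e, p)$ and is bounded by $\|H_e\| = O(1)$. Consequently $\bra{\psi} H_{\Psi} \ket{\psi} = \sum_{e \in E(G_{\Psi})} g_e(G_{\Psi})$ is a sum of $p$-local functions in the sense of \autoref{def:local}, with exactly the structure required by \autoref{cor:conc-p-local-energy}. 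Applying that corollary with the local function on every hyperedge set to $g_e$, and invoking the parameter regime guaranteed by \autoref{param}, yields the stated tail bound $\Pr[|\bra{\psi}H_{\Psi}\ket{\psi} - \E[\bra{\psi}H_{\Psi}\ket{\psi}]| \geq \epsilon n] \leq e^{-O(n^{\gamma})}$ for every fixed $\epsilon>0$.

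The main obstacle --- and the reason \autoref{cor:conc-p-local-energy} must rely on \autoref{lem:stronger-mcdiarmids} rather than a routine McDiarmid argument --- is that resampling a single hyperedge of $G_{\Psi}$ may in principle alter every summand $g_{e'}$ for $e'$ in its $2p$-neighborhood. By \autoref{thm:vanish-nghbd}, such neighborhoods have size at most $n^A$ with probability $1 - e^{-n^a}$ for some $A<1$, but in the rare bad event a single-coordinate change in the sum can be as large as $\Omega(n)$. Classical McDiarmid, which must use the worst-case per-coordinate bound, would yield only trivial concentration once $p$ is allowed to grow logarithmically in $n$; it is precisely the highly biased per-coordinate tails accommodated by \autoref{lem:stronger-mcdiarmids} that rescue the argument and produce the stretched-exponential rate $e^{-O(n^{\gamma})}$. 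Fortunately this technical core is already discharged inside the proof of \autoref{cor:conc-p-local-energy}, so for the present theorem the remaining work is precisely the CSP-to-hypergraph encoding and the Heisenberg-picture locality verification outlined above.
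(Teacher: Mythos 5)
Your proposal follows essentially the same route as the paper: encode the CSP as a random $k$-uniform hypergraph $G_\Psi \sim \mathcal{H}_{n,d,k}$ with $f$ supplying the per-hyperedge local energy, note that the chosen depth $p$ keeps $2p$-neighborhoods of size $\le n^A$ by \autoref{thm:vanish-nghbd}, and then discharge the concentration by \autoref{cor:conc-p-local-energy}, which is exactly what the paper does.

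Your Heisenberg-picture explanation of why $\bra{s}U^\dagger H_e U\ket{s}$ depends only on the $p$-neighborhood of $e$, and your remark that the highly biased per-hyperedge tails force the strengthened McDiarmid bound (\autoref{lem:stronger-mcdiarmids}), are both correct and spell out reasoning that the paper leaves implicit via \autoref{prop:qaoa-2p-local} and the internals of \autoref{cor:conc-p-local-energy}. One small imprecision worth flagging: you write $\bra{\psi}H_\Psi\ket{\psi}=\sum_e g_e(G_\Psi)$ and then describe this as "exactly the structure required by \autoref{cor:conc-p-local-energy}." That corollary is actually stated for the \emph{measured} energy $H^{G}(\sigma)$, a random variable in both $\sigma$ and $G$, not for the deterministic function $\E_\sigma[H^{G}(\sigma)] = \bra{\psi}H_\Psi\ket{\psi}$ of $G$ alone; the latter matches only the second half of that corollary's proof (the invocation of \autoref{lem:concentration-random-hypergraphs}). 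Since the corollary's conclusion for the measured energy implies the one you want, and the paper's own theorem statement is ambiguous about which quantity $\Pr_{\Psi,\ket{\psi}}$ ranges over, this does not create a gap, but it is worth being explicit about which side of the measurement you are on.
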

\begin{proof-sketch}
    We encode the problem instance $\Psi$ into a hypergraph $G_{\Psi}$. The constraint function $f$ for every clause is set to be the local energy functions $h_i$ on the appropriate $k$-subset of variables. 
    Then \autoref{lem:concentration-local-functions} and \autoref{lem:concentration-random-coupled-hypergraphs} show concentration.
\end{proof-sketch}

\section{Concentration Analysis for Generic Local Algorithms}\label{sec:concentration-results}

The most technical part of this work is to establish concentration theorems for generic local algorithms. Recall from~\autoref{fig:tech overview} that to get obstruction from the coupled OGP, we have to show that the correlation between the outputs of a generic local algorithm on $t$-coupled hypergraphs is highly concentrated around its expected value $R(t)$. This is formally stated in the following theorem.

\begin{restatable}[Generic local algorithm's outputs overlap on coupled hypergraphs]{theorem}{localoutput}\label{thm:coup-hypergraph-qaoa-output-overlap}
\ourparam~When two random $t$-coupled hypergraphs $(G_1,\ G_2) \sim \mathcal{H}_{d, k, n, t}$ are sampled and a pair of $t$-shared-randomness runs of a generic $p$-local algorithm are made on $G_1$ and $G_2$, the overlap between the respective outputs $\sigma_1$ and $\sigma_2$ concentrates. That is, $\forall\ \delta' > 0$,  $\exists\ \gamma' > 0$, such that,
\begin{equation*}\label{eq:coup-hypergraph-qaoa-output-overlap}
    \Pr_{G_1,G_2,\sigma_1,\sigma_2}\left[\left|\langle \sigma_1,\ \sigma_2\rangle - \!\E_{G_1,G_2,\sigma_1,\sigma_2}[\langle \sigma_1, \sigma_2 \rangle]\right| \geq \delta'\cdot n\right] \leq 2e^{-\delta' n^{\gamma'}}\, ,
\end{equation*}
where the probability and expectation are over both the random sample of hypergraphs and the randomness of the algorithm.
\end{restatable}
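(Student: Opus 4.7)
The plan is to split the deviation $\langle\sigma_1,\sigma_2\rangle - \mathbb{E}[\langle\sigma_1,\sigma_2\rangle]$ into two pieces via the conditional expectation over the algorithm's internal randomness given the instance pair:
\begin{align*}
\langle\sigma_1,\sigma_2\rangle - \mathbb{E}[\langle\sigma_1,\sigma_2\rangle]
&= \bigl(\langle\sigma_1,\sigma_2\rangle - \mathbb{E}_{\sigma\mid G_1,G_2}[\langle\sigma_1,\sigma_2\rangle]\bigr) \\
&\quad + \bigl(\mathbb{E}_{\sigma\mid G_1,G_2}[\langle\sigma_1,\sigma_2\rangle] - \mathbb{E}[\langle\sigma_1,\sigma_2\rangle]\bigr),
\end{align*}
and control each term by a different concentration inequality, then combine via a union bound and triangle inequality. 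The first term concerns the \emph{within-instance} fluctuation of the algorithm; the second term concerns the fluctuation of a deterministic functional of the coupled hypergraph pair.

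First I would bound the within-instance term. Condition on a draw of $(G_1,G_2)$ for which the vanishing-neighborhood bound of \autoref{thm:vanish-nghbd} holds, so that every $2p$-ball has size at most $n^{A}$ with $A<1$. Writing the overlap as $\sum_{v\in V}\sigma_{1,v}\sigma_{2,v}$, the local independence property of a generic $p$-local algorithm (\autoref{def:local}) makes $\sigma_{1,v}\sigma_{2,v}$ and $\sigma_{1,u}\sigma_{2,u}$ statistically independent whenever $u\notin B(v,2p)$ in either hypergraph. This is exactly the hypothesis for \autoref{lem:concentration-local-functions}, which provides concentration of such locally dependent sums around the conditional mean at rate $\exp(-\Omega(n^{\gamma_1}))$ for some $\gamma_1>0$ depending on $A$.

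Next I would handle the between-instance term. View $F(G_1,G_2):=\mathbb{E}_{\sigma\mid G_1,G_2}[\langle\sigma_1,\sigma_2\rangle]$ as a function of the edge sequence produced by the coupled interpolation (\autoref{def:coupled-interpolation}), namely the shared list of Poisson$(tdn/k)$ hyperedges and the two independent lists of Poisson$((1-t)dn/k)$ hyperedges. The key sensitivity observation is that resampling any one hyperedge $e$ changes $F$ only through vertices lying in $B_{G_i}(e,p)$, whose cardinality is at most $k\cdot n^{A/2}$ under \autoref{thm:vanish-nghbd}, so the bounded-difference coefficient for each edge coordinate is $O(n^{A/2})$. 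This, together with the observation that a random hyperedge coincides (in its $p$-neighborhood) with any fixed vertex only with probability $o(1)$, is precisely the biased-coordinate regime where the strengthened McDiarmid inequality (\autoref{lem:stronger-mcdiarmids}) improves the exponent from the classical $\delta^2$ to a linear $\delta$. Packaged as \autoref{lem:concentration-random-coupled-hypergraphs}, this yields $\Pr[|F-\mathbb{E}F|\ge \delta n]\le \exp(-\Omega(\delta n^{\gamma_2}))$ for some $\gamma_2>0$. Handling the Poissonian randomness in the number of edges is routine: condition on the total number of edges being $O(n)$ with probability $1-e^{-\Omega(n)}$ and union bound.

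Setting $\gamma'=\min(\gamma_1,\gamma_2)/2$ and applying both inequalities with deviation threshold $\delta' n/2$ completes the proof via the triangle inequality. The main technical obstacle I expect is in the second step: the single-edge sensitivity analysis must carefully track the randomness-sharing procedure of \autoref{def:coupled-runs} (in particular the anchor set $L^+$ which itself depends on the edge being resampled), and must ensure that the bias factor entering \autoref{lem:stronger-mcdiarmids} is uniformly of order $|B(v,p)|/n \le n^{A/2-1}$ so that the biased-McDiarmid exponent beats the pessimistic classical bound. The within-instance step is comparatively clean once the neighborhood-size event is conditioned upon, since local independence directly partitions the overlap sum into independent blocks up to a vanishing boundary.
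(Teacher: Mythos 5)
Your proposal matches the paper's proof in architecture and in all the key lemmas: the same decomposition into a within-instance term (controlled by \autoref{lem:concentration-local-functions} after conditioning on small $2p$-balls via \autoref{thm:vanish-nghbd}, with the shared-randomness structure encoded as a locally mixed random vector $r_j = (\sigma_2)_j$) and a between-instance term (controlled by \autoref{lem:concentration-random-coupled-hypergraphs} applied to $F(G_1,G_2) = \E_{\sigma\mid G_1,G_2}\langle\sigma_1,\sigma_2\rangle$ with bounded-difference coefficient $O(k\,n^{A})$), finished by a triangle inequality and union bound. One small imprecision in your exposition of the second step: the bias parameter $p$ entering \autoref{lem:stronger-mcdiarmids} is the probability that a single Poisson edge-slot is occupied, i.e.~$\Theta(d/n^{k-1})$, not $|B(v,p)|/n$; the crucial gain is that the denominator of the exponent scales with $n^{k}p \approx dn/k$ rather than the total number $n^{k}$ of coordinates, which is what saves the bound for $k\ge 3$ — the exponent remains quadratic in $\delta$ in the small-$\delta$ regime.
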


To prove~\autoref{thm:coup-hypergraph-qaoa-output-overlap}, we have to show concentration with respect to both the internal randomness of the algorithm and the randomness from the problem instances. It turns out that the former is quite non-trivial due to the correlation between coupled hypergraphs as well as the dependencies introduced by each round of the local algorithm. This results in a generalization and strengthening of \cite[Concentration Theorem]{farhi2020quantum} and \cite[Lemmas 3.1 \& 3.2]{chen2019suboptimality}.

To resolve the correlation issue, we introduce the notion of \textit{locally mixed random vectors} (\autoref{def:local indp vec}) that capture the shared randomness between different runs of the local algorithm. We then show in~\autoref{lem:concentration-local-functions} that the correlation between a locally mixed random vector and the output of a generic $p$-local algorithm will still concentrate around its expectation with high probability.

\begin{definition}[Locally mixed random vectors]\label{def:local indp vec}
    For a hypergraph $G$ with vertex set $[n]$, a vector $r \in \R^m$ is a \emph{$(G,p)$-mixed random vector over $S_1, \dots, S_m \subseteq [n]$ with respect to $\sigma \in \R^n$} if: 
    \begin{itemize}
    \item $|r_j| \le 1$ for all $j \in [m]$,
    \item $r_j$ is jointly independent of $r_{j'}$ for $j'$ where $dist(S_j, S_{j'}) > 2p$ as well as $\sigma_i$ for $i$ where $dist(\{i\},S_j) > 2p$, where,
    \[
        dist(U,V) = \min_{u\in U, v\in V} dist(u, v)\, .
    \]
    \end{itemize}
\end{definition}
\begin{remark}
The purpose of the $r$ vector is to enable reasoning about functions of more than one run of the algorithm, possibly with shared randomness between the runs.
When considering functions of the output of a single run of the algorithm, it will suffice to take $r_j \equiv 1$, which is trivially $(G,p)$-mixed over $S_1, \dots, S_m$ with respect to $\sigma$ for all $G$, $p$, $\sigma$, and $S_1, \dots, S_m$.
\end{remark}

\begin{restatable}[Concentration of local functions of spin configurations]{theorem}{concentrationlocal}
\label{lem:concentration-local-functions}
\ourparam~Let $\ell,\ m \in \N$.
Let $\sigma$ be the output of a generic $p$-local algorithm on a fixed hypergraph $G$.
Let $v_{j,i} \in [n]$ for $j\in[m]$ and $i\in[\ell]$.
Let $r = (r_1, \dots, r_m)$ be a $(G',p)$-mixed random vector (\autoref{def:local indp vec}) over $\{v_{1,i} \mid i\in [\ell]\},\dots,\{v_{m,i}\mid i\in[\ell]\}$ with respect to $\sigma$ for some hypergraph $G'$ for which $G$ is a subgraph of $G'$. Now, consider a sum,
\[
X = \sum_{j \in [m]} h(\sigma_{v_{j,1}},\dots,\sigma_{v_{j,\ell}})r_j\, ,
\]
where $|h|\le 1$. Suppose that each vertex $v$ occurs at most $C$ times among the different $v_{j,i}$.
Then, provided that $B_{G'}(i, 2p)$ has at most $n^A$ vertices in it for each $i \in [n]$, the following holds:
\[ 
\Pr_{\sigma, r} [|X - \lE_{\sigma, r}[X]| \ge \delta n] \le e^{-\Omega(\delta^2m/(C\ell n^A))}\, . 
\]
\end{restatable}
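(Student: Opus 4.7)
The plan is a dependency-graph coloring argument combined with Hoeffding's inequality. First, I would define a graph $D$ on the index set $[m]$ by placing an edge between $j \ne j'$ iff $\mathrm{dist}_{G'}(S_j, S_{j'}) \le 2p$, where $S_j := \{v_{j,1},\ldots,v_{j,\ell}\}$. Since $G$ is a subgraph of $G'$, we have $\mathrm{dist}_G \ge \mathrm{dist}_{G'}$, so non-adjacency of $j, j'$ in $D$ simultaneously activates both relevant independence properties: by the local independence axiom of \autoref{def:local}, the vectors $\sigma_{S_j}$ and $\sigma_{S_{j'}}$ decouple; and by the $(G',p)$-mixed structure of $r$, the pair $(r_j, r_{j'})$ decouples, with each $r_j$ jointly independent of $\sigma$-values supported outside its $2p$-neighborhood.

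To bound the maximum degree of $D$, observe that any neighbor $j'$ of $j$ must satisfy $v_{j',i} \in B_{G'}(S_j, 2p)$ for some $i$; since $|S_j| \le \ell$ and every $2p$-ball in $G'$ has at most $n^A$ vertices, $|B_{G'}(S_j, 2p)| \le \ell n^A$. Combined with the assumption that each vertex occurs in at most $C$ of the index sets, this yields $\deg_D(j) \le C\ell n^A$. An equitable coloring of $D$ (e.g.\ via Hajnal--Szemer\'edi) partitions $[m]$ into $\chi \le C\ell n^A + 1$ color classes $T_1,\ldots,T_\chi$ with $|T_c| \le \lceil m/\chi\rceil$, so that within each class the variables $\{Y_j := h(\sigma_{v_{j,1}},\ldots,\sigma_{v_{j,\ell}}) r_j : j \in T_c\}$ are mutually independent and bounded in $[-1,1]$.

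Writing $X = \sum_c X_c$ with $X_c := \sum_{j \in T_c} Y_j$, Hoeffding's inequality applied to each $X_c$ with slack $\delta n/\chi$ and union-bounded over colors gives
\[
\Pr\bigl[|X - \mathbb{E} X| \ge \delta n\bigr] \le \sum_{c=1}^{\chi} 2\exp\!\left(-\frac{\delta^2 n^2}{2\chi^2 |T_c|}\right) \le 2\chi \exp\!\left(-\frac{\delta^2 n^2}{2\chi m}\right).
\]
Substituting $\chi = O(C\ell n^A)$ and absorbing the polynomial $\chi$ prefactor into the exponent recovers the claimed $\exp(-\Omega(\delta^2 m / (C\ell n^A)))$ in the sparse regime $m = O(n)$ relevant to this paper.

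The main obstacle is promoting the axiomatic independence statements --- which speak of a single $\sigma_v$ (respectively $r_j$) being jointly independent of far-away variables --- into the full mutual independence of the multi-vertex vectors $\{\sigma_{S_j} : j \in T_c\}$ across a color class. When $\ell > 1$, each $\sigma_{S_j}$ can have substantial internal correlation, so decoupling across color classes requires invoking the local distribution determination clause to restrict attention to the disjoint union of $p$-neighborhoods $\bigcup_{j} B_{G'}(S_j, p)$ and to argue that a generic $p$-local algorithm's outputs factor across disjoint graph components, before combining with the $(G',p)$-mixed factorization of $r$. Once this mutual independence across color classes is secured, the rest is routine Hoeffding machinery.
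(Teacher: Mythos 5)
Your approach is genuinely different from the paper's. You propose a dependency-graph argument: build a graph on $[m]$ connecting indices whose vertex sets $S_j$ are within $2p$ of each other, bound its maximum degree by $C\ell n^A$ exactly as the paper does in its combinatorial counting, equitably color via Hajnal--Szemer\'edi, and apply Hoeffding within each color class plus a union bound over classes. The paper instead centers each summand with a global random sign, expands the moment generating function, counts contributing terms via forests over the interference graph (a generalized Cayley formula), optimizes the forest-component count, and closes with Markov. Your route is shorter and more transparent: the degree bound, the $\chi = O(C\ell n^A)$ coloring, and a Hoeffding-plus-union-bound computation recover the same exponent $\exp(-\Omega(\delta^2 n^2/(m\,C\ell n^A)))$, which matches $\exp(-\Omega(\delta^2 m/(C\ell n^A)))$ in the regime $m = \Theta(n)$ just as the paper's own computation does (the paper also first reduces to deviations at scale $\delta m$). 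You trade the paper's somewhat delicate Stirling/combinatorial estimates for an invocation of equitable coloring, which is a reasonable exchange.

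The one substantive issue --- which you are right to flag as \emph{the} obstacle --- is the passage from the per-vertex local independence axiom of \autoref{def:local} and the pairwise clause of \autoref{def:local indp vec} to the \emph{mutual} independence of the blocks $\{(\sigma_{S_j}, r_j) : j \in T_c\}$ needed for Hoeffding. This gap is real, and it is worth noting that the paper's moment argument implicitly relies on essentially the same strengthening: the step ``if $Z_{j_i}$ is independent from all other factors, the term vanishes'' uses that pairwise non-interference with all other $Z_{j_{i'}}$ implies joint independence of $Z_{j_i}$ from the full tuple of the rest, which again does not follow verbatim from the stated per-vertex axiom. However, your proposed fix --- invoking local distribution determination to ``argue that a generic $p$-local algorithm's outputs factor across disjoint graph components'' --- does not quite work as stated: the $p$-balls $B_{G'}(S_j,p)$ for $j\in T_c$ are pairwise disjoint, but they are typically not separate connected components of $G'$, and the distribution-determination clause speaks only to what the marginal of $\sigma_L$ is a function of, not to any factorization of that marginal. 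To make either proof fully rigorous from the axioms, one would want a strengthened independence clause stating that for disjoint, pairwise-$(2p)$-separated vertex sets $L_1,\dots,L_s$, the vectors $\sigma_{L_1},\dots,\sigma_{L_s}$ are mutually independent (and similarly that the pairs $(\sigma_{S_j}, r_j)$ factor); this stronger property is what the Heisenberg-picture argument in the proof of \autoref{prop:qaoa-2p-local} actually establishes for QAOA, but it is not what \autoref{def:local} literally asserts. Assuming that strengthening, your coloring argument is sound and gives the stated bound.
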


Next, we show concentration (over the randomness of the problem instances) of functions of hypergraphs which satisfy a bounded-differences inequality with respect to small changes in the hypergraphs. This lemma is itself an application of the strengthening of McDiarmid's inequality, stated in \autoref{lem:stronger-mcdiarmids}.

\begin{restatable}[Concentration of bounded local differences on coupled hypergraphs]{theorem}{concentrationhypergraph}\label{lem:concentration-random-coupled-hypergraphs}
\ourparam~Let $A$ and $a$ be the corresponding exponents from \autoref{thm:vanish-nghbd}.
Let $f$ be a function of two hypergraphs over $n$ vertices $V$, such that 
\[|f(G_1,G_2) - f(G_1',G_2')| \le r(n)\] for some $r$ whenever $(G_1, G_2)$ differs from $(G_1',G_2')$ by the addition and/or removal of a single hyperedge $e \in [n]^k$ from one or both graph and
\[\max_{i \in [n]} \max_{G \in \{G_1, G_2, G_1',G_2'\}}|B_{G}(i,p)| \le n^{A/2}.\]
Then
\[ \Pr_{G_1,G_2 \sim \cH_{n,k,d,t}}\!\!\left[\left|f(G_1,G_2) - \!\E_{G_1,G_2} f(G_1,G_2)\right| \ge \delta n\, r(n)\right] \le                 2\exp\left(\frac{-\delta^2 n}{4(2{-}t)d/k{+}2\delta/3}\right) + 2\exp(-n^{a/2}).  
\]
\end{restatable}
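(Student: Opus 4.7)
The plan is to apply the strengthened McDiarmid's inequality \autoref{lem:stronger-mcdiarmids} to $(G_1,G_2)$ after expressing the joint law $\cH_{n,k,d,t}$ as a product of many highly biased independent coordinates and isolating the ``small local neighborhoods'' event from \autoref{thm:vanish-nghbd}. Concretely, by Poisson thinning, for each $\vec v \in [n]^k$ the counts $N_{\vec v}^E$, $N_{\vec v}^{E_1}$, $N_{\vec v}^{E_2}$ of copies of $\vec v$ in $E$, $E_1$, $E_2$ are mutually independent Poissons with rates $td/(kn^{k-1})$, $(1-t)d/(kn^{k-1})$, $(1-t)d/(kn^{k-1})$; thus $(G_1,G_2)$ is a deterministic function of $3n^k$ independent coordinates, each equal to $0$ with probability $1-O(d/n^{k-1})$. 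Let $\mathcal{E}$ be the event that $\max_v \max_{G \in \{G_1,G_2\}} |B_G(v,p)| \le n^{A/2}$; by \autoref{thm:vanish-nghbd} applied to each marginal, $\Pr[\mathcal{E}^c] \le 2\exp(-n^{a/2})$, which accounts for the second additive term in the target bound.

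On $\mathcal{E}$ the bounded-differences hypothesis applies, so each unit change in some $N_{\vec v}^\bullet$ shifts $f$ by at most $r(n)$. I would replace $f$ by a clipped extension $\tilde f$ that coincides with $f$ on $\mathcal{E}$ but satisfies the $r(n)$ single-coordinate bounded-differences condition \emph{globally}, for instance by processing hyperedges in a canonical order and simply suppressing any insertion that would push some $p$-ball above $n^{A/2}$. Applying \autoref{lem:stronger-mcdiarmids} to $\tilde f$ as a function of the $3n^k$ independent Poisson coordinates, the heavy per-coordinate bias makes the Bernstein-style variance proxy
\[
\sigma^2 \;\lesssim\; 3n^k \cdot r(n)^2 \cdot O\!\bigl(d/n^{k-1}\bigr) \;=\; O\!\bigl((2-t)\,d n\,r(n)^2/k\bigr),
\]
which reproduces the leading $4(2-t)d/k$ denominator term in the target bound; the $2\delta/3$ summand is the standard Bernstein correction controlling the unbounded Poisson tails and should fall out of the conclusion of \autoref{lem:stronger-mcdiarmids} directly. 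Combining this tail bound for $\tilde f$ with the additive $\Pr[\mathcal{E}^c]$ cost of passing from $f$ to $\tilde f$ yields the stated two-term inequality.

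The delicate step, and the one I expect to be the main obstacle, is the construction of $\tilde f$: I need a single extension that (i) coincides with $f$ on $\mathcal{E}$, (ii) obeys the $r(n)$ bounded-differences condition \emph{everywhere}, not only inside $\mathcal{E}$, where the hypothesis is given, and (iii) has expectation within $o(1)$ of $\E f$. Naively zeroing $f$ off $\mathcal{E}$ preserves the expectation but can introduce large single-coordinate jumps at the boundary of $\mathcal{E}$; an order-sensitive truncation that refuses any hyperedge insertion that would violate the neighborhood bound should achieve all three properties simultaneously, and verifying this carefully is where most of the genuine work lies. Once $\tilde f$ is in hand, the remainder is the Bernstein-style variance computation above together with a single invocation of \autoref{lem:stronger-mcdiarmids}.
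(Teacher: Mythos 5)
Your high-level strategy matches the paper's: decompose $(G_1,G_2)$ into many highly-biased independent coordinates, extend $f$ to a globally bounded-differences surrogate, apply \autoref{lem:stronger-mcdiarmids}, and union-bound with the small-neighborhood event from \autoref{thm:vanish-nghbd}. The variance bookkeeping ($\sum_j p_j(2-p_j) c^2 \approx 2(2-t)dn\,r(n)^2/k$) is correct and reproduces the stated denominator, whether one uses your $3n^k$ thinned Poisson coordinates or the paper's $n^k$ coordinates over the alphabet $\{\emptyset,\{G_1\},\{G_2\},\{G_1,G_2\}\}$ (your version has non-identical $p_j$'s and an unbounded alphabet, requiring a trivial generalization of \autoref{lem:stronger-mcdiarmids}, but that is a minor point).

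The genuine gap is in the construction of $\tilde f$. Your ``order-sensitive truncation'' --- process hyperedges in a canonical order and discard any whose inclusion would push a $p$-ball above $n^{A/2}$ --- does not satisfy the single-coordinate bounded-differences property. If you delete one hyperedge $e_j$ from $G_1$, the censoring decisions at every later index $i > j$ are recomputed with a strictly smaller ``kept-so-far'' graph, so hyperedges that were discarded in $G_1$ may now be retained in $G_1 \setminus \{e_j\}$; one coordinate change can cascade into $\omega(1)$ changes in $\tilde G_1$, and hence $|\tilde f(G_1,G_2) - \tilde f(G_1\setminus\{e_j\},G_2)|$ is not controlled by $r(n)$. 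You correctly flag this as the crux, but the proposed construction does not close it. The paper sidesteps the issue with the standard McShane--Whitney minimal Lipschitz extension: setting $K_n$ to be the hypergraphs with all $p$-balls of size at most $n^{A/2}$ and $\rho$ the hyperedge symmetric-difference count, they define
\[
g(G_1,G_2) \;=\; \min_{G_1',G_2' \in K_n} \bigl[\, f(G_1',G_2') + \rho(G_1,G_2,G_1',G_2')\,r(n) \,\bigr].
\]
As an infimum of $r(n)$-Lipschitz functions of $(G_1,G_2)$ (each $\rho(\cdot,\cdot,G_1',G_2')$ is $1$-Lipschitz by the triangle inequality), $g$ is automatically $r(n)$-Lipschitz \emph{everywhere}, and it agrees with $f$ on $K_n\times K_n$ by the hypothesis plus a short chaining argument within $K_n\times K_n$ (remove unshared edges first, then add). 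This one-line construction is exactly what your approach is missing.
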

The theorem above is a generalization of the second part of \cite[Concentration Theorem]{farhi2020quantum}.

\paragraph{Organization of this section.}
In the rest of this section, we prove~\autoref{lem:concentration-local-functions} and~\autoref{lem:concentration-random-coupled-hypergraphs} in~\autoref{sec:concentration internal randomness} and~\autoref{sec:concentrstion randomness instance} respectively. Finally, we present the proof for~\autoref{thm:coup-hypergraph-qaoa-output-overlap} in~\autoref{sec:concentration main proof} and show several useful corollaries in~\autoref{sec:concentration corollaries}.

\subsection{Concentration over the internal randomness of the algorithm}\label{sec:concentration internal randomness}

In this subsection, we prove~\autoref{lem:concentration-local-functions} (restated below) which shows the concentration over the internal randomness of the algorithm. The proof is based on a Chernoff-style argument with a careful analysis on the combinatorial structure of the moment generating function.

\concentrationlocal*

\begin{proof}
We begin the proof by centering the variables $X_j$ so that we can crucially conclude that they contribute $0$ to the moment generating function when their expected value is taken.
For technical reasons (to make odd moments zero), we also introduce a global independent random sign $s \sim \{\pm 1\}$.
\[
    Z_j = s\left[X_j - \E_{\sigma, r}[X_j]\right] = s\left[h(\sigma_{v_{j, 1}},\dots,\sigma_{j_{j, \ell}})r_j - \E_{\sigma, r}[h(\sigma_{v_{j, 1}},\dots,\sigma_{j_{j, \ell}})r_j]\right] ,\ \forall j \in [m]\, . 
\]
Note as an immediate consequence that $\E_{\sigma, r}[Z_j] = 0$, $\forall j \in [m]$. Also, the goal now becomes showing $\Pr_{\sigma}[|\sum_jZ_j|\geq\delta m]<e^{-\Omega(\delta^2m/(C\ell n^A))}$.

We start with analyzing the moment generating function of $Z_j$ as follows,
\[
    \E_{\sigma, r}\left[\left(\sum_{j \in [m]}Z_j\right)^t\right] = \sum_{1 \leq j_1, \dots, j_t \leq m}\E_{\sigma, r}\left[\prod_{k \in \{j_1, \dots, j_t\}}Z_{k}\right]\, ,\ \forall t \in \N\, .
\]
For the $j$-th summand on the right to be non-zero, every factor $Z_k$ must be statistically dependent on at least one other term in $ \{Z_{j_1}, \dots, Z_{j_t}\}$.

We count the number of summands that can be non-zero.
If any $Z_{j_i}$ is independent from all other $Z_{j_{i'}}$, then the entire term is zero, since $\E[Z_{j_i}] = 0$.
Therefore if a summand is non-zero, then the interference graph between the factors $Z_{j_i}$ has no isolated vertices.
As a relaxation of this condition, the interference graph contains a forest with at most $t/2$ components as a subgraph.
Therefore, we can upper bound the number of non-zero terms by summing over all such forests the number of ways to assign factors $Z_{j_i}$ into vertices of that forest so that any two connected factors are not independent from each other.

We will split up the forests by the number of components they have, so first we count the number of forests over $t$ vertices with $w$ components for $w \le t/2$.
By a generalization of Cayley's formula \cite{TAKACS1990321}, there are $wt^{t-w-1}$ such forests if we assume that the first $w$ vertices are in different components.
Since we do not have the corresponding requirement on our factors, we may simply choose $w$ vertices arbitrarily to be in different components, multiplying by $\binom{t}{w}$ to find that there are at most $\binom{t}{w}wt^{t-w-1}$ ways to draw a forest with $w$ components over our $t$ indices $j_1, \dots j_t$.

We now count the number of ways to assign the $m$ factors into these forests so that any two factors connected by an edge are statistically dependent on each other.
A vertex may be arbitrarily selected from each component of the factor graph (say, choose the one with the lowest index in $[t]$) to be assigned any of the $m$ factors.
Once that vertex has been assigned, each of its neighboring vertices has at most $C\ell n^{A}$ choices of dependent factors, since each factor is dependent on at most $\ell n^{A}$ of the spins $\sigma_i$, and there are at most $C$ factors which are a function of each spin.
The same applies to all other vertices in the component.

Therefore, there are at most 
\begin{equation}
\label{eq:sum-over-components}
    \sum_{w = 1}^{t/2} \binom{t}{w}wt^{t-w-1}m^w(C\ell n^{A})^{t-w}
\end{equation} 
ways to generate a possible non-zero term.

Let $a_w = \binom{t}{w}wt^{t-w-1}m^w(C\ell n^{A})^{t-w}$, and we will find the index $w^*$ that maximizes $a_{w^*}$ by computing the ratio
\[ \frac{a_w}{a_{w-1}} = \frac{\frac{t!}{w!(t-w)!}wm}{\frac{t!}{(w-1)!(t-w+1)!}(w-1)tC\ell n^A}
= \frac{(t-w+1)m}{(w-1)tC\ell n^A}. \]
At this point, we do some casework.
In the case where $t \le m/(C\ell n^A)$, the above ratio is greater than $1$ whenever $w < 1 + t/2$, so $w^* = t/2$, recalling that we only have even moments because $Z_j$ contains a factor of a global sign $s \sim \{\pm 1\}$.
In the case where $t \ge m/(C\ell n^A)$, the ratio is less than $1$ whenever $w > ((1+1/t)m+C\ell n^A)/(m/t + C\ell n^A)$, which is implied by $w > 1 + m/(2C\ell n^A)$, so $w^* \le 1 + m/(2C\ell n^A)$

In the former case where $t \le m/(C\ell n^A)$ and $w^* = t/2$, we have
\[ a_{w^*} = \frac{1}{2}\binom{t}{t/2}t^{t/2}m^{t/2}(C\ell n^{A})^{t/2}
\le 2^{t}t^{t/2}m^{t/2}(C\ell n^{A})^{t/2}. \]
In the latter case where $t \ge m/(C\ell n^A)$ and $w^* > 1 + m/(2C\ell n^A)$,
\begin{align*}
a_{w^*}
&= \binom{t}{w^*}w^*t^{t-w^*-1}m^{w^*}(C\ell n^{A})^{t-w^*}
\\&\le \frac{e^{w^*}}{(w^*)^{w^*}}w^*t^{t-1}m^{w^*}(C\ell n^{A})^{t-w^*}
\\&\le e^{w^*}t^{t-1}m(C\ell n^{A})^{t-1}
\\&\le e^{t/2}t^{t-1}m(C\ell n^{A})^{t-1}
. \end{align*}
In either case,
\[ a_{w^*} \le 2^{t}t^{t/2}m^{t/2}(C\ell n^{A})^{t/2} + e^{t/2}t^{t-1}m(C\ell n^{A})^{t-1}, \]
so the number of non-zero terms in the expression of $\E \left[(\sum Z_j)^t\right]$ is at most $t/2$ times this bound on the maximum value of $a_{w^*}$, and, recalling that $|Z_j| \le 1$,
\[ \E_{\sigma}\left[\left(\sum_{j \in [m]}Z_j\right)^t\right] \le 2^{t-1}t^{t/2+1}m^{t/2}(C\ell n^{A})^{t/2} + \tfrac{1}{2}e^{t/2}t^{t}m(C\ell n^{A})^{t-1}. \]

We multiply both sides by $\theta^t$ and divide by $t!$ and sum over even $t$ (recalling that the definition of $Z_j$ contains a random global sign making all odd moments zero) to obtain a bound on the moment generating function
\[ \E_{\sigma}\left[e^{\theta \sum_{j \in [m]}Z_j}\right] \le
 \sum_{t \in \N \text{ even}} \frac{\theta^{t}}{t!}2^{t-1}t^{t/2+1}m^{t/2}(C\ell n^{A})^{t/2} + \sum_{t \in \N \text{ even}} \frac{\theta^{t}}{2 \cdot t!}e^{t/2}t^{t}m(C\ell n^{A})^{t-1}
 . \]
 We handle the two terms separately.
 For the first one, we reparameterize the index $t$ and then make use of Stirling's approximation:
 \begin{align*} \sum_{t \in \N \text{ even}} \frac{\theta^{t}}{t!}2^{t-1}t^{t/2+1}m^{t/2}(C\ell n^{A})^{t/2}
 &= \sum_{t \in \N} \frac{\theta^{2t}}{(2t)!}2^{2t-1}(2t)^{t+1}m^{t}(C\ell n^{A})^{t}
\\&= \sum_{t \in \N} \frac{t^{t+1}}{(2t)!}(8\theta^{2}mC\ell n^{A})^{t}
\\&\le \sum_{t \in \N} \frac{t^{t+1}}{\sqrt{2\pi}(2t)^{2t+1/2}e^{-2t}}(8\theta^{2}mC\ell n^{A})^{t}
\\&\le \sum_{t \in \N} \frac{t^{t+1}t^{t+1/2}}{\sqrt{2\pi}(2t)^{2t+1/2}e^{-t-1}t!}(8\theta^{2}mC\ell n^{A})^{t}
\\&= \frac{e}{2\sqrt{\pi}} \sum_{t \in \N} \frac{t(2e\theta^{2}mC\ell n^{A})^{t}}{t!}
\\&= \frac{\theta^{2}mC\ell n^{A}e^{2e\theta^{2}mC\ell n^{A} + 2}}{\sqrt{\pi}}.
 \end{align*}
 For the second term,
 \begin{align*}
 \sum_{t \in \N \text{ even}} \frac{\theta^{t}}{2 \cdot t!}e^{t/2}t^{t}m(C\ell n^{A})^{t-1}
 &\le  \sum_{t \in \N} \frac{\theta^{t}}{2 \cdot t!}e^{t/2}t^{t}m(C\ell n^{A})^{t-1}
 \\&\le \frac{m}{2\sqrt{2\pi}\, C\ell n^{A}} \sum_{t \in \N} \frac{(e^{3/2}\theta C\ell n^{A})^{t}}{\sqrt{t}}
 \\&\le \frac{m}{2\sqrt{2\pi}\, C\ell n^{A}} \sum_{t \in \N} (e^{3/2}\theta C\ell n^{A})^{t}
 \\&\le \frac{m}{2\sqrt{2\pi}\, C\ell n^{A}(1-e^{3/2}\theta C\ell n^{A})}.
 \end{align*}

 Putting these bounds together,
 \[ \E_{\sigma}\left[e^{\theta \sum_{j \in [m]}Z_j}\right] \le
 \frac{\theta^{2}mC\ell n^{A}e^{2e\theta^{2}mC\ell n^{A} + 2}}{\sqrt{\pi}} + \frac{m}{2\sqrt{2\pi}\, C\ell n^{A}(1-e^{3/2}\theta C\ell n^{A})}
 . \]
 By Markov's inequality,
 \[ \Pr_{\sigma}\left[\sum Z_j \ge \delta m\right] \le e^{-\theta \delta m}\E_{\sigma}\left[e^{\theta \sum_{j \in [m]}Z_j}\right] \]
for all $\theta$.
Choosing $\theta = \delta/(2e^{3/2} C\ell n^{A})$ with $0<\delta \le 1$, we get
 \[ \E_{\sigma}\left[e^{\theta \sum_{j \in [m]}Z_j}\right] \le
 \frac{\delta^2me^{\delta^2m/(2e^2C\ell n^A)}}{4e(C\ell n^{A})\sqrt{\pi}} + \frac{m}{\sqrt{2\pi}\, C\ell n^{A}}
 . \]
 Therefore,
 \[ \Pr_{\sigma}\left[\sum_j Z_j \ge \delta m\right] \le
 \frac{\delta^2me^{(-\sqrt{e} + 1)\delta^2m/(2e^2C\ell n^A)}}{4e(C\ell n^{A})\sqrt{\pi}} + \frac{m e^{-\delta^2 m/(2e^{3/2} C\ell n^{A})}}{\sqrt{2\pi}\, C\ell n^{A}}. \]
 The identical bound holds for $\Pr_{\sigma}\left[\sum_j Z_j \le -\delta m\right]$.
 
\end{proof}

\subsection{Concentration over randomness of problem instance}\label{sec:concentrstion randomness instance}
In this subsection, we prove~\autoref{lem:concentration-random-coupled-hypergraphs} which shows concentration of coupled hypergraphs with respect to the randomness of problem instances.
We start with stating and proving a special case of~\autoref{lem:concentration-random-coupled-hypergraphs} to illustrate the structure of the argument in a simpler setting.
\begin{lemma}[Concentration of Bounded Local Differences on Random Hypergraphs] \label{lem:concentration-random-hypergraphs}
For every $p \in \N$, let $A$ and $a$ be the corresponding exponents from \autoref{thm:vanish-nghbd}.
Let $f$ be a function of a hypergraph on $n$ vertices, such that $|f(G_1) - f(G_2)| \le r(n)$ whenever $\max_{i \in [n]} \max(|B_{G_1}(i,p)|, |B_{G_2}(i,p)|) \le n^{A/2}$ and $G_1$ differs from $G_2$ by the addition or removal of a single edge.
Then
\[ \Pr_{G \sim \cH_{n,k,d}}\left[\left|f(G) - \E_G f(G)\right| \ge \delta n\, r(n)\right] \le 2\exp\left(\frac{-\delta^2 n}{4d/k +2\delta/3}\right) + \exp(-n^{a/2}).  \]
\end{lemma}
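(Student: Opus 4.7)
The plan is to combine the vanishing-neighborhoods estimate (\autoref{thm:vanish-nghbd}) with a Bernstein-type concentration for the Doob martingale obtained by revealing the random hyperedges of $G$ one at a time. First I would isolate the good event $\mathcal{E} = \{\max_{i \in [n]} |B_G(i,p)| \le n^{A/2}\}$, which by \autoref{thm:vanish-nghbd} satisfies $\Pr[\mathcal{E}^c] \le e^{-n^{a/2}}$; this event contributes the additive $e^{-n^{a/2}}$ term in the stated bound via a union bound, so it suffices to prove the Bernstein-type tail compatibly with $\mathcal{E}$.

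Since $G \sim \cH_{n,k,d}$ is generated by sampling $m \sim \mathrm{Poisson}(dn/k)$ i.i.d.\ uniform hyperedges $E_1, \dots, E_m$ from $[n]^k$, the next step would be to condition on $m \le 2dn/k$ (the Poisson upper tail provides an $e^{-\Omega(n)}$ error, which is absorbed into the already-present $e^{-n^{a/2}}$ term) and build the Doob martingale $M_j = \E[f(G) \mid m, E_1, \dots, E_j]$ over an i.i.d.\ edge sequence. On $\mathcal{E}$, comparing $M_j$ to its swap-in copy where $E_j$ is replaced by an independent $E_j'$ keeps both hypergraphs in the good set with overwhelming probability, because a uniformly random hyperedge almost never touches the at most $n^{A/2}$ vertices whose $p$-ball could be large, and in that typical case the hypothesis on $f$ bounds the martingale increment by $r(n)$.

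The main obstacle is that the hypothesis of bounded differences is conditional on being inside $\mathcal{E}$, whereas a single new hyperedge could in principle create a dense neighborhood and hence an uncontrolled increment, so a naive application of the usual bounded-differences McDiarmid inequality is not available. This is exactly the situation handled by the strengthened McDiarmid inequality (\autoref{lem:stronger-mcdiarmids}), which gives Bernstein-type tails when the large-increment events occur only with small probability --- here, by a union bound over the $\le n^{A/2}$ already-busy vertices, the probability that a fresh uniform hyperedge causes a bad step is $O(k n^{A/2 - 1}) = o(1)$. Feeding this into \autoref{lem:stronger-mcdiarmids} with per-step bound $r(n)$, total conditional variance $\sigma^2 \lesssim (2dn/k) r(n)^2$, and deviation $t = \delta n r(n)$ yields
\[
\Pr[|f(G) - \E_G f(G)| \ge \delta n\, r(n), \ \mathcal{E}] \le 2\exp\!\left(\frac{-\delta^2 n^2 r(n)^2}{2\sigma^2 + \tfrac{2}{3}r(n)\, t}\right) = 2\exp\!\left(\frac{-\delta^2 n}{4d/k + 2\delta/3}\right),
\]
and combining with $\Pr[\mathcal{E}^c] \le e^{-n^{a/2}}$ closes out the claim. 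This structure also sets up the proof of \autoref{lem:concentration-random-coupled-hypergraphs}: the coupled version will only differ in that the Doob martingale is built over the three independent edge families $E, E_1, E_2$ of \autoref{def:coupled-interpolation}, whose total expected size is $(2-t)dn/k$, which accounts for the $(2-t)d/k$ appearing in the target denominator.
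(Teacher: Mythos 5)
Your broad plan---condition on the vanishing-neighborhood event $\mathcal{E}$ from \autoref{thm:vanish-nghbd}, argue a Bernstein-type tail on the remaining good event, and combine by a union bound---is aligned with the paper's proof. But there are two concrete problems.

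First, you mischaracterize \autoref{lem:stronger-mcdiarmids}. That lemma is a McDiarmid/Bernstein bound under an \emph{unconditional} bounded-differences hypothesis (the increment must be $\le c$ for \emph{every} pair of coordinate values), and the variance saving comes from the \emph{coordinate distribution} being heavily biased toward a single outcome $\chi_0$. It is not a tool for ``Bernstein-type tails when the large-increment events occur only with small probability.'' In your setup, once $E_j$ lands on a busy vertex the increment of the Doob martingale is genuinely uncontrolled, and \autoref{lem:stronger-mcdiarmids} offers no mechanism for discarding or truncating that step. You identify the conditional-bounded-differences obstruction correctly, but the lemma you invoke does not resolve it. The paper's actual fix is a Lipschitz-extension device: it replaces $f$ by
\[
g(G) = \min_{G' \in K_n}\bigl(f(G') + \rho(G,G')\,r(n)\bigr),
\]
where $K_n$ is the good set of hypergraphs and $\rho$ is the edge edit distance. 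This $g$ satisfies the bounded-differences hypothesis \emph{unconditionally} with constant $r(n)$, agrees with $f$ on $K_n$, and can then be fed into \autoref{lem:stronger-mcdiarmids}; the final union bound over $G\notin K_n$ gives the $e^{-n^{a/2}}$ term. Without this extension (or an equivalent truncation argument), your proof does not go through as written.

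Second, the decomposition you choose does not actually match the hypothesis of \autoref{lem:stronger-mcdiarmids}. You reveal $m$ i.i.d.\ hyperedges $E_1,\dots,E_m$ uniform on $[n]^k$ after conditioning on $m\le 2dn/k$; a uniform variable on $[n]^k$ has no dominant outcome, so the ``$1-p$ mass on $\chi_0$'' structure is absent. The paper instead views $G$ as $n^k$ \emph{independent} coordinates, one per possible hyperedge, each a $\poisson\!\bigl(dn/(kn^k)\bigr)$ multiplicity that equals $0$ with probability $1 - O(dn^{1-k}/k)$; this biased-coordinate picture is exactly what the strengthened McDiarmid needs, it avoids conditioning on $m$ altogether, and it is what makes $4d/k$ appear in the denominator. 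Your variance estimate $\sigma^2\lesssim(2dn/k)r(n)^2$ does land on the same numerical bound, but that is a coincidence of the two decompositions, not a verification that your route is valid. To repair your argument you would either adopt the paper's $n^k$-coordinate decomposition plus the Lipschitz extension $g$, or invoke a genuine rare-bad-increment martingale bound (e.g.\ Freedman with truncation), which is not what \autoref{lem:stronger-mcdiarmids} provides.
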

\begin{proof}
The proof follows by essentially the same arguments used in the second part of \cite[Concentration Theorem]{farhi2020quantum}, although we have to derive a strengthening of McDiarmid's inequality (\autoref{lem:stronger-mcdiarmids}), deferred to \autoref{sec:stronger-mcdiarmids}.

Let $K_n$ be the set of hypergraphs over $n$ vertices $V$ with small neighborhoods
\[ K_n = \{G \in \cH_{n,k} \mid \max_{v \in V} |B_G(v,p)| \le n^{A/2} \}. \]
Let $\rho(G,G')$ be equal to $|E \mathbin{\triangle} E'|$ where $\triangle$ is the symmetric difference, for $G=(V,E)$ and $G' = (V, E')$.
Then let 
\[ g(G) = \min_{G' \in K_n} f(G') + \rho(G, G')\,r(n). \]
Now $g$ has the property that $|g(G_1) - g(G_2)| \le r(n)$ whenever $G_1$ differs from $G_2$ by the addition or removal of a single edge.

$G \sim \cH_{n,k,d}$ may be viewed as the agglomeration of $n^k$ different independent random variables, one for each possible hyperedge $e$, each random variable denoting the multiplicity of that hyperedge, distributed as $\poisson(dn/n^k/k)$ 
, since the sum of independent Poisson random variables is itself another Poisson random variable.
And each of these variables is highly biased, being equal to $0$ with probability $\exp(-dn/n^k/k) \ge 1-dn/n^k/k$.
Therefore, applying \autoref{lem:stronger-mcdiarmids} on $g$ as a function of $n^k$ independent variables, simplifying, and applying the bound $2-dn/n^k/k \le 2$,
\[ \Pr_{G \sim \cH_{n,k,d}}[|g(G) - \E_G g(G)| \ge \delta n\,r(n)] \le 2\exp\left(\frac{-\delta^2 n}{4d/k +2\delta/3}\right).  \]

Finally, $f(G) = g(G)$ whenever $G \in K_n$, and by \autoref{thm:vanish-nghbd},
\[ \Pr_{G \sim \cH_{n,k,d}}[G \not\in K_n] \le e^{-n^{a/2}}.\]
Therefore, by a union bound,
\[ \Pr_{G \sim \cH_{n,k,d}}[|f(G) - \E_G f(G)| \ge \delta n\,r(n)] \le 2\exp\left(\frac{-\delta^2 n}{4d/k +2\delta/3}\right) + \exp(-n^{a/2}).  \]
\end{proof}

We now prove~\autoref{lem:concentration-random-coupled-hypergraphs} (restated below) which shows that the expected values for certain functions on coupled hypergraphs also concentrate. More specifically, we will assert that all sufficiently \emph{local} functions of pairs of hypergraphs will concentrate very heavily around the expected value of the function.

\concentrationhypergraph*

\begin{proof}[Proof]
We mostly follow the proof of \autoref{lem:concentration-random-hypergraphs}.

Let $\rho(G_1,G_2,G_1',G_2') = |(E_1 \mathbin{\triangle} E_1') \cup (E_2 \mathbin{\triangle} E_2')|$, where $G_1 = (V,E_1)$, $G_2 = (V,E_2)$, $G_1' = (V,E_1')$, and $G_2' = (V,E_2')$.
Then let 
\[ g(G_1,G_2) = \min_{G_1' \in K_n,G_2' \in K_n} f(G_1',G_2') + \rho(G_1,G_2,G_1',G_2')\,r(n), \]
so that $|g(G_1,G_2) - g(G_1',G_2)| \le r(n)$ whenever $(G_1,G_2)$ differs from $(G_1',G_2')$ by the addition or removal of a single hyperedge in $[n]^k$ %
 from one or both graphs.

To apply \autoref{lem:stronger-mcdiarmids}, we consider $g$ as a function of $n^k$ variables, one for each possible hyperedge.
The set of possible values for each variable is $\{\emptyset, \{G_1\}, \{G_2\}, \{G_1,G_2\}\}$, specifying which of the hypergraphs have that edge.
In the coupled random hypergraph model according to \autoref{def:coupled-interpolation}, where here $G_1 = (V, E \cup E_1)$ and $G_2 = (V, E \cup E_2)$ each edge's multiplicity in $E$ is given by a $\poisson(tdn/n^k/k)$ distribution, and its multiplicities in $E_1$ and $E_2$ are given by $\poisson((1-t)dn/n^k/k)$ distributions, for a total probability of $(1 -\exp(-tdn/n^k/k))(1 -\exp(-(1-t)dn/n^k/k))^2 \le (2-t)dn/n^k/k$ that this edge is in either hypergraph.
Therefore, applying \autoref{lem:stronger-mcdiarmids},
\[ \Pr_{G_1,G_2 \sim \cH_{n,k,d,t}}\left[\left|g(G_1,G_2) - \E_{G_1,G_2} g(G_1,G_2)\right| \ge \delta n\,r(n)\right] \le 2\exp\left(\frac{-\delta^2 n}{4(2-t)d/k +2\delta/3}\right).  \]

Finally, $f(G_1,G_2) = g(G_1,G_2)$ whenever $(G_1,G_2) \in K_n \times K_n$.
Since the marginal distribution of $G_1$ in $G_1,G_2 \sim \cH_{n,k,d,t}$ is the same as the distribution of $G_1 \sim \cH_{n,k,d}$, and the same holds for $G_2$, by a union bound,
\[ \Pr_{G_1,G_2 \sim \cH_{n,k,d,t}}[(G_1,G_2) \not\in K_n \times K_n] \le 
\Pr_{G_1 \sim \cH_{n,k,d}}[G_1 \not\in K_n] + \Pr_{G_2 \sim \cH_{n,k,d}}[G_2 \not\in K_n] 
\le 2e^{-n^{a/2}}\, .\]
Therefore, by another union bound,
\[ \Pr_{G_1,G_2 \sim \cH_{n,k,d,t}}\!\!\left[\left|f(G_1,G_2) - \!\E_{G_1,G_2} f(G_1,G_2)\right| \ge \delta n\,r(n)\!\right] \le 2\exp\!\left(\!\frac{-\delta^2 n}{4(2{-}t)d/k{+}2\delta/3}\!\right)\! {+} 2\exp(-n^{a/2}).  \]

\end{proof}

\subsection{Proof of Theorem~\ref{thm:coup-hypergraph-qaoa-output-overlap}}\label{sec:concentration main proof}

Finally, we are ready to prove~\autoref{thm:coup-hypergraph-qaoa-output-overlap} (restated below) using~\autoref{lem:concentration-local-functions} and~\autoref{lem:concentration-random-coupled-hypergraphs}.

\localoutput*

\begin{proof}[Proof]
By \autoref{lem:concentration-local-functions}, taking $G = G_2$, $m = n$, $C = \ell = 1$, $v_{j,1} = j$, $h(s) = s$, and $r_j = u_j$,
\[\Pr_{\sigma_2} \left[\left|\iprod{\sigma_2,u} - \E_{\sigma_2} \iprod{\sigma_2,u}\right| \ge 2\delta n\right] \le \exp(-\Omega(\delta^2n^{1 - A}))\, ,\]
for every bounded vector $u \in [-1,1]^{n}$, with the probability and expectation over the randomness of the algorithm, where $A$ corresponds to the exponent in \autoref{thm:vanish-nghbd}.

Again by \autoref{lem:concentration-local-functions}, taking $G = G_1$, $m = n$, $C = \ell = 1$, $v_{j,1} = j$, $h(s) = s$, $r_j = (\sigma_2)_j$, and $G' = G_1 \cup G_2$, using the fact that $\iprod{\sigma_1, \sigma_2} = \sum_{i \in [n]} (\sigma_1)_i(\sigma_2)_i$,
\begin{equation}
\label{eq:time-for-mcdiarmids}
    \Pr_{\sigma_1, \sigma_2} \left[\left|\iprod{\sigma_1, \sigma_2} - \E_{\sigma_1, \sigma_2} \iprod{\sigma_1, \sigma_2}\right| \ge 2\delta n\right] \le \exp(-\Omega(\delta^2n^{1 - A})) + \exp(-\Omega(\delta^2n^{1-  A'}))\, ,
\end{equation}
with the probability and expectation over the randomness of the algorithm. In the above, we utilize the fact that $G' \sim \mathcal{H}_{d(2 - t), k, n}$. To bound $A'$, one bounds the $2p$-neighborhood of $G'$ using \autoref{thm:vanish-nghbd} invoked with degree $d(2 - t)\leq 2d$.

We argue now that $f(G_1,G_2) := \E_{\sigma_1, \sigma_2} \iprod{\sigma_1, \sigma_2}$ satisfies a bounded-differences inequality, so that $|f(G_1,G_2) - f(G_1',G_2')| \le 2kn^A$ whenever $\max_{i \in [n]}\max_{G\in\{G_1,G_2,G_1',G_2'\}} |B_G(i,p)| \le n^{A}$ and $|(E(G_1)\mathbin{\triangle}E(G_1')) \cup (E(G_2)\mathbin{\triangle}E(G_2'))| \le 1$.
This is because by the definition of $p$-local, the only coordinates of $\sigma_1$ or $\sigma_2$ that can change in marginal distribution when an edge is changed are those that are within a distance of $p$ from any of the $k$ endpoints of the changed hyperedge, in $G_1$ and $G_1'$, or $G_2$ and $G_2'$ respectively.

Therefore by \autoref{lem:concentration-random-coupled-hypergraphs},
\[ \Pr_{G_1, G_2} \left[ \left|\E_{\sigma_1, \sigma_2} \iprod{\sigma_1, \sigma_2} - \!\E_{G_1,G_2} \E_{\sigma_1, \sigma_2} \iprod{\sigma_1, \sigma_2}\right| \ge 2k\delta' n^{A}n\right] \le 2\exp\left(\frac{-(\delta')^2 n}{4(2-t)d/k +2\delta'/3}\right) + 2\exp(-n^{a/2}), \]
where $a$ is from the statement of \autoref{thm:vanish-nghbd}.

By a union bound and triangle inequality with \eqref{eq:time-for-mcdiarmids} then, and taking $\delta' = \delta/(kn^A)$,
\begin{align*}
\Pr_{G_1, G_2,\sigma_1,\sigma_2} &\left[ \left|\iprod{\sigma_1, \sigma_2} - \E_{G_1,G_2} \E_{\sigma_1, \sigma_2} \iprod{\sigma_1, \sigma_2}\right| \ge 4\delta n\right] \le 
\\ &2\exp\left(\frac{-\delta^2n}{4(2-t)kn^{2A}d +2kn^A\delta/3}\right) + 2\exp(-n^{a/2}) + \exp(-\Omega(\delta^2n^{1-A})) + \\
&\exp(-\Omega(\delta^2n^{1-A'}))\, . 
\end{align*}
\end{proof}

\subsection{Corollaries of concentration analysis}\label{sec:concentration corollaries}
As a corollary to \autoref{thm:coup-hypergraph-qaoa-output-overlap}, we can inherit the concentrated overlap of two \emph{independent} hypergraphs $G_1, G_2 \sim \mathcal{H}_{d, k, n}$. This is essential to reasoning about the fact that the overlap at $t = 0$ between the solutions is desirably small with high probability.

\begin{corollary}[Generic local algorithm output overlap on independent hypergraphs]\label{thm:ind-hypergraph-qaoa-output-overlap}
    Suppose $p$ is as in \autoref{thm:vanish-nghbd}.
    When two random hypergraphs $G_1,\ G_2$ are sampled $i.i.d.$ from $\mathcal{H}_{d, k, n}$ and a generic $p$-local algorithm is run on both hypergraphs, the overlap between the respective solutions $\sigma_1$ and $\sigma_2$ concentrates. That is, $\exists\ \delta' > 0$ and $\gamma' > 0$, such that,
    \begin{equation}\label{eq:ind-hypergraph-qaoa-output-overlap}
        \Pr_{G_1, G_2,\sigma_1,\sigma_2}[|\langle \sigma_1, \sigma_2 \rangle - \lE[\langle \sigma_1, \sigma_2\rangle]| \geq \delta'\cdot n] \leq 2e^{-\delta' n^{\gamma'}}\, ,
    \end{equation}
    where the probability is over both the random sample of hypergraphs and the randomness of the algorithm.
\end{corollary}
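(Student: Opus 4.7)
The plan is to derive this corollary as a direct specialization of Theorem~\ref{thm:coup-hypergraph-qaoa-output-overlap} to $t = 0$. First I would verify that the joint distribution in the corollary coincides with the joint distribution appearing in Theorem~\ref{thm:coup-hypergraph-qaoa-output-overlap} at $t = 0$. From Definition~\ref{def:coupled-interpolation}, when $t = 0$ the shared edge set $E$ is drawn from $\poisson(0)$ and is therefore empty almost surely, so $(G_1, G_2) \sim \cH_{d,k,n,0}$ reduces to two independent draws from $\cH_{d,k,n}$, matching the hypothesis of the corollary. Correspondingly, in the shared-randomness construction of Definition~\ref{def:coupled-runs} at $t^+ = 0$, the common vertex set $L$ is empty with probability one, so $\sigma_1$ and $\sigma_2$ are distributed as two unconditioned independent runs of the algorithm on $G_1$ and $G_2$, which is precisely the setting of the corollary.

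Second, I would substitute $t = 0$ into the bound produced at the end of the proof of Theorem~\ref{thm:coup-hypergraph-qaoa-output-overlap}, which becomes
\[
2\exp\!\left(\frac{-\delta^{2} n}{8 k n^{2A} d + 2 k n^{A}\delta/3}\right) + 2 e^{-n^{a/2}} + e^{-\Omega(\delta^{2} n^{1 - A})} + e^{-\Omega(\delta^{2} n^{1 - A'})},
\]
where $A, A' \in (0,1)$ and $a > 0$ are determined by $(k, d, \tau)$ via Lemma~\ref{thm:vanish-nghbd} (with $A'$ coming from the union graph $G_1 \cup G_2$, which has degree at most $2d$). Since each summand is of the form $\exp(-\Omega(n^{\gamma}))$ for some $\gamma > 0$, taking $\gamma'$ to be a sufficiently small positive exponent (e.g. $\gamma' = \min\{1 - 2A,\, a/2\}/2$) and $\delta'$ a suitable absolute constant allows the entire sum to be bounded by $2 e^{-\delta' n^{\gamma'}}$, yielding the stated inequality.

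There is essentially no technical obstacle here: all the hard work---the internal-randomness concentration via Lemma~\ref{lem:concentration-local-functions}, the bounded-differences argument on coupled instances via Lemma~\ref{lem:concentration-random-coupled-hypergraphs}, and the vanishing-neighborhood bound \autoref{thm:vanish-nghbd}---has already been performed in the proof of Theorem~\ref{thm:coup-hypergraph-qaoa-output-overlap}. The only bookkeeping is to check that the $t = 0$ substitution preserves a nontrivial exponential rate, which it does since setting $t = 0$ merely replaces the factor $(2-t)$ by $2$ in the denominator of the leading Bernstein-type term. A remark could note that an alternative, equally short proof simply re-runs the argument of Theorem~\ref{thm:coup-hypergraph-qaoa-output-overlap} while invoking the single-hypergraph concentration Lemma~\ref{lem:concentration-random-hypergraphs} in place of Lemma~\ref{lem:concentration-random-coupled-hypergraphs}; this produces the same rate and is a notational variant of the specialization argument above.
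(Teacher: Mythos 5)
Your proposal is correct and takes exactly the same route as the paper: specialize Theorem~\ref{thm:coup-hypergraph-qaoa-output-overlap} to $t = 0$, noting that the $0$-coupled distribution is two independent draws and that $0$-shared randomness gives unconditioned independent runs. The paper's own proof is just the one-line statement of this specialization; your additional bookkeeping (checking that $E$ and $L$ are empty, substituting $t=0$ into the final bound, and extracting a uniform $\gamma'$) fills in detail the paper leaves implicit but does not change the argument.
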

\begin{proof}
    By \autoref{thm:coup-hypergraph-qaoa-output-overlap} at $t = 0$ with $G_1$ independent of $G_2$.
\end{proof}

We now assert the concentration of hamming weight of the solutions output by a generic $p$-local algorithm over the randomness of the algorithm. This result is equivalent to the second part of \cite[Concentration Theorem]{farhi2020quantum} and \cite[Lemma 3.2]{chen2019suboptimality}. In our case, it is a corollary of \autoref{lem:concentration-local-functions}.

\begin{corollary}[Concentration of hamming weight of generic local algorithm output]\label{cor:conc-p-local-hamming-wt}
    Given a $d$-sparse $k$-uniform hypergraph $G \sim \mathcal{H}_{d, k, n}$, suppose that,
    \begin{equation*}
        \Pr_{G}[\max_i B_G(v_i, 2p) \geq n^A] \leq e^{-n^a}\, ,
    \end{equation*}
    and
    \begin{equation*}
        \Pr_{G}[\max_i B_G(v_i, p) \geq n^{\frac{A}{2}}] \leq e^{-n^{\frac{a}{2}}}\, ,
    \end{equation*}
    Let $\sigma$ be the output of a generic $p$-local algorithm. Then, $\exists$ $\gamma > 0$ such that, $\forall\ \delta > 0$,
    \begin{equation}\label{eq:conc-hamming-wt-qaoa}
        \Pr_{G, \sigma}\left[\left||\sigma| - \E_{G, \sigma}[|\sigma|]\right| \geq \delta n\right] \leq e^{-\delta^2 n^{\gamma}}\, ,
    \end{equation}
    where %
    $|\sigma| = \sum_{i=1}^n \frac{1}{2}(1 - \sigma_i)$ is the number of $-1$s in $\sigma$.
\end{corollary}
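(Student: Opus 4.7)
The key observation is that $|\sigma| = \frac{n}{2} - \frac{1}{2}\sum_{i=1}^n \sigma_i$, so the corollary is equivalent to concentration of the linear statistic $S(G,\sigma) := \sum_{i=1}^n \sigma_i$ under both sources of randomness. The plan is to split the deviation of $S$ into an ``internal'' part (fluctuations of $\sigma$ for fixed $G$) and an ``instance'' part (fluctuations of $\E_\sigma S$ over the random $G$), control each by one of the two concentration lemmas already proved, and combine via triangle inequality and union bound. This mirrors the proof of \autoref{thm:coup-hypergraph-qaoa-output-overlap}, but for the simpler linear functional $S$.

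For the internal part, I would invoke \autoref{lem:concentration-local-functions} with the choices $m = n$, $\ell = C = 1$, $v_{j,1} = j$, $h(s) = s$, and the trivial mixing vector $r_j \equiv 1$ (which is $(G,p)$-mixed over singletons with respect to $\sigma$ for any $G$, as noted in the remark after \autoref{def:local indp vec}). By the hypothesis $\Pr_G[\max_i |B_G(v_i, 2p)| \geq n^A] \leq e^{-n^a}$, with probability at least $1 - e^{-n^a}$ the graph $G$ satisfies the neighborhood bound required by the lemma, and in that case
\[
\Pr_\sigma\!\left[\,\bigl|S(G,\sigma) - \E_\sigma S(G,\sigma)\bigr| \geq \tfrac{\delta}{2} n\,\right] \leq \exp\bigl(-\Omega(\delta^2 n^{1-A})\bigr).
\]

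For the instance part, I would show that $F(G) := \E_\sigma S(G,\sigma)$ is a bounded-local-differences functional of $G$. If $G$ and $G'$ differ in a single hyperedge $e$, then by the local distribution determination property in \autoref{def:local}, the marginal distribution of $\sigma_i$ is identical under $G$ and $G'$ for every $i$ whose $p$-neighborhood is untouched, and the only such $i$ that can change lie within distance $p$ of one of the $k$ endpoints of $e$. On the event $\max_i |B_G(i,p)| \leq n^{A/2}$ (of probability at least $1 - e^{-n^{a/2}}$ by hypothesis), the number of affected coordinates is at most $k n^{A/2}$, so $|F(G) - F(G')| \leq 2k n^{A/2} =: r(n)$. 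Applying \autoref{lem:concentration-random-coupled-hypergraphs} at $t = 1$ (which collapses the coupled model to a single random hypergraph $G_1 = G_2 \sim \cH_{n,k,d}$) yields
\[
\Pr_G\!\left[\,|F(G) - \E_G F(G)| \geq \tfrac{\delta}{2}n\,\right] \leq 2\exp\!\left(\frac{-\delta^{\prime 2} n}{4d/k + 2\delta'/3}\right) + 2e^{-n^{a/2}},
\]
where $\delta' = \delta/(4k n^{A/2})$.

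Combining the two estimates by a union bound and the triangle inequality $|S(G,\sigma) - \E_{G,\sigma}S| \leq |S(G,\sigma) - \E_\sigma S(G,\sigma)| + |F(G) - \E_G F(G)|$, and recalling that $||\sigma| - \E|\sigma|| = \tfrac12|S - \E S|$, yields a bound of the form $e^{-\delta^2 n^\gamma}$ for some $\gamma > 0$ depending only on $A$ and $a$. I do not expect a substantial obstacle: the two lemmas were designed precisely for this kind of two-stage argument, and the only bookkeeping is to choose $\gamma$ small enough to dominate the $n^{A/2}$ and $n^A$ factors that appear in the exponents (e.g., any $\gamma \in (0, \min(1-A, a/2))$ with an appropriate constant will do).
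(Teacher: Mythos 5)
Your proposal is correct and follows essentially the same two-stage route as the paper: write $|\sigma|$ as an affine function of $S=\sum_i\sigma_i$, control the internal fluctuation of $S$ at fixed $G$ via \autoref{lem:concentration-local-functions} with $m=n$, $\ell=C=1$, $v_{j,1}=j$, $h(s)=s$, $r_j\equiv1$, then show $\E_\sigma S$ has bounded local differences and apply hypergraph concentration, finishing with a union bound and triangle inequality. The only (cosmetic) divergences are that the paper invokes the uncoupled \autoref{lem:concentration-random-hypergraphs} directly rather than the coupled lemma specialized to $t=1$, and the paper writes the bounded-difference constant as $2kn^A$ where your $2kn^{A/2}$ (from the $p$-ball bound, which is what \autoref{lem:concentration-random-hypergraphs} actually stipulates) is the sharper and more internally consistent estimate; both lead to the same final form of the exponent.
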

\begin{proof}
    Instantiate \autoref{lem:concentration-local-functions} by asserting that the following holds for the graph $G$,
    \begin{equation}\label{eq:conc-hamming-wt-single-instance}
        \Pr_{\sigma}\left[\left| \sum_{i=1}^n \sigma_i - \E_{\sigma}[\sigma]\right| \ge 2\delta n\right] \leq \exp(-\Omega(\delta^2n^{1-A}))\, .  
    \end{equation}
    with $\ell = 1$, $C = 1$, $m = n$, $v_{j,1} = j$, $r_j = 1$, and $h(s) = s$. We now invoke \autoref{lem:concentration-random-hypergraphs} with $f = \E_{\sigma}[|\sigma|]$. Now, notice that the definition of generic $p$-local algorithms (\autoref{def:local}) implies that on adding or removing a hyperedge $e$, $f$ can change by no more than $2kn^A$, since changes are restricted to the $p$-neighborhood. Therefore, provided $G$ is a graph with $\max_i B(v_i, 2p) \leq n^A$, we obtain,
    \begin{align*}
        \Pr_{G, \sigma}&\left[\left|\E_{\sigma}[|\sigma|] - \E_{G}\E_{\sigma}[|\sigma|]\right| \geq 2k\delta'n^An\right] \leq 
        2\exp\left(\frac{-\delta'^2 n}{4d/k +2\delta'/3}\right) + \exp(-n^{a/2})\, .
    \end{align*}
    By setting $\delta' = \delta/kn^A$, and taking a triangle inequality over \autoref{eq:conc-hamming-wt-single-instance} and a union bound,
    \begin{align*}
        \Pr_{G, \sigma}&\left[\left||\sigma| - \E_{G, \sigma}[|\sigma|]\right| \geq 2\delta n\right] \leq 2\exp\left(\frac{-\delta^2 n^{1-2A}}{4dk +2\delta k/(3n^A)}\right) + \exp(-n^{a/2}) + \exp(-\Omega(\delta^2n^{1-A}))\, .
    \end{align*}
\end{proof}

The last corollary we obtain from \autoref{lem:concentration-local-functions} is that the objective function value, that is, the energy corresponding to the spin configuration output by a generic $p$-local algorithm, also concentrates heavily around the expected value. This is equivalent to \cite[Lemma 3.1]{chen2019suboptimality} with substantially stronger concentration.

\begin{corollary}[Concentration of energy of generic local algorithm output]\label{cor:conc-p-local-energy}
    Given a $d$-sparse $k$-uniform hypergraph $G \sim \mathcal{H}_{d, k, n}$, suppose that %
    \begin{equation*}
        \Pr_{G}[\max_i B_G(v_i, 2p) \geq n^A] \leq e^{-n^a}\, ,
    \end{equation*}
    and
    \begin{equation*}
        \Pr_{G}[\max_i B_G(v_i, p) \geq n^{\frac{A}{2}}] \leq e^{-n^{\frac{a}{2}}}\, .
    \end{equation*}
    Let $\sigma$ be the output of a generic $p$-local algorithm. Then, $\exists \gamma > 0$, such that, $\forall \delta > 0$,
    \[
        \Pr_{G, \sigma}\left[\left|H^G_{d, k, n}(\sigma) - \E_{G, \sigma}[H^G_{d, k, n}(\sigma)]\right| \geq \delta n\right] \leq e^{-\delta^2 n^{\gamma}}\, .
    \]
\end{corollary}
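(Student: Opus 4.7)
The plan is to mirror the proof of Corollary~\ref{cor:conc-p-local-hamming-wt}, replacing the hamming weight $|\sigma|$ with the energy $H^G(\sigma) = -\sum_{i \in [m]} \prod_{j=1}^k \sigma_{v_{i,j}}$. Both are sums of bounded local functions, so the same two-step strategy applies: first I would show concentration over the algorithm's internal randomness for a fixed $G$ via \autoref{lem:concentration-local-functions}, then show concentration of the conditional mean $\E_\sigma[H^G(\sigma)]$ over the choice of $G$ via \autoref{lem:concentration-random-hypergraphs}, and finally combine by a triangle inequality and union bound.

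For the first step, I would instantiate \autoref{lem:concentration-local-functions} with $\ell = k$, taking $v_{j,1},\dots,v_{j,k}$ to be the endpoints of the $j$-th hyperedge, $h(s_1,\dots,s_k) = -\prod_i s_i$, and the trivial locally mixed vector $r_j \equiv 1$ (a constant is independent of everything and hence trivially $(G,p)$-mixed). On the good event $\max_v |B_G(v,p)| \le n^{A/2}$, the degree of every vertex is bounded by $n^{A/2}/(k{-}1)$, so the multiplicity constant in the lemma satisfies $C \le n^{A/2}$; after also conditioning on the Poisson-concentrated event $m \le 2dn/k$ (which fails with only exponentially small probability), the lemma yields
\[\Pr_\sigma\!\bigl[|H^G(\sigma) - \E_\sigma H^G(\sigma)| \ge \delta n\bigr] \le \exp\!\bigl(-\Omega(\delta^2 n^{1-3A/2})\bigr).\]

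For the second step, I would verify that $f(G) := \E_\sigma[H^G(\sigma)]$ obeys a bounded-differences condition of the form required by \autoref{lem:concentration-random-hypergraphs}. Adding or removing a single hyperedge $e$ alters the direct summand indexed by $e$ by at most $2$; by the $p$-locality clause of \autoref{def:local}, the only spins whose marginals can shift are those within distance $p$ of an endpoint of $e$, and hence the only other hyperedges whose expected product can change are those containing one of these at most $k\cdot n^{A/2}$ vertices. Since each such vertex has degree at most $n^{A/2}$, we obtain $|f(G) - f(G')| \le O(k n^A) =: r(n)$. Rescaling the deviation in \autoref{lem:concentration-random-hypergraphs} by $r(n)$ then gives
\[\Pr_G\!\bigl[|f(G) - \E_G f(G)| \ge \delta n\bigr] \le 2\exp\!\bigl(-\Omega(\delta^2 n^{1-2A})\bigr) + e^{-n^{a/2}}.\]
A triangle inequality and union bound -- together with the hypothesized neighborhood-size events to absorb the probability that $G$ fails the goodness condition -- deliver the claim with $\gamma := \min(1 - 3A/2,\, 1 - 2A,\, a/2) > 0$, which is positive as long as $A < 2/3$, matching the usage downstream. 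The proof is essentially mechanical; the only point that requires a bit of care is the $r(n) = O(n^A)$ blowup in the second step, which forces the deviation to be rescaled and is responsible for the exponent being $n^\gamma$ rather than $n$. No substantive obstacle arises beyond this, since both concentration lemmas were tailored precisely to sums of this ``local product'' form.
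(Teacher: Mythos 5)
Your overall two-step structure (concentration over $\sigma$ via \autoref{lem:concentration-local-functions}, then over $G$ via \autoref{lem:concentration-random-hypergraphs}, combined by triangle inequality and union bound) and your choices of $\ell = k$, $h = -\prod_i s_i$, $r_j \equiv 1$ match the paper's proof. There is, however, one genuine gap: the claim that the neighborhood bound $\max_v |B_G(v,p)| \le n^{A/2}$ implies $\deg(v) \le n^{A/2}/(k-1)$, used to set the multiplicity parameter $C$. This implication is false in general: in the Poisson hypergraph model a vertex may lie in many hyperedges whose vertex sets overlap heavily (or even coincide), so the degree can be large while the $p$-neighborhood stays small. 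The multiplicity constant $C$ in \autoref{lem:concentration-local-functions} is exactly the maximum number of hyperedges containing a vertex, which cannot be read off from $|B_G(v,p)|$ alone. The paper avoids this by bounding the degree directly: each vertex's degree is approximately $\poisson(d)$-distributed, and a Chernoff bound plus a union bound over $V(G)$ gives $\deg(v) \le d + n^{a/2} \le 2n^{a/2}$ for all $v$ with probability $\ge 1 - n\exp(-n^a/(2(d+n^{a/2})))$, and takes $C = 2n^{a/2}$. You should add this separate degree-concentration event to your good-event conditioning; once you do, the remaining bookkeeping (the $r(n)$ bound for step two and the final exponent $\gamma$) goes through as you describe, with the constants shifting slightly (the paper's $r(n) = O(k n^{A + a/2})$ and its constraint $2a + A < 1$ rather than your $A < 2/3$).
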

\begin{proof}
    For sufficiently large $n$, the degree of the vertices in $G$ are distributed as $\poisson(d)$. Therefore, by a standard Chernoff bound for the Poisson distribution~\cite[Theorem 1]{canonne2017short}, %
    \[
        \Pr_{\poisson(d)}\left[\text{deg}(v) \geq d + n^{\frac{a}{2}}\right] \leq \exp\left(\frac{-n^a}{2\left(d + n^{a/2}\right)}\right)\, .
    \]
    Applying a union bound to the above for every vertex $v \in V(G)$ yields that the degree of every vertex can be upper bounded by $d + n^{\frac{a}{2}} \leq 2n^{\frac{a}{2}}$, for sufficiently large $n$. \\
    An instantiation of \autoref{lem:concentration-local-functions} on a graph $G$ chosen so that the degree of every vertex is not more than $2n^{\frac{a}{2}}$ implies that,
    \begin{equation}\label{eq:conc-energy-random-instance}
        \Pr_{\sigma}\left[\left|H^G_{d, k, n}(\sigma) - \E_{\sigma}[H^G_{d, k, n}(\sigma)]\right| \geq 2 \delta n\right] \leq \exp(-\Omega\left(\frac{\delta^2dn^{1 - A - a/2}}{2k}\right))\, ,
    \end{equation}
    where $m = dn$, $\ell = k$, $C = 2n^{\frac{a}{2}}$ and $h_j(s_{j, 1},\dots s_{j, k}) = -\prod_{i=1}^{\ell}s_{j, i}$, where $s_{j, i}$ denotes the spin of the $i$-th vertex of the $j$-th hyperedge. \\
    We instantiate \autoref{lem:concentration-random-hypergraphs} with $f = \E_{\sigma}[H_{d, k, n}]$.  Note that if we change one hyperedge $e$ of $G$, then because of the fact that a generic $p$-local algorithm will only affect the joint distribution in a $p$-neighborhood around every vertex, we notice that $f$ cannot change by more than $2Ckn^A = 4kn^{A + \frac{a}{2}}$ when $G$ satisfies the bounded neighborhood proposition in the hypothesis. Therefore,
    \begin{align*}
        \Pr_{G, \sigma}&\left[\left|\E_{\sigma}[H^G_{d, k, n}(\sigma)] - \E_{G}\E_{\sigma}[H^G_{d, k, n}]\right| \geq 2Ck\delta'n^An\right] \leq \\
        &2\exp\left(\frac{-\delta'^2 n}{4d/k +2\delta'/3}\right) + \exp(-n^{a/2}) + \exp\left(\frac{-n^a}{2\left(d + n^{a/2}\right)}\right)\, . 
    \end{align*}
    By a union bound and triangle inequality with \autoref{eq:conc-energy-random-instance}, followed by taking $\delta' = \delta/(Ckn^A)$,
    \begin{align*}
        &\Pr_{G, \sigma}\left[\left|H^G_{d, k, n}(\sigma) - \E_{G, \sigma}[H^G_{d, k, n}(\sigma)]\right| \geq 4\delta n \right] \leq \\
        &2\exp\!\left(\frac{-\delta^2n^{1-2A-a}}{4dk {+} \frac{2}{3}k\delta{/}n^{A{+}a/2}} \right)\! + \exp(-n^{a/2}) + \exp\!\left(\frac{-n^a {+} \log(n)}{2\left(d {+} n^{a/2}\right)}\right)\! + \exp\!\left(\frac{-\Omega(\delta^2dn^{1 - A - a/2})}{2k}\right)\, .
    \end{align*}
    Note that we can guarantee that $2a + A < 1$ with an appropriate choice of $a, A > 0$, which is made possible by choosing an appropriate value of $\tau > 0$ in \autoref{param}. For an explicit characterization, refer to \autoref{sec:proof-vanishing-nhbhd} and \cite[Equation (75), Neighborhood Size Theorem]{farhi2020quantum}. 
\end{proof}

\section{Proofs of main theorems} \label{sec:proofs-main-thms}
We conclude by putting the previous results together to prove our main results. As stated prior, we work in the setting of \autoref{param}.

\subsection{Proof of Theorem~\ref{thm:obstruction-1}}

We establish that, as a function of $t$, the expected overlap between solutions output by runs of a generic local algorithm on $t$-coupled instances $G_1$ and $G_2$ with $t$-shared randomness is continuous.

\begin{lemma}[Continuity of expected overlap]\label{lem:cont-overlap}
    If $G_1, G_2 \sim \mathcal{H}_{d, k, n, t}$, and $\sigma_1, \sigma_2$ are the random outputs of $t$-shared randomness runs of a generic $p$-local algorithm on each hypergraph, 
    \[
        \lE_{G_1, G_2, \sigma_1, \sigma_2}\left[R_{n, t}(\sigma_1, \sigma_2)\right] \text{ is a continuous function of $t$ for all $t \in (0,1)$}\, .
    \]
\end{lemma}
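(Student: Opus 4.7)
The plan is to establish continuity via a coupling argument that shows the joint distribution of all random ingredients varies continuously in $t$, and to then exploit the boundedness $|R_{n,t}| \le 1$. Concretely, I would build a ``master coupling'' driving the experiments at every parameter value $t$ simultaneously: introduce three independent Poisson point processes on $[n]^k \times [0,1]$ (with intensity $dn/(k n^k)$ per hyperedge), one for the shared edges and one for each private-edge set, and read off $E(t)$ as the points with time coordinate at most $t$, and $E_i(t)$ as the points with time coordinate at most $1-t$. For each vertex $v$ draw one uniform $U_v \in [0,1]$ and put $v \in L$ iff $v \in L^+$ and $U_v \le t$ (interpreting $t^+ = t$, per \autoref{def:coupled-runs}). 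Finally, draw the internal randomness of the generic $p$-local algorithm once and reuse it across all values of $t$.

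Under this coupling, for any two parameters $t,t' \in (0,1)$ the symmetric difference of the edge sets is a sum of Poisson random variables with total mean $O(|t-t'|\, dn)$, and the set of vertices whose $L$-membership flips between $t$ and $t'$ has expected size $O(|t-t'|\, n)$. By \autoref{def:local}, the assignment $(\sigma_i)_v$ depends only on $B_{G_i}(v,p)$ together with the local randomness restricted there; hence if the $p$-neighborhoods of $v$ in both $G_1$ and $G_2$ are unchanged between the two parameters and the relevant $L$-decisions in those neighborhoods also agree, then the marginal pair $((\sigma_1)_v,(\sigma_2)_v)$ is realized identically under the master coupling at the two parameter values. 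By \autoref{thm:vanish-nghbd}, each perturbed hyperedge touches only $O(n^A)$ vertices' $p$-neighborhoods with overwhelming probability (and at most $n$ deterministically), so the expected number of vertices $v$ at which the coupled outputs differ is $O(|t-t'|\, n^{1+A})$.

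Combining, since each differing vertex changes the empirical overlap by at most $2/n$,
\[
\Bigl|\E\, R_{n,t}(\sigma_1,\sigma_2) \;-\; \E\, R_{n,t'}(\sigma_1',\sigma_2')\Bigr| \;\le\; \tfrac{2}{n}\cdot O(|t-t'|\, n^{1+A}) \;=\; O(|t-t'|\, n^{A}),
\]
which vanishes as $t' \to t$ for every fixed $n$, yielding continuity on $(0,1)$.

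The main technical nuisance will be that $L^+$ is itself a function of the hypergraphs, so an edge change can move vertices into or out of $L^+$ and thereby modify which Bernoulli trial determines their $L$-membership; moreover \autoref{def:coupled-runs} defines $(\sigma_1,\sigma_2)$ via a conditional distribution given the common value of $\sigma$ on $L$, and this conditioning must be threaded through the coupling carefully. Each such edge change, however, only touches $O(n^A)$ vertices' $p$-neighborhoods, so the additional ``bookkeeping'' contributes only terms already absorbed into the bound above, and no ingredients beyond \autoref{thm:vanish-nghbd} and the defining properties of generic $p$-local algorithms are needed.
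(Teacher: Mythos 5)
Your proposal takes a genuinely different (and much heavier) route than the paper's, which is a short dominated-convergence argument: by linearity of expectation and vertex exchangeability one has $\E[R_{n,t}(\sigma_1,\sigma_2)] = \E[(\sigma_1)_1(\sigma_2)_1]$, which the paper then expands as a countable sum over hypergraph pairs $(H_1,H_2)$ of the sampling probability under $\cH_{d,k,n,t}$ times the conditional expectation of $(\sigma_1)_1(\sigma_2)_1$ given $(G_1,G_2)=(H_1,H_2)$; each term is a continuous function of $t$ and the sum converges absolutely, so the result follows without any neighborhood bounds or couplings.

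Your coupling argument also has a gap at its central step. You write ``draw the internal randomness of the generic $p$-local algorithm once and reuse it across all values of $t$,'' and then conclude that when a vertex's $p$-neighborhoods in $G_1,G_2$ and the nearby $L$-decisions are unchanged, ``the marginal pair $((\sigma_1)_v,(\sigma_2)_v)$ is realized identically.'' This presupposes that the algorithm is a deterministic local function of its input together with a global noise source. But \autoref{def:local} is deliberately a purely distributional definition --- local distribution determination plus local independence, with no underlying noise model --- precisely so that it can capture $\QAOA$, which by \autoref{prop:p-local-vs-factors} cannot be realized as a local function of shared classical randomness. Equality of local marginal distributions does not by itself yield equality of realizations, and ``the internal randomness'' of a generic local algorithm is not an object you are permitted to fix. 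The step can be repaired by building the coupling explicitly: take $L$ to be the set of vertices whose relevant local data is unchanged between parameters $t$ and $t'$, note that the joint marginals of the outputs on $L$ coincide by local distribution determination, sample a single common value on $L$, and extend to $V\setminus L$ by conditional sampling on each side; replacing $p$-balls by $2p$-balls so that the conditional step in \autoref{def:coupled-runs} is also controlled, your neighborhood accounting then goes through. As written, though, the ``reuse the randomness'' step is exactly where factors of i.i.d.\ and generic local algorithms come apart, and it cannot be glossed over.
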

\begin{proof}
We express using linearity of expectation
\[ \E_{G_1,G_2,\sigma_1,\sigma_2} \left[R_{n,t}(\sigma_1,\sigma_2)\right]
= \frac{1}{n}\sum_{i \in [n]} \E_{_{G_1,G_2,\sigma_1,\sigma_2}} \left[(\sigma_1)_i(\sigma_2)_i\right]
= \E_{G_1,G_2,\sigma_1,\sigma_2} \left[(\sigma_1)_1(\sigma_2)_1\right]
. \]
Let $s(H_1,H_2)$ be the expectation value $\E_{\sigma_1,\sigma_2} (\sigma_1)_1(\sigma_2)_1$ when $\sigma_1$ and $\sigma_2$ are the results of a pair of runs of the $p$-local algorithm with $t$-shared randomness on $H_1$ and $H_2$ respectively.
Then 
\[ \E_{G_1,G_2,\sigma_1,\sigma_2} \left[R_{n,t}(\sigma_1,\sigma_2)\right] 
= \sum_{H_1, H_2} \Pr_{G_1,G_2 \sim \cH_{n,k,d,t}\hspace{-1em}}[H_1 = G_1 \wedge H_2 = G_2]\;  s(H_1,H_2). \]
Since the probability of each combination of $G_1$ and $G_2$ being sampled is a continuous function of $t$, so then is $\E_{G_1,G_2,\sigma_1,\sigma_2} \left[R_{n,t}(\sigma_1,\sigma_2)\right]$.
\end{proof}

Note that at $t = 0$, the hypergraphs $G_1, G_2 \sim \mathcal{H}_{d, k, n, t}$ are independent of each other. We will show that there is very little overlap between the outputs $\sigma_1, \sigma_2$ of a generic local algorithm on each instance, if both solutions $\sigma_1$ and $\sigma_2$ are better approximations to the optimum value by a factor of $(1 - \frac{\eta_0}{P(k)})$. This follows directly from a combination of \cite[Lemma 3.3]{chen2019suboptimality} and \autoref{lem:concentration-local-functions}. We begin by restating \cite[Lemma 3.3]{chen2019suboptimality}.

\begin{lemma}[Hamming weight of near-optimal solutions, {\cite[Lemma 3.3]{chen2019suboptimality}}]\label{lem:overlap-t-0-cgpr}
    Given two independent and uniformly random hypergraphs $G_1, G_2 \sim \mathcal{H}_{d, k, n}$, and solutions $\sigma_1$ and $\sigma_2$, such that, 
    \[
        \frac{H^{G_i}(\sigma_i)}{n} > M(k, d)\left(1 - \frac{\eta_0}{P(k)}\right)\, ,
    \]
    for $i \in \{1, 2\}$ and any $\eta_0 > 0$. Then, $|\sigma_i| < d^{-\frac{1}{2k}}(4\eta_0)^{\frac{1}{k}}$ with probability $\geq 1 - O(e^{-n})$.
\end{lemma}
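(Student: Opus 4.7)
The claim restates Lemma 3.3 of \cite{chen2019suboptimality}, so the cleanest route in the paper is simply to invoke it; nonetheless, the natural plan from first principles is a large-deviations argument combining Hoeffding's inequality on the Hamiltonian with a union bound over configurations of fixed bias.

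The first step is to condition on the number of hyperedges $m \sim \poisson(dn/k)$, whose fluctuations contribute only a $e^{-\Omega(n)}$ correction. For a fixed configuration $\sigma \in \{-1,1\}^n$ of normalized bias $\alpha := |\sum_i \sigma_i|/n$, observe that $H^G(\sigma) = -\sum_{e \in E(G)}\prod_{v \in e}\sigma_v$ is a sum of $m$ independent $\pm 1$-valued random variables, each with expectation $-\alpha^k$ since $k$ is even and the hyperedges are drawn uniformly from $[n]^k$. Hence $\E_G[H^G(\sigma)] = -m\alpha^k$, and Hoeffding's inequality yields
\[
 \Pr_G\!\left[H^G(\sigma) \ge nM(k,d)\!\left(1 - \tfrac{\eta_0}{P(k)}\right)\right] \le \exp\!\left(-\tfrac{k}{2dn}\,\tau^2\right),
\]
where $\tau := nM(k,d)(1 - \eta_0/P(k)) + m\alpha^k \ge n\sqrt{d/k}(P(k) - \eta_0) + dn\alpha^k/k$ after substituting the asymptotic form $M(k,d) \sim \sqrt{d/k}\,P(k)$ from~\eqref{eq:diluted-to-mean-field}.

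Next, expand $\tau^2$ as a sum of three non-negative pieces---a ``pure'' term $n(P(k){-}\eta_0)^2/2$, a cross term $n\sqrt{d/k}(P(k){-}\eta_0)\alpha^k$, and a ``bias'' term $nd\alpha^{2k}/(2k)$---and take a union bound over the at most $\binom{n}{(1+\alpha)n/2} \le 2^n$ configurations of bias $\alpha$. The union bound contributes $n\log 2$ to the log-probability; for the overall probability to become $e^{-\Omega(n)}$ it suffices that one of the Hoeffding terms exceeds $(1 + \log 2)n$. In the small-$\eta_0$ regime the cross term is dominant, forcing $\alpha^k\sqrt{d/k}(P(k) - \eta_0) > 1 + \log 2$; rearranging and absorbing $k$-dependent constants yields the claimed bound $\alpha < d^{-1/(2k)}(4\eta_0)^{1/k}$. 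Finally, a union over a $1/\poly(n)$-spaced grid of bias values $\alpha$, combined with the Poisson fluctuation bound, delivers the overall failure probability $O(e^{-n})$.

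The main obstacle is the precise tracking of constants so that the $(4\eta_0)^{1/k}$ dependence emerges as stated, rather than a weaker $O(d^{-1/(2k)})$ bound independent of $\eta_0$. This requires isolating the $\eta_0$-linear cross term from the other two pieces of $\tau^2$, and one must also carefully control the $o(\sqrt{d})$ error in the asymptotic form of $M(k,d)$ to ensure the bound remains valid for every finite $d \ge d_0$ rather than only in the $d \to \infty$ limit.
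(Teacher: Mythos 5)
The paper does not prove this lemma itself; it is quoted verbatim from Chen et al.~and used as a black box (e.g.\ in \autoref{lem:overlap-t-0}), so there is no in-paper argument to compare against, and I assess your blind attempt on its own terms.

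The Hoeffding-plus-union-bound route cannot produce the claimed $(4\eta_0)^{1/k}$ dependence, and you understate the problem when you call it ``precise tracking of constants''; it is a structural failure of the first-moment method. Your Hoeffding exponent is
\[
\frac{n(P(k)-\eta_0)^2}{2} + n\sqrt{d/k}\,(P(k)-\eta_0)\,\alpha^k + \frac{dn\,\alpha^{2k}}{2k}
\;=\; \frac{n}{2}\bigl[(P(k)-\eta_0) + \sqrt{d/k}\,\alpha^k\bigr]^2,
\]
while the union bound over bias-$\alpha$ configurations costs roughly $n\ln 2$ (more precisely $n\ln 2 - n\alpha^2/2 + O(n\alpha^4)$). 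Because $P(k) < \sqrt{2\ln 2}$ strictly for every finite $k$ --- the Parisi ground state lies strictly below the annealed first-moment threshold --- the pure term $\tfrac{n}{2}P(k)^2$ falls short of $n\ln 2$ by a constant multiple of $n$. At the threshold $\alpha = d^{-1/(2k)}(4\eta_0)^{1/k}$ the cross term is $O(n\eta_0)$, the bias term is $O(n\eta_0^2)$, and the entropy refinement supplies only $O(n\,\eta_0^{2/k}d^{-1/k})$; all three vanish as $\eta_0 \to 0$ and so cannot close a constant-sized deficit. Hence the first moment is exponentially \emph{large}, not small, for all $\alpha$ up to roughly $\bigl[(\sqrt{2\ln 2}-P(k)+\eta_0)\sqrt{k}\bigr]^{1/k} d^{-1/(2k)}$ --- a bound whose prefactor is bounded below by a positive constant independent of $\eta_0$, rather than shrinking like $(4\eta_0)^{1/k}$. (Your claim that the cross term dominates is also inverted at this scale: the pure term dwarfs it.) The $\eta_0$-dependence is genuinely a consequence of mean-field spin-glass structure; Chen et al.~obtain it via a Guerra--Toninelli interpolation (of the kind the present paper develops in \autoref{sec:signed-interpolation}) together with a Parisi-type comparison of restricted free energies over fixed-magnetization configurations in the $k$-spin mean-field model, not via an annealed bound on the diluted Hamiltonian.
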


The above lemma allows us to bound the overlap between two instances when presented in conjunction with a concentration argument about the hamming weight of instances produced by a generic local algorithm.
\begin{lemma}[Small overlap at $t = 0$]\label{lem:overlap-t-0}
    For any two hypergraphs $G_1, G_2 \sim \mathcal{H}_{d, k, n, 0}$, let $\sigma_1, \sigma_2$ be the random outputs of two $0$-shared randomness runs of a generic $p$-local algorithm on the instances. If $\sigma_1$ and $\sigma_2$ satisfy the optimality of \autoref{lem:overlap-t-0-cgpr}, then
    \[
        R_{n, 0}(\sigma_1, \sigma_2) \leq d^{-\frac{1}{k}}\, ,
    \]
    with probability $\geq 1 - e^{-O(n^{\gamma})}$.
\end{lemma}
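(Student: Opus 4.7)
The plan is to bound the bad event $E \cap \{R > d^{-1/k}\}$ directly, where $E$ is the event that both $\sigma_1$ and $\sigma_2$ are $(1 - \eta_0/P(k))$-optimal. The argument rests on two structural observations and then a union bound over a few concentration failures. First, since $G_1$ and $G_2$ are independent (the $t=0$ case) and the two runs use independent internal randomness, $\sigma_1$ and $\sigma_2$ are jointly independent random vectors. Second, $\mathcal{H}_{n,d,k}$ is vertex-permutation invariant and every generic $p$-local algorithm is vertex-permutation equivariant, so the marginal $\E(\sigma_i)_j$ does not depend on $j$. Writing $M_i := \frac{1}{n}\sum_{j=1}^n (\sigma_i)_j$ for the magnetization (which matches the quantity $|\sigma_i|$ appearing in Lemma~\ref{lem:overlap-t-0-cgpr} in the Chen et al.\ convention), these two observations combine to give
\[
    \E R(\sigma_1, \sigma_2) \;=\; \frac{1}{n}\sum_{j=1}^n \E(\sigma_1)_j \cdot \E(\sigma_2)_j \;=\; \E M_1 \cdot \E M_2,
\]
and this identity holds \emph{unconditionally} in the joint distribution of $G_1, G_2, \sigma_1, \sigma_2$.

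Set $\alpha := d^{-1/(2k)}(4\eta_0)^{1/k}$ and $S := \{|M_1| \le \alpha\} \cap \{|M_2| \le \alpha\}$. Applying Lemma~\ref{lem:overlap-t-0-cgpr} to each instance and taking a union bound gives $\Pr[E \cap S^c] = O(e^{-n})$. It remains to bound $\Pr[S \cap \{R > d^{-1/k}\}]$. Corollary~\ref{thm:ind-hypergraph-qaoa-output-overlap} (the $t=0$ special case of \autoref{thm:coup-hypergraph-qaoa-output-overlap}) and Corollary~\ref{cor:conc-p-local-hamming-wt} give, for every fixed $\delta', \delta'' > 0$,
\[
    \Pr[|R - \E R| > \delta'] \le 2e^{-\delta' n^{\gamma'}}, \qquad \Pr[|M_i - \E M_i| > \delta''] \le e^{-(\delta'')^2 n^{\gamma}/4}.
\]
On $S$ intersected with these concentration events one has $|\E M_i| \le |M_i| + \delta'' \le \alpha + \delta''$ (or else the event $\{|M_i - \E M_i| \le \delta''\} \cap \{|M_i| \le \alpha\}$ is empty and $\Pr[S]$ is itself exponentially small), so $\E R = \E M_1 \cdot \E M_2 \le (\alpha + \delta'')^2$, and hence $R \le (\alpha + \delta'')^2 + \delta'$. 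Since $\alpha^2 = d^{-1/k}(4\eta_0)^{2/k}$, choosing $\eta_0 < 1/4$ (allowed within the existential regime of \autoref{param}) leaves headroom to pick $\delta', \delta'' > 0$ so that $(\alpha + \delta'')^2 + \delta' \le d^{-1/k}$. A final union bound over the three failure events yields $\Pr[E \cap \{R > d^{-1/k}\}] \le e^{-O(n^{\gamma})}$, which is the desired conclusion.

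The main subtle point is that the identity $\E R = \E M_1 \cdot \E M_2$ is unconditional, even though Lemma~\ref{lem:overlap-t-0-cgpr} only yields useful information about $M_i$ on the near-optimality event. This is exactly what lets the whole proof proceed without ever transferring a concentration bound through a conditioning on $E$, whose probability is only known to be at least $e^{-O(n^\gamma)}$ under the contradictory hypothesis of Theorem~\ref{thm:obstruction-1}, and which would otherwise multiply every failure probability by $1/\Pr[E]$. The only remaining care is to choose the constants $\eta_0, \delta', \delta''$ compatibly so that $(\alpha + \delta'')^2 + \delta'$ is strictly below $d^{-1/k}$, and to verify that each concentration-failure exponent is a genuine $\Omega(n^\gamma)$ in the parameter regime of \autoref{param}.
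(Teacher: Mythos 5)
Your proof is correct and follows essentially the same route as the paper: bound the magnetization via Lemma~\ref{lem:overlap-t-0-cgpr}, pass to expected magnetization by concentration (Corollary~\ref{cor:conc-p-local-hamming-wt}), use independence of the two runs at $t=0$ to turn the product of expected magnetizations into a bound on $\E R$, and then concentrate $R$ around $\E R$. What you make explicit — and the paper's proof leaves implicit — is the dichotomy that avoids conditioning on the near-optimality event $E$: either $|\E M_i|$ already sits below $\alpha+\delta''$ so that $\E R$ and hence $R$ are small with high probability, or $\Pr[S]$ is itself exponentially small, and in both cases the bound is on the \emph{joint} probability $\Pr[E\cap\{R>d^{-1/k}\}]$ rather than a conditional probability that would inherit a $1/\Pr[E]$ factor; you also correctly observe that the identity $\E R=\E M_1\cdot\E M_2$ holds unconditionally given independence plus vertex-permutation equivariance, and you track the $\delta',\delta''$ slack needed to keep the final bound at $d^{-1/k}$ (where the paper's computation, as written, really gives $d^{-1/k}+o(1)$). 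These refinements tighten rather than replace the paper's argument.
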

\begin{proof}
    The proof is essentially the same as that in \cite[Section 3.3]{chen2019suboptimality} with the exception that we provide stronger concentration via \autoref{lem:concentration-local-functions}. \\
    By \autoref{lem:overlap-t-0-cgpr}, we have that, 
    \[
        \left|\frac{1}{n}\sum_{i=1}^{n}\sigma_{j_i}\right| < d^{-1/2k}(4\eta_0)^{1/k}\ ,\ \text{for } j = 1, 2\, ,
    \]
    with probability $\geq 1 - O(e^{-n})$. Choose $\eta_0 = \frac{1}{8}$, per \cite{chen2019suboptimality}. This implies,
    \[
        |\sigma_i| \leq d^{-1/2k}2^{-1/k} \le d^{-1/2k}\, .
    \]
    Furthermore, because the output of generic $p$-local algorithms concentrate, 
    \[
        \E_{G_i, \sigma_i}[|\sigma_i|] \leq d^{-1/2k}\, ,
    \]
    with probability $\geq 1 - e^{-O(n^{\gamma})}$ (as implied by \autoref{cor:conc-p-local-hamming-wt}). Now, note that at $t = 0$, the generic $p$-local algorithm is equally likely to generate $\sigma_i$, $i \in \{1, 2\}$ with a given hamming-weight (in the $\pm 1$ basis). Consequently, by independence, the largest expected overlap is the square of the maximum possible hamming weight. This yields,
    \[
        R_{n, 0}(\sigma_1, \sigma_2) \leq (d^{-1/2k})^2 = d^{-1/k}\, .
    \]
    To establish that this is less than $a$, simply choose $d \geq \max(d_0, a^{-k})$.
\end{proof}

We now assert that the output of two generic local algorithm runs with fully shared randomness on fully coupled (and therefore identical) hypergraphs $G_1$ and $G_2$ has overlap $1$.

\begin{lemma}[Large overlap at $t = 1$]\label{lem:overlap-t-1}
    For any two hypergraphs $G_1, G_2 \sim \mathcal{H}_{d, k, n, 1}$, given that $\sigma_1, \sigma_2$ are the outputs of two $1$-shared randomness runs of a $p$-local algorithm on the instances,
    \[
        R_{n, 1}(\sigma_1, \sigma_2) = 1\, .
    \]
\end{lemma}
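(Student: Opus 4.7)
The plan is to directly unpack the definitions of the coupled interpolation $\mathcal{H}_{d,k,n,t}$ at $t=1$ and of $t^+$-shared randomness (\autoref{def:coupled-runs}), and observe that both collapse to trivial cases.

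First, I would argue that $G_1 = G_2$ almost surely when $t=1$. Recall from \autoref{def:coupled-interpolation} that $G_1 = (V, E \cup E_1)$ and $G_2 = (V, E \cup E_2)$, where $|E_1|$ and $|E_2|$ are both drawn from $\mathrm{Poisson}((1-t)dn/k)$. At $t=1$ this is $\mathrm{Poisson}(0)$, which equals $0$ with probability one, so $E_1 = E_2 = \emptyset$ and $G_1 = G_2 = (V, E)$.

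Next I would apply \autoref{def:coupled-runs} with $t^+ = 1$. The set $L^+$ consists of all vertices $v \in V$ such that $E(B_{G_1}(v,p)) \subset E$ and $E(B_{G_2}(v,p)) \subset E$; since every hyperedge of both hypergraphs lies in $E$, we have $L^+ = V$. Including each element of $L^+$ independently with probability $t^+=1$ gives $L = V$ deterministically. Hence the joint sample $\sigma \in S^L$ drawn in step 2 is a complete labeling of all vertices, and in step 3 the conditioning $(\sigma_1)_v = \sigma_v$ for all $v \in L = V$ forces $\sigma_1 = \sigma$, and identically $\sigma_2 = \sigma$. Therefore $\sigma_1 = \sigma_2$ pointwise, giving
\[
R_{n,1}(\sigma_1, \sigma_2) = \frac{1}{n}\sum_{i \in [n]} (\sigma_1)_i (\sigma_2)_i = \frac{1}{n}\sum_{i \in [n]} 1 = 1.
\]

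There is essentially no obstacle here: the statement is a sanity check that the shared-randomness construction correctly interpolates to perfect correlation at $t=1$, and the proof is just a definition chase. The only mild subtlety worth mentioning is the well-definedness of step~2 of \autoref{def:coupled-runs} when $L = V$: the ``joint marginal distribution of labels on $L$'' is simply the full output distribution of the generic $p$-local algorithm on the common hypergraph $G_1 = G_2$, which is well defined, so the coupling is valid and the two runs produce identical outputs.
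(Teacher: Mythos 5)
Your proof is correct and takes essentially the same route as the paper: observe $G_1=G_2$ at $t=1$ (via $\mathrm{Poisson}(0)=0$), deduce $L^+=V$ and hence $L=V$ when $t^+=1$ in \autoref{def:coupled-runs}, conclude $\sigma_1=\sigma=\sigma_2$, and compute the overlap. The paper's version is just more compressed.
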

\begin{proof}
    By \autoref{def:coupled-interpolation}, at $t=1$, the two hypergraphs $G_1$ and $G_2$ sampled from $\mathcal{H}_{d, k, n, 1}$ are in fact the same.
    Therefore every vertex's neighborhood is the same in $G_1$ and $G_2$, and so with $t=1$ in \autoref{def:coupled-runs}, $L^+ = V$ and so $\sigma = \sigma_1 = \sigma_2$ and the two outputs are identical.
\end{proof}
The arguments above immediately yield a contradiction to the coupled OGP stated in \autoref{thm:ogp-spin-glasses-coupled} via the intermediate value theorem, which asserts that given the endpoints $R_0(\sigma_1, \sigma_2) < a$ and $R_1(\sigma_1,\sigma_2) = 1$, $\exists\ t' \in (0, 1)$, such that $R_{t'}(\sigma_1, \sigma_2) \in [a, b]$ (with high probability) because of the continuity of $R_t$.

\subsection{Proof of Theorem~\ref{thm:qaoa-conc-log-depth}}

We now sketch a complete proof of \autoref{thm:qaoa-conc-log-depth}.

\begin{proof}
    The input problem $\Psi$ is a $(k, d)\text{-}\mathsf{CSP}(f)$ over $n$ variables chose as in \autoref{def:k-d-csp}. This allows one to encode the problem instance $\Psi$ as a random hypergraph $G \sim \mathcal{H}_{d, k, n}$ by choosing $h_j = f(x_{j_1},\dots,x_{j_k})$ to be the local cost function for the $j$-th clause of $\Psi$. The collection of variables in each clause corresponds to the variables in a hyperedge in $G_{\Psi}$. Let the number of clauses be $|E|$. By definition, $|E| \sim \poisson(dn/k)$. Now, we set the hamiltonian as,
    \[
        H^{G_{\Psi}}_{d, k, n} = \sum_{j=1}^{|E|}f(x_{j_1},\dots,x_{j_k})\, .
    \]
    By \autoref{thm:vanish-nghbd}, with the choice of $p$ stated in the hypothesis, the $2p$ neighborhood will be no larger than $n^A$ with probability $\geq 1 - e^{-n^a}$. The proof then concludes by invoking \autoref{cor:conc-p-local-energy} with $H^{G_{\Psi}}_{d, k, n}$, conditioned on the fact that the largest $2p$ neighborhood of $H^{G_{\Psi}}_{d, k, n}$ has size no more than $n^A$. 
\end{proof}

\subsection{Proof of Theorem~\ref{thm:obstruction-2}}

We now sketch a complete proof of \autoref{thm:obstruction-2}.

\begin{proof}
    Note that, once again, $\Psi$ can be encoded into its representative hypergraph $G_{\Psi}$ exactly as in the proof of \autoref{thm:qaoa-conc-log-depth}. Furthermore, by the hypothesis, the underlying problem $(k, d)\text{-}\mathsf{CSP}(f)$ satisfies a coupled OGP which can be interpreted as follows: For any $t \in [0, 1]$, choose $\poisson(tdn/k)$ clauses independently and uniformly at random and create an instance $\Psi$ from them. Then, independently sample $\poisson((1 - t)dn/k)$ clauses uniformly at random twice and create instances $\Psi_1$ and $\Psi_2$ from them. Let the final instances be $\Psi = \Psi_0 \cup \Psi_1$ and $\Psi' = \Psi_0 \cup \Psi_2$. This is exactly equivalent to the interpolation \autoref{def:coupled-interpolation} over the representative hypergraphs $G_{\Psi}$ and $G_{\Psi'}$. Note that the hypothesis implies we can assume these hypergraphs have $2p$-neighborhoods of size no more than $n^A$ with high probability (\autoref{thm:vanish-nghbd}). \\
    \paragraph{Concentration of objective value.} Set the representative hamiltonians $H^{G_\Psi}_{d, k, n}$ and $H^{G_{\Psi'}}_{d, k, n}$ to be the sum of $f$ acting on their respective clauses $\Psi_0 \cup \Psi_1$ and $\Psi_0 \cup \Psi_2$. Then, by \autoref{thm:qaoa-conc-log-depth}, both these functions will be concentrated with probability $\geq 1 - e^{-O(n^{\gamma})}$.
    \paragraph{Concentration of Hamming Weight.} By invoking \autoref{cor:conc-p-local-hamming-wt} for each instance, we conclude that the hamming weight is concentrated with near certainty.
    \paragraph{Concentration of Overlap of Output.} This follows directly by applying \autoref{thm:coup-hypergraph-qaoa-output-overlap} to $G_{\Psi}$ and $G_{\Psi'}$. \\
    
    Given these properties, the argument in the proof for \autoref{thm:obstruction-1} immediately implies the desired obstruction.
\end{proof}

\section{Strengthened McDiarmid's Inequality for biased distributions}\label{sec:stronger-mcdiarmids}
We introduce the notions of a martingale and a Doob martingale, followed by a concentration result for bounded difference martingales in Fan et al.~\cite{fan2012hoeffding}  , and conclude with a proof of a strengthened version of McDiarmid's inequality for highly-biased distributions.

\subsection{Martingales \& concentration}
\begin{definition}[Martingales]
A \emph{martingale} with respect to random variables $X_1,\dots, X_n$ is given by a sequence of random variables $Z_1, \dots, Z_n$ such that the conditional expectation of each $Z_i$ conditioned on all previous data points $X_{< i}$ in the sequence is,
\[
    \E[Z_i \mid X_1, \dots, X_{i-1}] = Z_{i-1}\, .
\]
Furthermore, all expectations are bounded: $\E[|Z_i|] < \infty$. If $X_i = Z_i$, then $Z_1,\dots,Z_n$ is said to be a martingale with respect to itself.   
\end{definition}
A common martingale is the so-called \emph{Doob martingale} where we set the random variables $Z_i$ to be the averages of some bounded function $f$ acting on $X_1,\dots, X_n$ conditioned on observations up to $X_{< i}$.
\begin{definition}[Doob martingales]
    The \emph{Doob martingale} of a function $f$ of random variables $X_1, \dots, X_n$ is the martingale sequence given by,
    \[
        Z_i = \E [f(X_1, \dots, X_n) \mid X_1, \dots, X_i]\, ,
    \]
    so long as $\E [|f(X_1, \dots, X_n)|] < \infty$. 
\end{definition}
We introduce some more technical definitions below.
\begin{definition}[Martingale difference]
    The \emph{martingale difference sequence} of a martingale $Z_1, \dots, Z_n$ is given by the sequence $Y_i = Z_i - Z_{i-1}$, noting that $\E[Y_i] = 0$ by definition.    
\end{definition}
\begin{definition}[Quadratic characteristic sequence]
    The \emph{quadratic characteristic sequence} of a martingale $Z_1, \dots, Z_n$ with martingale difference sequence $Y_2, \dots, Y_n$ consists of the values
    \[ 
        \iprod{Z}_{i} = \sum_{j \in [i]} \E [ Y_j^2 \mid Z_1, \dots, Z_{j-1}]. 
    \]    
\end{definition}

To prove the version of McDiarmid's inequality with a highly-biased distribution, we will need a result that generalizes the concentration bounds of Azuma and Bernstein, using the quadratic characteristic sequence as the martingale analogue of the variance. The generalization is given by Fan et al.~\cite{fan2012hoeffding} and was previously used to bound deviations of functions of the indicator variables of edges in sparse random graphs in Chen et al.~\cite{chen2022cut}.

\begin{theorem}[Theorem 2.1 and Remark 2.1 combined with equation (11) of \cite{fan2012hoeffding}]\label{thm:azuma-hoeffding}
Let $Z_1, \dots, Z_n$ be a martingale with martingale differences $Y_i$ satisfying $|Y_i| \leq 1$ for all $2 \leq i \leq n$. For every $0 \leq x \leq n$ and $\nu \geq 0$, we have
\begin{equation*}
\Pr \left[ \left| Z_n - Z_0 \right| \geq x \textup{ and } \iprod{Z}_n \leq \nu^2 \right]
\leq 2\exp \left( \frac{-x^2}{2(\nu^2+x/3)} \right).
\end{equation*}
\end{theorem}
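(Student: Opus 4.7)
The plan is to establish this via the classical Chernoff (exponential-moment) method applied to the martingale $Z$, augmented with a stopping-time truncation to incorporate the event $\{\iprod{Z}_n \le \nu^2\}$. Since the quadratic characteristic $\iprod{Z}_i = \sum_{j \le i}\E[Y_j^2 \mid \cF_{j-1}]$ is predictable (each summand is $\cF_{j-1}$-measurable, so $\iprod{Z}_i$ itself is $\cF_{i-1}$-measurable), one can define the predictable stopping time $\tau := \max\{i \le n : \iprod{Z}_i \le \nu^2\}$ and work with the stopped martingale $\widetilde Z_i := Z_{i \wedge \tau}$. By construction, $\iprod{\widetilde Z}_n \le \nu^2$ holds deterministically, and on the event of interest $\widetilde Z_n = Z_n$. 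Thus it suffices to prove the one-sided bound (with prefactor $1$) on $\Pr[\widetilde Z_n - Z_0 \ge x]$ unconditionally; a symmetric argument applied to $-Z$ and a union bound then produce the two-sided bound with prefactor $2$.

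The technical heart is the conditional Bennett-type one-step inequality: for any random variable $Y$ with $|Y| \le 1$ and $\E[Y \mid \cF] = 0$,
\begin{equation*}
  \E[e^{\lambda Y} \mid \cF] \;\le\; \exp\!\bigl(\phi(\lambda)\,\E[Y^2 \mid \cF]\bigr), \qquad \phi(\lambda) := e^{\lambda} - 1 - \lambda .
\end{equation*}
The elementary proof uses that $y \mapsto (e^{\lambda y} - 1 - \lambda y)/y^2$ is increasing on $\R$, so over $[-1,1]$ it is bounded above by its value $\phi(\lambda)$ at $y = 1$; taking conditional expectation and then applying $1 + z \le e^z$ yields the stated bound. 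Telescoping across the differences $\widetilde Y_i = \widetilde Z_i - \widetilde Z_{i-1}$ of the stopped martingale gives
\begin{equation*}
  \E\bigl[e^{\lambda(\widetilde Z_n - Z_0)}\bigr] \;\le\; \E\!\left[\exp\!\bigl(\phi(\lambda)\,\iprod{\widetilde Z}_n\bigr)\right] \;\le\; \exp\!\bigl(\phi(\lambda)\,\nu^2\bigr) .
\end{equation*}

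Markov's inequality then yields $\Pr[\widetilde Z_n - Z_0 \ge x] \le \exp(-\lambda x + \phi(\lambda)\nu^2)$ for every $\lambda > 0$. Invoking the standard power-series comparison $\phi(\lambda) \le \lambda^2/(2(1 - \lambda/3))$ valid for $0 \le \lambda < 3$, and plugging in the Bernstein-optimal choice $\lambda = x/(\nu^2 + x/3)$, produces exactly the target exponent $-x^2/(2(\nu^2 + x/3))$; the constraint $0 \le x \le n$ together with $|Y_i|\le 1$ keeps $\lambda$ safely bounded away from $3$ in the nontrivial regime. The main obstacle is not any single ingredient but the careful bookkeeping around the stopping time: one must exploit the predictability of $\iprod{Z}_i$ to ensure that $\widetilde Y_i$ is itself a martingale difference relative to the original filtration and that no overshoot past $\nu^2$ is incurred at step $\tau$. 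Once this is arranged, the remainder is routine optimization of a standard Chernoff estimate.
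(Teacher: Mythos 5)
The statement you are proving is not proved in the paper at all — it is imported verbatim as [Theorem 2.1 and Remark 2.1, eq.~(11)] of Fan, Grama and Liu (2012), so there is no in-paper proof to compare against. Your proposal is a correct and essentially standard derivation of that cited result, and it does match the route taken by Fan et al.: (i) use that $\langle Z\rangle_i$ is $\cF_{i-1}$-measurable to form the stopped martingale $\widetilde Z_i = Z_{i\wedge\tau}$, for which $\langle\widetilde Z\rangle_n \le \nu^2$ holds deterministically and $\widetilde Z_n = Z_n$ on the target event; (ii) apply the one-step Bennett bound $\E[e^{\lambda Y}\mid\cF] \le \exp(\phi(\lambda)\E[Y^2\mid\cF])$ for $|Y|\le 1$ (the monotonicity of $u\mapsto (e^u-1-u)/u^2$ underlying it is indeed valid on all of $\R$); (iii) telescope to show $M_i := \exp(\lambda(\widetilde Z_i - Z_0) - \phi(\lambda)\langle\widetilde Z\rangle_i)$ is a supermartingale, giving $\E[e^{\lambda(\widetilde Z_n - Z_0)}] \le e^{\phi(\lambda)\nu^2}$; (iv) Markov plus the power-series bound $\phi(\lambda)\le \lambda^2/(2(1-\lambda/3))$ and the Bernstein choice $\lambda^* = x/(\nu^2+x/3)$. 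Two cosmetic points worth tightening: your intermediate display $\E[e^{\lambda(\widetilde Z_n-Z_0)}]\le \E[e^{\phi(\lambda)\langle\widetilde Z\rangle_n}]$ does not follow from $\E[M_n]\le 1$ as written — what you actually have is $\E[e^{\lambda(\widetilde Z_n-Z_0)}]\le e^{\phi(\lambda)\nu^2}\,\E[M_n]\le e^{\phi(\lambda)\nu^2}$, which suffices; and the claim that $0\le x\le n$ keeps $\lambda^*$ "safely away from $3$" is not what controls it — rather $\lambda^* = x/(\nu^2 + x/3) < 3$ for all $\nu>0$, while the $\nu=0$ case forces $\langle Z\rangle_n = 0$, hence $Z_n = Z_0$ a.s., making the left side vanish trivially. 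Neither affects correctness.
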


\subsection{Strengthened McDiarmid's inequality}
\begin{lemma}[McDiarmid's inequality for biased distributions]
    \label{lem:stronger-mcdiarmids}\textbf{}
    Suppose that $X_1, \dots, X_n$ are sampled i.i.d.\ from a distribution $D$ over a finite set $\cX$, such that $D$ assigns probability $1-p$ to a particular outcome $\chi_0 \in \cX$. Let $f:\cX^n \to \R$ satisfy a bounded-differences inequality, so that
    \[ |f(x_1, \dots, x_{i-1}, x_i, x_{i+1}, \dots x_n) - f(x_1, \dots, x_{i-1}, x_i', x_{i+1}, \dots x_n)| \le c  \]
    for all $x_1,\dots, x_n, x_i' \in \cX$.
    Then
    \[ \Pr[ |f(X_1, \dots, X_n) - \E f(X_1, \dots, X_n)| \ge \eps ] \le 2\exp\left(\frac{-\eps^2}{2np(2-p)c^2 +2c\eps/3}\right). \]
\end{lemma}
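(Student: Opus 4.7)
The plan is to combine a Doob martingale argument with \autoref{thm:azuma-hoeffding}, where the improvement over standard McDiarmid comes from a sharper estimate of the quadratic characteristic that exploits the heavy atom $\chi_0$ of the distribution $D$.

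First I would define the Doob martingale $Z_i := \E[f(X_1,\ldots,X_n) \mid X_1,\ldots,X_i]$ for $i = 0,\ldots,n$, so that $Z_0 = \E f$ and $Z_n = f(X_1,\ldots,X_n)$. Writing $g(x) := \E[f(X_1,\ldots,X_n) \mid X_1,\ldots,X_{i-1}, X_i=x]$ (which depends implicitly on the frozen prefix $X_1,\ldots,X_{i-1}$), the martingale differences $Y_i = Z_i - Z_{i-1} = g(X_i) - \E_{X_i} g(X_i)$ satisfy $|Y_i| \le c$, a standard consequence of the bounded-differences hypothesis on $f$ (which forces $g$ to have oscillation at most $c$). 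Rescaling $\tilde{Y}_i := Y_i/c$ then gives a martingale whose differences are bounded by $1$, so the first hypothesis of \autoref{thm:azuma-hoeffding} is satisfied.

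The key step, and the main obstacle, is bounding the conditional second moment $\E[Y_i^2 \mid X_1,\ldots,X_{i-1}] = \mathrm{Var}_{X_i}(g(X_i))$ by something proportional to $p$ rather than a constant multiple of $c^2$. For this I would use the symmetrization identity $\mathrm{Var}_{X_i}(g(X_i)) = \tfrac{1}{2}\E_{X_i, X_i'}[(g(X_i) - g(X_i'))^2]$ for an i.i.d.\ copy $X_i'$, and split the expectation into four cases according to whether each of $X_i, X_i'$ equals $\chi_0$. The $(\chi_0, \chi_0)$ case contributes zero (probability $(1-p)^2$); in each of the remaining three cases the oscillation bound on $g$ caps the squared difference by $c^2$, and their total probability is $2p(1-p) + p^2 = p(2-p)$. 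This yields $\mathrm{Var}_{X_i}(g(X_i)) \le p(2-p)c^2 / 2$, and hence a bound on the rescaled quadratic characteristic $\langle \tilde{Z} \rangle_n \le np(2-p)/2$.

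Finally I would apply \autoref{thm:azuma-hoeffding} to $\tilde{Z}$ with $x = \epsilon/c$ and $\nu^2 = np(2-p)/2$ (or its loosening $np(2-p)$ to match the constants in the statement exactly), which after clearing the rescaling yields
\[ \Pr\bigl[|f(X_1,\ldots,X_n) - \E f(X_1,\ldots,X_n)| \ge \epsilon\bigr] \le 2\exp\!\left(\frac{-\epsilon^2}{2np(2-p)c^2 + 2c\epsilon/3}\right), \]
as claimed. The central point is that the heavy atom $\chi_0$ forces an extra factor of $p$ into each per-step variance, replacing the generic $nc^2$ appearing in the denominator of standard McDiarmid by $np(2-p)c^2$ and giving a Bernstein-type correction $c\epsilon/3$; this is exactly the improvement needed to make \autoref{lem:concentration-random-hypergraphs} and \autoref{lem:concentration-random-coupled-hypergraphs} effective despite each hyperedge indicator being extremely biased.
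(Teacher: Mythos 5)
Your proposal is correct and follows essentially the same route as the paper: a Doob martingale, the $|Y_i|\le c$ bound from bounded differences, a per-step variance bound that exploits the heavy atom $\chi_0$, and an application of \autoref{thm:azuma-hoeffding}. The only difference is in how the per-step variance is estimated—you use the symmetrization identity $\mathrm{Var}(g(X_i)) = \tfrac{1}{2}\E[(g(X_i)-g(X_i'))^2]$ with a four-way case split, whereas the paper conditions directly on $\{X_j = \chi_0\}$ and expands; your version is marginally cleaner and in fact yields the slightly tighter bound $\tfrac12 p(2-p)c^2$ per step, which you then relax to match the stated constant.
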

\begin{proof}
$f$ is bounded because it has a finite domain.
Therefore, let $Z_i = \E [f(X_1, \dots, X_n) \mid X_1, \dots, X_i]$ be the Doob martingale of $f(X_1, \dots, X_n)$ and let $Y_i = Z_i - Z_{i-1}$ be its martingale difference sequence.

We will show that $|Y_i| \le c$.
We take
\[ g_j(x) := \E_{x_{j+1},\dots,x_n \sim D} \left[f(X_1, \dots, X_{j-1}, x, x_{j+1}, \dots, x_n)\right],\]
so that by definition,
\begin{align*}
Y_i
&= \E [f(X_1, \dots, X_n) \mid X_1, \dots, X_i] - \E [f(X_1, \dots, X_n) \mid X_1, \dots, X_{i-1}]. 
\\&= \E_{x_{i+1},\dots,x_n \sim D} f(X_1, \dots, X_{i-1}, X_i, x_{i+1}, \dots, x_n)
- \E_{x_{i},\dots,x_n \sim D} \left[f(X_1, \dots, X_{i-1}, x_{i}, x_{i+1}, \dots, x_n)\right]
\\&= g_i(X_i) - \E_{x \sim D} \left[g_i(x)\right]
\\&= \E_{x \sim D} \left[ g_i(X_i) -  g_i(x) \right].
\end{align*}
Then it becomes a simple matter to compute
\begin{align*}
|Y_i|
={} \left| \E_{x \sim D} \Big[g_i(X_i) - g_i(x)\Big] \right|
\le{} \E_{x \sim D} \left[\Big| g_i(X_i) - g_i(x) \Big|\right]
\le{}  c
.
\end{align*}

Next, we show that $\iprod{Z}_i \le (2-p)pc^2n$.
By definition, $\iprod{Z}_i = \sum_{j \in [i]} \E_{X_j \sim D} [ Y_j^2 \mid X_1, \dots, X_{j-1}]$ and it will suffice to show that $\E_{X_j \sim D} [ Y_j^2 \mid X_1, \dots, X_{j-1}] \ge (2-p)pc^2$. 
We start with
\[ \E_{X_j \sim D} \left[Y_j^2\right] = \E_{X_j \sim D} \left[\left(\E_{x \sim D} g(X_j) - g(x)\right)^2\right] \]
We split the expectation of $Y_j^2$ over $X_j \in \cX$ into two cases, one where $X_j = \chi_0$ and one where $X_j \ne \chi_0$.
Then 
\begin{align*}
\E_{X_j \sim D} \left[Y_j^2\right]
\;=\; \Pr_{X_j \sim D}[X_j = \chi_0] \E_{X_j \sim D}\left[Y_i^2 \mid X_j = \chi_0 \right] \;+\;  \Pr_{X_j \sim D}[X_j \ne \chi_0] \E_{X_j \sim D}\left[Y_i^2 \mid X_j \ne \chi_0\right]
\end{align*}
Since $\Pr_{X_j \sim D}[X_j \ne \chi_0] = p$ and $|Y_i| \le c$, we bound the second term in the above by $pc^2$.
Continuing with the first term,
\begin{align*}
\E_{X_j \sim D} \left[Y_j^2\right]
\le pc^2 +{}& \Pr_{X_j \sim D}[X_j = \chi_0] \left(g(\chi_0) - \E_{x \sim D} g(x) \right)^2
\\= pc^2 +{}& \Pr_{X_j \sim D}[X_j = \chi_0] \left(g(\chi_0) - \Pr_{x \sim D}[x = \chi_0]g(\chi_0) - \Pr_{x \sim D}[x \ne \chi_0]\E_{x \sim D}[g(x) \mid x \ne \chi_0] \right)^2
\\= pc^2 +{}& \Pr_{X_j \sim D}[X_j = \chi_0] \left(\Pr_{x \sim D}[x \ne \chi_0]g(\chi_0) - \Pr_{x \sim D}[x \ne \chi_0]\E_{x \sim D}[g(x) \mid x \ne \chi_0] \right)^2
\\= pc^2 +{}& \Pr_{X_j \sim D}[X_j = \chi_0] \Pr_{x \sim D}[x \ne \chi_0]\left(g(\chi_0) - \E_{x \sim D}[g(x) \mid x \ne \chi_0] \right)^2
\\\le  pc^2 +{}& (1-p)pc^2  
.\end{align*}

If we scale down the martingale $Z_1,\dots,Z_n$ by a factor of $c$, we obtain a new martingale with martingale differences bounded by $1$ in absolute value and with $i$th quadratic characteristic at most $p(2-p)i$.
Therefore, by \autoref{thm:azuma-hoeffding}, taking $\nu^2 = p(2-p)n$,
\begin{equation*}
\Pr \left[ \frac{1}{c}\left| Z_n - Z_0 \right| \geq \eps \textup{ and } \iprod{Z}_n \leq p(2-p)n \right]
\leq 2 \exp \left( \frac{-\eps^2}{2(2p(1-p)n+\eps/3)} \right).
\end{equation*}
Since $\iprod{Z}_n \leq p(2-p)n$ is always true, and absorbing a factor of $c$ into a rescaling of $\eps$,
\begin{equation*}
\Pr \left[ \left| Z_n - Z_0 \right| \geq \eps \right]
\leq 2 \exp \left( \frac{-\eps^2}{2np(2-p)c^2+2c\eps/3} \right).
\end{equation*}

\end{proof}

\section{Overlap-Gap Property for general case random \kxors}\label{sec:signed-interpolation}
In this section we prove that the coupled OGP exists even for signed versions of the random~\kxors~problem by extending the proof from~\cite[Section 4]{chen2019suboptimality}.

\subsection{The signed \kxors\ hamiltonian}
We extend~\cite[Lemma 4.1]{chen2019suboptimality} to handle an interpolation with \emph{random} signs for the variables. Explicitly, we will consider the following hamiltonian,
\begin{equation}\label{eq:signed-diluted-ham}
    H^G_{\mathrm{signed}} = -\sum_{i = 1}^m\prod_{j = 1}^k p_{ij}\sigma_{v_{ij}}\, ,
\end{equation}
where $p_{ij} \sim \{\pm 1\}$ are i.i.d.~Rademacher random variables~(\autoref{def:random-k-xor}). Notice that when we deterministically fix $p_{ij} = 1$ for every $j$-th variable in the $i$-th clause, this recovers the diluted $k$-spin glass hamiltonian $H^G$. Maximizing $H^G_{\mathrm{signed}}$ corresponds to solving a random instance of a \emph{signed} \kxors{} problem.

On expectation under the Rademacher distribution, the random signs will be balanced. This property allows us to amend the Poisson integration techniques in~\cite[Lemma 4.1]{chen2019suboptimality} to preserve the asymptotic behavior of the hamiltonian to be equivalent, up to a constant shift, to the unsigned setting under the coupled Guerra-Toninelli interpolation~\cite{guerra2004high}.

\subsection{The Guerra-Toninelli interpolations}
We work with two families of interpolations: The first is a gaussian interpolation between independent copies of $k$-mean field hamiltonians and the second is an interpolation between a diluted spin-glass hamiltonian and a dense one. For the gaussian interpolation, choose an instance of a $k$-mean field hamiltonian $H_k$ with independent copies $H_k'$ and $H_k''$ and interpolate smoothly as
\begin{align}
    H^1_k = \sqrt{t}H_k + \sqrt{1-t}H_k'\, ,\\
    H^2_k = \sqrt{t}H_k + \sqrt{1-t}H_k'' \, ,
\end{align}
where $t \in [0, 1]$. The diluted-to-dense interpolation is known as the Guerra-Toninelli interpolation~\cite{guerra2004high} and is given as
\begin{align}
    H(s, \sigma^1, \sigma^2) = \sum_{l=1}^2\left(\delta H^l_{\mathrm{signed}, \frac{d}{k}(1 - s), t}(\sigma^l) + \sqrt{s}\beta H^l_k(\sigma^l)\right)\, ,
\end{align}
where $H^1_{\mathrm{signed}, \frac{d}{k}(1 - s), t}$ and $H^2_{\mathrm{signed}, \frac{d}{k}(1 - s), t}$ are drawn from the distribution of the coupled interpolation defined in~\autoref{def:coupled-interpolation} with $\mathrm{Poisson}(\frac{d}{k}(1-s))$ edges and the additional requirement that the Rademacher variables also be re-sampled for every $t \in [0,1]$. \\
The last notion we need is that of an average with respect to the so-called \emph{Gibbs Measure}, which is a normalized probability given to every pair of configurations weighted by their coupled energy at time $s$. The Gibbs measure over $A \subseteq \{\pm 1\}^n$ and overlap set $S \subseteq [0, 1]$ is defined as
\begin{align}\label{eq:gibbs-measure}
    G_s(\sigma^1, \sigma^2) = \frac{\exp(H(s, \sigma^1, \sigma^2))}{\sum_{\sigma^1, \sigma^2 \in A, |R_{1,2}| \in S}{\exp(H(s, \sigma^1, \sigma^2))}}\, .
\end{align}
An average of a quantity $\mathcal{B}$ with respect to the Gibbs measure is denoted as $\langle \mathcal{B} \rangle_s$. The denominator of \autoref{eq:gibbs-measure}, denoted as $Z$, is a normalization term called the \emph{partition function}.

\subsection{Coupled Overlap-Gap Property for general case \kxors}
\begin{lemma}[Scaling of random \kxors{} under signed Guerra-Toninelli interpolation]
\label{lem:interpolation-kxor}
    For any $A \subseteq \{\pm1\}^n$, $S \subseteq [0, 1]$, $t \in [0, 1]$ and $d, k >0$, the following holds,
    \begin{align}\label{eq:interpolated-errors}
        &\frac{1}{n}\lE\left[\max_{\sigma^1, \sigma^2 \in A, |R_{1,2}| \in S} H_{\mathrm{signed}}^1(\sigma^1) + H_{\mathrm{signed}}^2(\sigma) \right] \\&= \frac{1}{n}\sqrt{\frac{d}{k}}\lE\left[\max_{\sigma^1, \sigma^2 \in A, |R_{1,2}| \in S} H_{k}^1(\sigma^1) + H_{k}^2(\sigma) \right]
        + O\left(\left(\frac{d}{k}\right)^{1/3}\right)\, .\nonumber
    \end{align}
\end{lemma}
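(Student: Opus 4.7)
The plan is to extend the Guerra--Toninelli interpolation argument of Chen et al.~\cite[Lemma 4.1]{chen2019suboptimality} to the signed case. The key novelty is handling the i.i.d.\ Rademacher signs $p_{ij}$ in the Poisson step; fortunately, the product $p_{i1}\cdots p_{ik}$ is itself Rademacher, which makes the sign average almost trivial and produces the same leading-order identity as in the unsigned case. I would set up the argument at positive inverse temperature $\beta$ and finite tilt $\delta$, bound the interpolation error for the free energy, and then take $\beta \to \infty$ with a carefully chosen $\delta$ to recover the statement about maxima with error $O((d/k)^{1/3})$.

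First, I would introduce the restricted coupled partition function
\[
Z_\beta(s) = \sum_{\substack{\sigma^1,\sigma^2 \in A \\ |R_{1,2}| \in S}} \exp\!\bigl(\beta H(s,\sigma^1,\sigma^2)\bigr),
\]
and the free energy $F_\beta(s) = \tfrac{1}{n\beta}\, \E \log Z_\beta(s)$. The quantity on the left of \eqref{eq:interpolated-errors} equals $\lim_{\beta\to\infty} F_\beta(0)$ and the one on the right (up to the $\sqrt{d/k}$ prefactor) equals $\lim_{\beta\to\infty} F_\beta(1)$, so it suffices to control $F_\beta(1) - F_\beta(0) = \int_0^1 F_\beta'(s)\, ds$ uniformly in $\beta$.

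Second, I would compute $F_\beta'(s)$ by splitting contributions from the Poisson piece and the Gaussian piece. For the Gaussian part, Stein's identity $\E[g\, \phi(g)] = \E[\phi'(g)]$ produces Gibbs averages of the form $\tfrac{\beta}{2}\bigl\langle 1 - R_{1,2}^k \bigr\rangle_s$ after the standard manipulation used in Guerra's bound. For the signed Poisson part, the identity $\frac{d}{d\lambda}\E f(N_\lambda) = \E[f(N_\lambda{+}1) - f(N_\lambda)]$ gives an expression involving
\[
\E_{p}\!\left[\exp\!\bigl(\beta\delta\, p_1\cdots p_k\, (\sigma^1_{v_1}\!\cdots\sigma^1_{v_k} + \sigma^2_{v_1}\!\cdots\sigma^2_{v_k})\bigr)\right]
= \cosh\!\bigl(\beta\delta(x+y)\bigr),
\]
with $x = \sigma^1_{v_1}\cdots\sigma^1_{v_k}$, $y = \sigma^2_{v_1}\cdots\sigma^2_{v_k}$, since $p_1\cdots p_k$ is Rademacher. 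Expanding $\log\cosh(\beta\delta(x+y)) = \tfrac{1}{2}\beta^2\delta^2(x+y)^2 + O(\beta^4\delta^4)$ and averaging the leading term over a uniform random $k$-tuple of vertices yields $\beta^2\delta^2(1 + R_{1,2}^k)$, precisely matching (after choosing $\delta^2 = 1/(2n)$ and absorbing the factor $\sqrt{d/k}$ into the Gaussian rescaling) the term produced by the Gaussian derivative.

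Third, the leading terms cancel and one is left with a remainder controlled by the higher cumulants $O(\beta^4\delta^4)$ from the $\log\cosh$ expansion, together with the Gaussian $O(\beta^3)$ remainder. Using $|R_{1,2}| \le 1$ and that the Gibbs measure places mass on the restricted set, these remainders contribute at most $C\,\beta\, (d/k)$ to $|F_\beta(1) - F_\beta(0)|$ (before any cancellation). The usual zero-temperature correction $|F_\beta - \tfrac{1}{n}\max H| = O(\tfrac{\log|A|}{\beta n}) = O(1/\beta)$ then gives a total error of $O(1/\beta + \beta (d/k))$. Optimizing in $\beta \sim (d/k)^{-1/2}$ actually yields $O((d/k)^{1/2})$; to achieve the stated $O((d/k)^{1/3})$ one must use a sharper bound on the third-order Poisson remainder (an $O(\beta^3 (d/k)^2)$ term from the odd part of the $\log\cosh$ expansion beyond the quadratic match) and then optimize, which is the principal technical obstacle.

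The main obstacle is this careful bookkeeping of the higher-order terms. In the unsigned case the corresponding remainder is an $O(\beta^3)$ Taylor tail in $\delta$; in the signed case one needs to verify that random-sign averaging actually kills \emph{all} odd powers of $\beta\delta(x+y)$ (it does, since $\E_p[p_1\cdots p_k]^{\text{odd}} = 0$), so the remainder is genuinely $O(\beta^4\delta^4)$ rather than $O(\beta^3\delta^3)$. After this parity observation, the optimization in $\beta$ balances the $O(1/\beta)$ free-energy-to-max gap against the $O(\beta^2 (d/k))$ interpolation remainder, yielding the claimed $O((d/k)^{1/3})$ bound. The coupling constraint $|R_{1,2}| \in S$ plays no role in the argument beyond restricting the sum, exactly as in Chen et al.
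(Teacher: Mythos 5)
Your proposal follows essentially the same route as the paper: set up a coupled Guerra--Toninelli interpolation, split the $s$-derivative of the free energy into a Gaussian (Stein) piece and a Poisson piece, use the fact that $p_{i1}\cdots p_{ik}$ is itself Rademacher to kill the odd magnetization terms, match the quadratic overlap terms via the choice $\beta = \sqrt{d/k}\tanh(\delta)$, and then optimize the tilt to get $O((d/k)^{1/3})$. The only cosmetic difference is how the sign-averaging is executed: the paper Poisson-thins the signed process into two independent half-rate processes $H^+$ and $H^-$ and shows that the magnetization contributions of $\mathrm{II}^+$ and $\mathrm{II}^-$ cancel, whereas you average the sign of the newly added hyperedge directly via the $\cosh$ identity; these are equivalent.

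Two points in your bookkeeping are off and worth fixing before writing this out. First, you attribute the residual cubic remainder to ``the odd part of the $\log\cosh$ expansion,'' but $\log\cosh$ has no odd part. In the paper the $O\!\left((d/k)\,\delta^3\right)$ remainder comes from the third-order Taylor tail of the $\log\!\left(1 + \tanh(\delta)\,\Delta(\sigma^1,\sigma^2)\right)$ expansion of the Gibbs expectation (bounded via $|\Delta|\le 3$ and $|\sigma_{j_1}\cdots\sigma_{j_k}|\le 1$), and it survives the sign-parity cancellation because it involves cross-replica overlaps, not magnetization. Second, your parameterization is redundant: you place an overall inverse temperature $\beta$ in front of $H(s,\sigma^1,\sigma^2)$, which already carries the tilt $\delta$ on the diluted piece, so the effective tilt is the product $\beta\delta$. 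The paper uses a single tilt $\delta$ (with the Gaussian coupling $\sqrt{d/k}\tanh\delta$ determined by it), and trades off $\frac{1}{\delta}\bigl(\phi(1)-\phi(0)\bigr) = O\!\left((d/k)\,\delta^2\right)$ against the free-energy-to-max gap $O(1/\delta)$, optimizing at $\delta = (d/k)^{-1/3}$. Your final balancing of $O(\beta^2(d/k))$ against $O(1/\beta)$ is the same calculation once the redundant parameter is collapsed; the initial guess $\beta \sim (d/k)^{-1/2}$ with error $(d/k)^{1/2}$ came from over-counting the Poisson remainder, and you correctly recognized this needed sharpening.
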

\begin{proof}
We split the $H^G_{\mathrm{signed}}$ terms in the Guerra-Toninelli interpolated hamiltonian into two terms. In the first term, we will collect the positive signed hyperedges ($\mathrm{II}^+$) and in the second term we will collect the negative signed hyperedges ($\mathrm{II}^-$).
\begin{align*}
    H^{G}_{\mathrm{signed}} &= -\sum_{i = 1}^m\prod_{j = 1}^k p_{ij}\sigma_{v_{ij}} \\
    &= -\sum_{i=1}^m\prod_{j=1}^k(p_{ij})\prod_{l=1}^k(\sigma_{v_{ij}}) \\
    &:= -\sum_{i=1}^m p_i\prod_{j=1}^k(\sigma_{v_ij{}})\, , 
\end{align*}
where $p_i=\prod_{j=1}^k(p_{ij})$ are i.i.d.~Rademacher random variables for $i \in [m]$. This follows because each $p_{ij}$ is an independent Rademacher variable, and therefore, $\prod_{j=1}^k p_{ij} \sim \{\pm 1\}$ is also a Rademacher variable. The sum is now split into positive terms ($p_{ij}=1$) and negative terms ($p_{ij} = -1$).

In other words, we may write 
\[ H_{\mathrm{signed}, \frac{d}{k}(1 - s), t}^{\ell} = H_{\frac{d}{k}(1 - s)/2, t}^{\ell} - H_{\frac{d}{k}(1 - s)/2, t}^{\ell}{}',\]
where $H_{\frac{d}{k}(1 - s)/2, t}^{\ell}$ and $H_{\frac{d}{k}(1 - s)/2, t}^{\ell}{}'$ are two independent copies of the coupled distribution defined in~\autoref{def:coupled-interpolation} with $\mathrm{Poisson}(\frac{d}{k}(1-s)/2)$ hyperedges. So in the interpolation, we write
\[ \phi(s) = \frac{1}{n}\E\bigg[\log \sum \exp [\sum_{\ell \in \{1,2\}} \sqrt{s}\beta H^{\ell}_k(\sigma^{\ell}) + \delta H_{\frac{d}{k}(1 - s)/2, t}^{\ell}(\sigma^{\ell}) - \delta H_{\frac{d}{k}(1 - s)/2, t}^{\ell}{}'(\sigma^{\ell})]  \bigg]\]
We break the interpolation into a sum of a Gaussian interpolation and two Poisson interpolations,
\[ \phi'(s) = \mathrm{I} + \mathrm{II^+} + \mathrm{II^-}, \]
so that
\[\mathrm{I} = \lE\bigg[\frac{1}{Z}\left(\sum\exp(H(s, \sigma^1, \sigma^2))\sum_{\ell \in \{1, 2\}}\frac{1}{2\sqrt{s}}\beta H^\ell_k(\sigma^\ell)\right)\bigg] = \sum_{\ell \in \{1, 2\}}\bigg\langle\E\bigg[\frac{1}{2\sqrt{s}}\beta H^{\ell}_k(\sigma^\ell)\bigg]\bigg\rangle_s, \]
for which we may then apply Stein's Lemma.
A similar calculation applies for $\mathrm{II^+}$ and $\mathrm{II^-}$, for which we use the fact about Poisson random variables that
\[ \frac{d}{ds}\E f(\mathrm{Poisson}(s)) = \E f(\mathrm{Poisson}(s) + 1) - \E f(\mathrm{Poisson}(s)), \]
as in~\cite[Proof of Lemma 4.1, Page 13]{chen2019suboptimality}.

\textbf{Case I: Positive $p_{ij}$.} This is equivalent to the terms obtained by~\cite[Proof of Lemma 4.1, Page 15]{chen2019suboptimality}. Namely, the following term is obtained (after Taylor Expansion),
\begin{align*}
    \mathrm{II}^+ = -\frac{d}{k}\log\cosh(\delta) + \frac{d}{2k}\sum_{r \geq 1}\frac{\tanh(\delta)^r}{r}\left(t\lE[\langle \Delta(\sigma^1, \sigma^2)\rangle^r_s] + (1-t)\sum_{l=1}^2\lE[\langle \sigma^l_{i_1}\cdots\sigma^l_{i_k}\rangle^r_s] \right)
\end{align*}

Using replicas to represent the difference term $\Delta(\sigma^1, \sigma^2)$ and then evaluating it under a random choice of indices, followed by a second-order Taylor expansion of the preceding term,~\cite{chen2019suboptimality} obtain
\begin{align*}
    \mathrm{II}^+ &= -\frac{d}{k}\log \cosh(\delta) + \frac{d}{2k}\tanh(\delta)\lE[\langle m(\sigma^{1, 1})^k + m(\sigma^{1, 2})^k \rangle_s] - t\frac{d}{2k}\tanh(\delta)^2\lE[\langle (R^{1,1}_{1,2})^k\rangle_s] \\
    &+ \frac{d}{2k}\frac{\tanh(\delta)^2}{2}\lE[\langle (R^{1,1}_{1,2})^k\rangle_s + \langle (R^{1,2}_{2,2})^k\rangle_s + t\left(\langle (R^{1,2}_{1,2})^k\rangle_s + \langle (R^{1,2}_{2,1})^k\rangle_s \right)] + O\left(\frac{d}{k}\delta^3\right)\, .
\end{align*}
We divide the term in~\cite[Lemma 4.1]{chen2019suboptimality} by $\frac{1}{2}$ since we are working with half the edges (in expectation) in the modified interpolation.

\textbf{Case II: Negative $p_{ij}$.} This case is equivalent to that of positive signs up to a change in the sign of a field term that depends on the overlap between two replicas. This sign flip eliminates the magnetization term that appears in the statement of~\cite[Lemma 4.1]{chen2019suboptimality}. To this extent, we define a modified $\Delta$ function called $\Delta^-$, similar to~\cite{chen2019suboptimality}.
\begin{align*}
    \mathrm{II}^- ={}&  -\frac{dt}{2k} \left( \E \log \sum \exp H^-(s,\sigma^1, \sigma^2) - \E \log \sum \exp H(s,\sigma^1, \sigma^2) \right)
    \\&{}- \frac{(1-t)d}{2k} \left( \E \log \sum \exp H^-_1(s,\sigma^1, \sigma^2) - \E \log \sum \exp H(s,\sigma^1, \sigma^2) \right)
    \\&{}- \frac{(1-t)d}{2k} \left( \E \log \sum \exp H^-_2(s,\sigma^1, \sigma^2) - \E \log \sum \exp H(s,\sigma^1, \sigma^2) \right)
\end{align*}
\[ \Delta^{-}(\sigma^1, \sigma^2) = \sigma_{j_1}^1 \cdots \sigma_{j_K}^1 + \sigma_{j_1}^2 \cdots \sigma_{j_K}^2 + \tanh(\delta) \sigma_{j_1}^1 \cdots \sigma_{j_K}^1 \sigma_{j_1}^2 \cdots \sigma_{j_K}^2\]
Repeating the same perturbation based calculation on the interpolated hamiltonian with $p_{e} = -1$ in front of every hyperedge $e$ as is done in the positive case, it is not hard to obtain
\begin{align*}
    \mathrm{II}^- %
    &= -\frac{d}{k} \log \cosh \delta \\
    &- \frac{d}{2k}\sum_{r = 1}^{\infty} (-1)^{r-1}\frac{\tanh(\delta)^r}{r}\left( t\E\iprod{\Delta^-(\sigma^1,\sigma^2)}_s^r + (1-t)\left[\E\iprod{\sigma_{j_1}^1 \cdots \sigma_{j_K}^1}_s^r +
    \E\iprod
    {\sigma_{j_1}^2 \cdots \sigma_{j_K}^2}_s^r\right]\right)
\end{align*}
By the arguments in the proof of~\cite[Lemma 4.1]{chen2019suboptimality}, introducing replicas $\sigma^{\ell,1}$ and $\sigma^{\ell,2}$ for $\sigma^{1}$ and $\sigma^2$, the first term of this Taylor series is
\begin{align*}
 &\tanh(\delta)\left( t\E\iprod{\Delta^{-}(\sigma^1,\sigma^2)}_s + (1-t)\E\iprod{\sigma_{j_1}^1 \cdots \sigma_{j_K}^1}_s + (1-t)\E\iprod{\sigma_{j_1}^2 \cdots \sigma_{j_K}^2}_s  \right) 
 \\{}&=
 \tanh(\delta)\left( t\E\iprod{m(\sigma^{1,1})^k + m(\sigma^{1,2})^k + \tanh(\delta)(R^{1,1}_{1,2})^k}_s + (1-t)\E\iprod{m(\sigma^{1,1})^k + m(\sigma^{1,2})^k}_s \right)
 \\{}&=
 \tanh(\delta)\left(\E\iprod{m(\sigma^{1,1})^k + m(\sigma^{1,2})^k}_s + t \tanh(\delta)\E\iprod{(R^{1,1}_{1,2})^k}_s \right)
\end{align*}

We now now compute the second-order term of the Taylor expansion above and evaluate the terms $\lE[\langle \Delta^{-}(\sigma^1, \sigma^2) \rangle^2_s]$ and $\sum_{l=1}^2\lE[\langle \sigma^l_{j_1}\cdots\sigma^{l}_{j_k} \rangle^2_s]$. \\
Using the introduced replicas and averaging over the indices of the hyperedges as in~\cite[Proof of Lemma 4.1]{chen2019suboptimality} yields
\begin{align}\label{eq:exp-one-minus}
    &\lE[\langle \Delta^{-}(\sigma^1, \sigma^2) \rangle^2_s] = \lE\bigg[\bigg\langle\lE\nolimits' \bigg[\bigg(\sum_{l=1}^2\prod_{r = 1}^k\sigma^{1, l}_{j_r} + \tanh(\delta)\left(R^{1, 1}_{1, 2}\right)^k\bigg)\bigg(\sum_{l=1}^2\prod_{r = 1}^k\sigma^{2, l}_{j_r} + \tanh(\delta)\left(R^{2, 2}_{1, 2}\right)^k\bigg)\bigg]\bigg\rangle_s\bigg] \nonumber \\
    &= \lE\bigg[\bigg\langle \sum_{r_1, r_2 = 1}^2\left(R^{1, 2}_{r_1, r_2}\right)^k \bigg\rangle_s + \tanh(\delta)\left((R^{2, 2}_{1, 2})^k\sum_{l=1}^2\prod_{r = 1}^k\sigma^{1, l}_{j_r} + (R^{1, 1}_{1, 2})^k\sum_{l=1}^2\prod_{r = 1}^k\sigma^{2, l}_{j_r}\right) + \mathcal{O}(\tanh(\delta)^2)\bigg]\, .
\end{align}
As in~\cite[Proof of Lemma 4.1, Pg 15]{chen2019suboptimality}, for the second term we have that
\begin{align*}
    \sum_{l=1}^2\lE[\langle\sigma^l_{i_1}\cdots\sigma^l_{i_k}\rangle^2_s] = \lE\bigg[\bigg\langle\sum_{r=1}^2 (R^{1,2}_{r, r})^k\bigg\rangle_s\bigg]\, .
\end{align*}

Using the facts that the $\Delta^-(\sigma^1, \sigma^2) \leq 3$ and $|\sigma^l_{i_1}\cdots\sigma^{l}_{i_k}| \leq 1$, all terms of order $\geq 3$ in the Taylor expansion are no more than $L\frac{d}{k}\delta^3$ for an appropriate $L > 0$.
Putting together the expansions yields the following bound,
\begin{align*}
    \mathrm{II}^- &= -\frac{d}{k}\log\cosh(\delta) -\frac{d}{2k}\tanh(\delta)\left(\E\iprod{m(\sigma^{1,1})^k + m(\sigma^{1,2})^k}_s +  t\tanh(\delta)\E\iprod{(R^{1,1}_{1,2})^k}_s \right) \\
    &+ \frac{td}{2k}\frac{\tanh(\delta)^2}{2}\left(\lE\bigg[\bigg\langle \sum_{r_1, r_2 = 1}^2\left(R^{1, 2}_{r_1, r_2}\right)^k \bigg\rangle_s + \tanh(\delta)\left((R^{2, 2}_{1, 2})^k\sum_{l=1}^2\prod_{r = 1}^k\sigma^{1, l}_{j_r} + (R^{1, 1}_{1, 2})^k\sum_{l=1}^2\prod_{r = 1}^k\sigma^{2, l}_{j_r}\right) + \mathcal{O}(\tanh(\delta)^2)\bigg]\right) \\
    &+ \frac{d(1-t)}{2k}\frac{\tanh(\delta)^2}{2}\left(\lE\bigg[\bigg\langle\sum_{r=1}^2 (R^{1,2}_{r, r})^k\bigg\rangle_s\bigg]\right)\, .\nonumber
\end{align*}
Notice that all terms of $O(\tanh(\delta)^3)$ and higher powers therein can be absorbed into the term $L\frac{d}{k}\delta^3$. This finally yields
\begin{align}
     \mathrm{II}^- &= -\frac{d}{k}\log\cosh(\delta) -\frac{d}{2k}\tanh(\delta)\left(\E\iprod{m(\sigma^{1,1})^k + m(\sigma^{1,2})^k}_s\right)  -\frac{td}{2k}\tanh(\delta)^2\left(\E\iprod{(R^{1,1}_{1,2})^k}_s \right) \nonumber \\
    &+ \frac{d}{2k}\frac{\tanh(\delta)^2}{2}\left(\lE\bigg[\bigg\langle t\left(R^{1, 2}_{1, 2}\right)^k + t\left(R^{1, 2}_{2, 1}\right)^k + \left(R^{1, 2}_{1, 1}\right)^k + \left(R^{1, 2}_{2, 2}\right)^k\bigg\rangle_s\right)\bigg] +  O\left(\frac{d}{2k}\delta^3\right)\, .
\end{align}

\textbf{Combining $\mathrm{II}^+$ and $\mathrm{II}^-$.} We now add the positive and negative terms together to obtain the equivalent of $\mathrm{II}$ for $H^G_{\mathrm{signed}}$. As a result of the sign flip, the magnetiziation dependencies cancel out. %
This finally yields
\begin{align}
    \mathrm{II} = \mathrm{II}^+ + \mathrm{II}^- &= -\frac{2d}{k}\log\cosh(\delta) - \frac{td}{k}\tanh(\delta)^2\left(\E\iprod{(R^{1,1}_{1,2})^k}_s \right) \nonumber \\
    &+ \frac{d}{k}\frac{\tanh(\delta)^2}{2}\left(\lE\bigg[\bigg\langle t\left(R^{1, 2}_{1, 2}\right)^k + t\left(R^{1, 2}_{2, 1}\right)^k + \left(R^{1, 2}_{1, 1}\right)^k + \left(R^{1, 2}_{2, 2}\right)^k\bigg\rangle_s\right)\bigg] + O\left(\frac{d}{k}\delta^3\right)\, .
\end{align}
For fixed $d, k$ and $\delta$, we define $\beta$ as
\[
    \beta = \sqrt{\frac{d}{k}}\tanh(\delta)\, .
\]
This choice causes the overlap terms to cancel in $\phi(s)$, yielding the following rate of change of the free energy
\begin{align}
    \phi'(s) = \mathrm{I} + \mathrm{II} = -\frac{d}{k}\log\cosh(\delta) + \frac{d}{k}\tanh(\delta)^2 + O\left(\frac{d}{k}\delta^3\right) \overset{\delta \to 0}{\longrightarrow} O\left(\frac{d}{k}\delta^3\right) + O\left(\frac{d}{k}\delta^4\right)\, ,
\end{align}
where $\mathrm{I}$ is defined equivalently as in the proof of~\cite[Lemma 4.1, Pg 13]{chen2019suboptimality}. This immediately yields that
\begin{align*}
    \frac{1}{\delta}\left(\phi(1) - \phi(0)\right) = \frac{1}{\delta}\int_{0}^1\phi'(s)ds = O\left(\frac{d}{k}\delta^2\right)\, .
\end{align*}
The rest of the argument follows exactly as in~\cite[Proof of Lemma 4.1, Pg 16]{chen2019suboptimality} with the final substitution $\delta = \left(\frac{d}{k}\right)^{-1/3}$.
\end{proof}

Having proved the key interpolation lemma about the coupled free energies in the diluted and dense models, we now state the coupled OGP for the random \emph{signed}~\kxors\ problem.  
\begin{theorem}[Coupled OGP for random \kxors{}, $k$ even]
\label{thm:ogp-kxors-coupled}
For every even $k \geq 4$, there exists an interval $0 < a < b < 1$ and parameters $d_0 > 0$, $0 < \eta_0 < P(k)$ and $n_0 > 1$, such that, for any $t \in [0,1]$, $d \geq d_0$, $n \geq n_0$ and constant $L = L(\eta_0, d)$, with probability at least $1 - Le^{-n/L}$ over the $t$-coupled \kxors{} instance pair $(\Psi_1, \Psi_2) \sim \mathcal{H}^{\pm}_{n, d, k, t}$, whenever two spins $\sigma_1,\ \sigma_2$ satisfy
    \[
        \frac{H^{\Psi_i}(\sigma_i)}{n} \geq M(k, d)\left(1 - \frac{\eta_0}{P(k)}\right) \, ,
    \]
    then their overlap satisfies $|R(\sigma_1, \sigma_2)| \notin [a, b]$.
\end{theorem}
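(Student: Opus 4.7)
The plan is to mirror the proof of Theorem \ref{thm:ogp-spin-glasses-coupled} from \cite{chen2019suboptimality}, substituting our signed interpolation Lemma \ref{lem:interpolation-kxor} for their unsigned Gaussian--Poisson interpolation. That lemma shows that, after normalization by $n$, the expected coupled maximum of the two signed diluted hamiltonians agrees with the corresponding quantity for the dense $k$-spin mean field model scaled by $\sqrt{d/k}$, up to an additive error $O((d/k)^{1/3})$. Hence any coupled overlap gap known at the mean field level lifts to the diluted signed setting once $d/k$ is large enough that $\sqrt{d/k}$ dominates $(d/k)^{1/3}$.

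Step one is to fix the overlap interval $0 < a < b < 1$ and the threshold $0 < \eta_0 < P(k)$ coming from the two-replica coupled OGP for the $k$-spin mean field model (used on the dense side of the interpolation in \cite{chen2019suboptimality}, originally due to Chen--Sen/Panchenko). These parameters provide a constant $c = c(\eta_0) > 0$ such that, for every $t \in [0,1]$ and all sufficiently large $n$,
\[ \tfrac{1}{n}\E \max_{|R(\sigma^1,\sigma^2)|\in[a,b]} \big(H_k^1(\sigma^1) + H_k^2(\sigma^2)\big) \le \tfrac{1}{n}\E \max_{\sigma^1,\sigma^2} \big(H_k^1(\sigma^1) + H_k^2(\sigma^2)\big) - c, \]
where $(H_k^1, H_k^2)$ is the $t$-Gaussian-coupled pair of mean field $k$-spin hamiltonians appearing on the right-hand side of Lemma \ref{lem:interpolation-kxor}. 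Applying that lemma once with $S=[a,b]$ and once with $S=[0,1]$, then subtracting, for $d \ge d_0$ chosen so that $c\sqrt{d/k}/2$ exceeds the implicit $O((d/k)^{1/3})$ error term,
\[ \tfrac{1}{n}\E \max_{|R|\in[a,b]} \big(H^1_{\mathrm{signed}} + H^2_{\mathrm{signed}}\big) \le \tfrac{1}{n}\E \max_{\sigma^1,\sigma^2} \big(H^1_{\mathrm{signed}} + H^2_{\mathrm{signed}}\big) - \tfrac{c}{2}\sqrt{d/k}. \]

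Step two is to upgrade expectation to high probability. Both maxima are functions of the $t$-coupled signed Poisson edge process whose value changes by $O(1)$ upon any single-edge edit, and so each concentrates around its expectation at rate $\exp(-n/L)$ for some $L = L(\eta_0,d)$, via the stronger McDiarmid argument of \autoref{sec:stronger-mcdiarmids} (exactly as used to prove \autoref{lem:concentration-random-coupled-hypergraphs}). A union bound combined with the previous display yields, with probability at least $1 - Le^{-n/L}$, that any coupled pair $(\sigma_1,\sigma_2)$ with $|R(\sigma_1,\sigma_2)|\in[a,b]$ has combined energy strictly less than the unrestricted maximum by at least $(c/4)\sqrt{d/k}\cdot n$. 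Choosing $\eta_0$ small enough that $2 M(k,d)\eta_0/P(k) < (c/4)\sqrt{d/k}$ --- possible by \eqref{eq:diluted-to-mean-field}, since $M(k,d) \sim \sqrt{d/k}\,P(k)$ as $d \to \infty$ --- forces any pair of $(1-\eta_0/P(k))$-optimal solutions to lie outside this overlap window on this event, finishing the proof.

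The main obstacle will be verifying that the coupled OGP for the dense mean field reference holds with the claimed uniform gap $c$ in the signed-coupling setting. This reduces to a direct invocation of the dense coupled $k$-spin OGP used in \cite{chen2019suboptimality} once one observes that the cancellation of the magnetization term in Case II of Lemma \ref{lem:interpolation-kxor} removes any net bias: the effective mean field target is the same unbiased two-replica coupled $k$-spin model handled on the dense side of the unsigned argument, so the analytic input transfers verbatim.
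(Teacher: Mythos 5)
Your proposal is correct and takes essentially the same approach as the paper: the paper's proof is literally the one-line instruction ``follow the proof of Theorem 5 of Chen et al.\ verbatim, substituting \autoref{lem:interpolation-kxor} for their Lemma 4.1,'' and you have simply unpacked what that recipe amounts to (transfer the mean-field two-replica overlap gap through the interpolation lemma applied with $S=[a,b]$ and $S=[0,1]$, concentrate both coupled maxima via a bounded-difference/martingale argument, and close using $M(k,d)\sim\sqrt{d/k}\,P(k)$ so that $\eta_0$ can be fixed independently of $d$ once $d_0$ is large). Your observation that the sign cancellation in Case II of \autoref{lem:interpolation-kxor} leaves the same unbiased mean-field reference as on the unsigned side is precisely the point that makes the ``exact copy'' work.
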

\begin{proof}
This can be proven by an exact copy of~\cite[Theorem 5]{chen2019suboptimality}, but using \autoref{lem:interpolation-kxor} instead of~\cite[Lemma 4.1]{chen2019suboptimality}.
\end{proof}

To obstruct local algorithms using the same framework mentioned in~\autoref{sec:tech overview}, it is critical that the overlap between nearly optimal solutions of independent random~\kxors~instances (with signs) be small. To show this, we first prove that if the overlap between pairs of solutions of two independent instances of the $k$-mean field model is bounded away from 0, then they are suboptimal.
\begin{lemma}\label{lem:subopt-k-mean}%
\label{lem:pspin-mag-convex}
Consider the parameters $[a,b]$ and $\eta_0$ from \autoref{thm:ogp-kxors-coupled}.
For large enough $n$, there is $\hat{\eta} > 0$ satisfying $\hat{\eta} > \eta_0$ such that
\[
\frac{1}{n} \E\bigg[\max_{|R_{1,2}| \in [a,1]} (H_k^1(\sigma^1) + H_k^2(\sigma^2))\bigg] < 2(P(k) - \hat{\eta})
\]
where $H_k^1$ and $H_k^2$ are random 0-coupled instances of a $k$-spin glass.
\end{lemma}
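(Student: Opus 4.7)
The plan is to establish the lemma via the large-$n$ variational principle for the pure $k$-spin mean-field model combined with a simple covering argument for large overlaps. Define
\[ \Phi(q) := \lim_{n\to\infty} \frac{1}{n}\, \E \max_{R(\sigma^1,\sigma^2) = q} \bigl[H_k^1(\sigma^1) + H_k^2(\sigma^2)\bigr], \]
which exists and is continuous in $q \in [-1,1]$ by a Guerra--Toninelli subadditivity argument applied to two independent replicas, and equals $2P(k)$ at $q = 0$ since independent $k$-spin optima are asymptotically orthogonal. Using continuity, Gaussian concentration of the maximum over $\{\pm 1\}^n \times \{\pm 1\}^n$, and the symmetry $\Phi(q) = \Phi(-q)$ (which holds because $k$ is even and thus $H_k(\sigma) = H_k(-\sigma)$), it suffices to prove the strict bound $\sup_{q \in [a,1]} \Phi(q) < 2P(k)$.

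I would split $[a,1]$ into two overlap regimes. On $q \in [a,b]$, invoke the two-replica OGP for pure even-$k$ mean-field models (the Parisi-formula computation underlying \cite[Section 4]{chen2019suboptimality}): there exist $0 < a < b < 1$ and $\eta_0 > 0$ such that no pair $(\sigma^1, \sigma^2)$ can simultaneously achieve $H_k^i(\sigma^i) \ge n(P(k) - \eta_0)$ while having $|R(\sigma^1,\sigma^2)| \in (a,b)$; consequently $H_k^1(\sigma^1) + H_k^2(\sigma^2) \le n(2P(k) - \eta_0)$, giving $\Phi(q) \le 2P(k) - \eta_0$ on $[a,b]$. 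On $q \in [b,1]$, use a covering bound: for each fixed $\sigma^1$, the number of $\sigma^2$ with $R(\sigma^1,\sigma^2) \ge b$ is at most $2^{n H((1-b)/2)}$, where $H$ is the binary entropy; since $b > 0$ this exponent is strictly less than $1$, so a standard Gaussian union bound on $H_k^2(\sigma^2)$ restricted to this set yields $\tfrac{1}{n}\max_{\sigma^2} H_k^2(\sigma^2) \le P(k) - \eta''$ with probability $1 - e^{-\Omega(n)}$ for some $\eta'' = \eta''(b,k) > 0$.

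Combining these two estimates with continuity of $\Phi$ on the compact interval $[a,1]$ gives a uniform $\hat{\eta}_\star > 0$ with $\sup_{q \in [a,1]} \Phi(q) \le 2P(k) - 2\hat{\eta}_\star$. To arrange the strict inequality $\hat{\eta} > \eta_0$ required by the lemma, note that the OGP of \autoref{thm:ogp-kxors-coupled} trivially persists if $\eta_0$ is replaced by a smaller positive constant (near-optimality becomes a stricter condition), so we shrink $\eta_0$ until $\eta_0 < \hat{\eta}_\star$ and pick any $\hat{\eta} \in (\eta_0, \hat{\eta}_\star)$. Transferring from the limit $\Phi$ back to $\E[\max]$ for large but finite $n$ is then immediate from Gaussian concentration.

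The main obstacle will be the sub-interval $[a,b]$: we need the two-replica pure-$k$-spin mean-field OGP with constants compatible with those appearing in \autoref{thm:ogp-kxors-coupled}. This is a Parisi-formula computation of the constrained two-replica free energy showing strict suboptimality at any $q \ne 0$, and matching the constants to the diluted statement can be arranged by invoking \autoref{lem:interpolation-kxor} in the large-$d$ regime. Care is needed to track the $O((d/k)^{1/3})$ error term in the interpolation so that it does not swallow the strict gap on $[a,b]$.
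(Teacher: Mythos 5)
Your approach diverges significantly from the paper's. The paper gets the $[a,1]$ bound essentially for free by tracing the chain of citations: in the proof of~\cite[Theorem 5]{chen2019suboptimality}, the constants $a$ and $\eta_0$ are chosen so that $\limsup_n \frac{1}{n}\E\max_{|R_{1,2}|\in[a,b]}(H_k^1 + H_k^2) < 2P(k) - 6\eta_0$ uniformly in $t$, but the cited source at $t=0$ is~\cite[Theorem~2]{chen2018disorder}, which in fact controls the full range $|R_{1,2}| \in [a,1]$ (restriction to $[a,b]$ was only needed for other $t$). Since the gap there is $6\eta_0$, the lemma's $\hat{\eta}$ can be taken to be roughly $3\eta_0$, which is automatically $> \eta_0$. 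No new estimates are produced; the point is to extract a statement the reference already implies.

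Your plan, by contrast, attempts a self-contained derivation, and the step for $q \in [b,1]$ has a genuine gap. The covering bound gives that the restricted set has at most $2^{nH((1-b)/2)}$ points, so a first-moment/union bound on the Gaussian process yields $\frac{1}{n}\E\max \lesssim \sqrt{2\ln 2 \cdot H\bigl((1-b)/2\bigr)}$. This is \emph{not} automatically less than $P(k)$: for example, with $b = 1/2$ one gets $\sqrt{2\ln 2 \cdot H(1/4)} \approx 1.06$, which exceeds $P(4)$. Since the $b$ supplied by \autoref{thm:ogp-kxors-coupled} (tracing back to~\cite[Theorem 2]{chen2019suboptimality}) is not guaranteed to be close enough to $1$ for the binary-entropy exponent to beat $P(k)^2/(2\ln 2)$, the covering argument does not close this regime in general. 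You would instead need the two-replica bound on all of $[a,1]$, which is precisely~\cite[Theorem~2]{chen2018disorder} (a Guerra-interpolation computation, not a counting bound). Separately, you propose \emph{shrinking} $\eta_0$ to make room for $\hat{\eta} > \eta_0$, but the lemma fixes $\eta_0$ as the one appearing in \autoref{thm:ogp-kxors-coupled}, so that modification would have to be propagated back through the statement of that theorem; the paper instead uses the factor-of-$6$ slack already built into Chen et al.'s choice of $\eta_0$, which makes this step immediate without touching the theorem's constants.
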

\begin{proof}

In the proof of \autoref{thm:ogp-kxors-coupled} as in that of~\cite[Theorem 5]{chen2019suboptimality}, the parameter $a$ is chosen from~\cite[Theorem 3]{chen2019suboptimality}, and the parameter $\eta_0$ is chosen so that
\[ \limsup_{n \to \infty} \E \frac{1}{n} \max_{|R_{1,2}| \in [a,b]} (H_k^1(\sigma^1) + H_k^2(\sigma^2)) < 2P(k) - 6\eta_0, \]
for all $t \in [0,1]$, where $R_{1,2}$ is the overlap between $\sigma^1$ and $\sigma^2$, and $H_k^1$ and $H_k^2$ are $t$-coupled instances of a $k$-spin glass.

However, in fact for the $t = 0$ case (as noted in the remark in the proof of \cite[Theorem 5]{chen2019suboptimality}), the proof cites~\cite[Theorem 2]{chen2018disorder}, which provides a bound for $|R_{1,2}| \in [a,1]$ with $|R_{1,2}| \in [a,b]$ being a subcase of that. This implies
\[ \limsup_{n \to \infty} \E \frac{1}{n} \max_{|R_{1,2}| \in [a,1]} (H_k^1(\sigma^1) + H_k^2(\sigma^2)) < 2P(k) - 6\eta_0, \]
at $t = 0$.

Continuing in the special case where $t=0$, then, let $\eta$ be so that 
\[ \limsup_{n \to \infty} \E \frac{1}{n} \max_{|R_{1,2}| \in [a,1]} (H_k^1(\sigma^1) + H_k^2(\sigma^2)) = 2(P(k) - \eta), \]
Since $\eta \ge 3\eta_0$,  we choose $\hat{\eta} = \eta + \epsilon$, for some $\epsilon$ which is allowed to be arbitrarily small as $n \to \infty$. 

\end{proof}

We now extend~\autoref{lem:subopt-k-mean} to setting of the diluted model with signs, and show that pairs of nearly optimal solutions of independent random~\kxors~instances (signed) have low overlap. 
\begin{lemma}
\label{lem:kxors-overlap}
Consider the parameters $[a,b]$ and $\eta_0$ from \autoref{thm:ogp-kxors-coupled}.
For large enough $n$, there are $\eta', L > 0$ with $\eta' > \eta_0$ such that
\begin{align*}
\frac{1}{n}\max_{\substack{\sigma^1,\sigma^2 \in \{\pm 1\}^n, \\ |R_{1,2}| \ge a}} (H_{\mathrm{signed}}^1(\sigma^1) + H_{\mathrm{signed}}^2(\sigma^2)) &{}\le 2(P(k) - \eta')\sqrt{\frac{d}{k}} + O(\sqrt[3]{d/k})
\\&\le 2M(k,d)\left(1 - \frac{\eta'}{P(k)} + O\left(\left(\frac{d}{k}\right)^{-1/6}\right)\right).
\end{align*}
with probability at least $1-2Le^{n/L}$ over the random choice of $0$-coupled \kxors{} instances $H_{\mathrm{signed}}^1$ and $H_{\mathrm{signed}}^2$.
\end{lemma}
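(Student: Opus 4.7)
The plan is to push the mean-field suboptimality bound of \autoref{lem:subopt-k-mean} through the signed Guerra--Toninelli interpolation of \autoref{lem:interpolation-kxor} with the overlap constraint $|R_{1,2}| \ge a$, and then upgrade the resulting bound in expectation to a high-probability statement by concentration of the ground-state energy.

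First, I would apply \autoref{lem:interpolation-kxor} in the special case $t = 0$ with $A = \{\pm 1\}^n$ and $S = [a,1]$. This yields
\[
\frac{1}{n}\,\E\!\left[\max_{\substack{\sigma^1,\sigma^2 \in \{\pm 1\}^n \\ |R_{1,2}| \in [a,1]}}\! \big(H^1_{\mathrm{signed}}(\sigma^1) + H^2_{\mathrm{signed}}(\sigma^2)\big)\right] = \sqrt{\tfrac{d}{k}}\cdot \frac{1}{n}\,\E\!\left[\max_{\substack{\sigma^1,\sigma^2 \\ |R_{1,2}| \in [a,1]}}\! \big(H^1_k(\sigma^1)+H^2_k(\sigma^2)\big)\right] + O\!\left(\!\left(\tfrac{d}{k}\right)^{1/3}\right),
\]
where the two copies of $H^1_{\mathrm{signed}}, H^2_{\mathrm{signed}}$ are independent $0$-coupled signed diluted $k$-spin Hamiltonians and $H^1_k, H^2_k$ are independent $k$-spin mean-field Hamiltonians. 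By \autoref{lem:subopt-k-mean} (noting we are in the $t=0$ case where the $[a,1]$ bound is what is actually shown), the mean-field term is bounded by $2(P(k)-\hat\eta)$ for some $\hat\eta > \eta_0$. Plugging in, there exists $\eta' \in (\eta_0, \hat\eta)$ and a constant $C$ such that
\[
\frac{1}{n}\,\E\!\left[\max_{\substack{\sigma^1,\sigma^2 \\ |R_{1,2}| \ge a}}\! \big(H^1_{\mathrm{signed}}(\sigma^1) + H^2_{\mathrm{signed}}(\sigma^2)\big)\right] \;\le\; 2(P(k)-\eta')\sqrt{\tfrac{d}{k}} + C\,(d/k)^{1/3}.
\]

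Second, I would upgrade this expectation bound to a high-probability bound. The key observation is that the constrained maximum $F(H^1_{\mathrm{signed}}, H^2_{\mathrm{signed}}) := \max_{|R_{1,2}|\ge a}(H^1(\sigma^1)+H^2(\sigma^2))$ is a function of the independent Poisson-distributed hyperedge multiplicities (together with the independent Rademacher signs), and satisfies a bounded-differences inequality: changing the presence/sign of a single hyperedge alters $F$ by at most $2$. Viewing each hyperedge slot as an independent random variable which is ``trivial'' with probability $1 - O(d/n^{k-1})$ (the edge is absent), the strengthened McDiarmid inequality \autoref{lem:stronger-mcdiarmids} applied to the $n^k$ slots gives
\[
\Pr\!\left[\big|F - \E F\big| \ge \tfrac{1}{2}\eta_0'\sqrt{d/k}\,n\right] \;\le\; 2\exp(-n/L)
\]
for some constant $L = L(d,k,\eta_0')$, where $\eta_0' := \eta' - \eta_0 > 0$. (Standard Poisson concentration handles the atypical case that a vertex is repeated too many times among the few present edges; alternatively one truncates $F$ as in the proof of \autoref{lem:concentration-random-hypergraphs}.) Combining the two estimates and absorbing the deviation into a slightly smaller $\eta' > \eta_0$ yields the first inequality in the statement.

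Finally, the second inequality in the statement is a routine consequence of the relation between $M(k,d)$ and $P(k)$ recorded in \eqref{eq:diluted-to-mean-field}, namely $M(k,d) = \sqrt{d/k}\,P(k) + o(\sqrt d)$, which lets us rewrite $2(P(k)-\eta')\sqrt{d/k}$ as $2M(k,d)(1 - \eta'/P(k)) + O(\sqrt{d/k}\cdot o(1))$, and the $O((d/k)^{1/3})$ interpolation error is absorbed into the $M(k,d) \cdot O((d/k)^{-1/6})$ correction term. The main obstacle in this plan is verifying that the interpolation argument of \autoref{lem:interpolation-kxor} really does go through with the overlap constraint $|R_{1,2}|\ge a$ preserved throughout (i.e., that restricting the Gibbs measure in \eqref{eq:gibbs-measure} to this overlap set does not break any cancellation), and that the resulting $\eta'$ can be chosen strictly greater than $\eta_0$ after absorbing the $O((d/k)^{1/3})$ interpolation error and the McDiarmid slack --- this is where the gap $\hat\eta - \eta_0 > 0$ in \autoref{lem:subopt-k-mean} is essential, and one needs $d/k$ to be taken large enough (increasing $d_0$ if necessary).
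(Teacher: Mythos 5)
Your proposal matches the paper's proof essentially step for step: instantiate \autoref{lem:interpolation-kxor} at $t=0$ with $A=\{\pm1\}^n$, $S=[a,1]$, plug in the mean-field suboptimality bound from \autoref{lem:subopt-k-mean}, combine, and then upgrade the expectation bound to a high-probability one. The only (immaterial) difference is in the last step: the paper cites ``Azuma's inequality and concentration of Poisson random variables,'' whereas you invoke \autoref{lem:stronger-mcdiarmids} directly on the $n^k$ independent hyperedge-multiplicity variables --- this is just a cleaner packaging of the same martingale argument, and in fact it lets you arrive at the $e^{-n/L}$ tail (correcting what is evidently a sign typo in the paper's statement of the probability bound).
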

\begin{proof}
By \autoref{lem:pspin-mag-convex}, there is a choices of $\hat{\eta}$ satisfying $\hat{\eta} > \eta_0$ such that
\[
\E \frac{1}{n} \max_{\substack{\sigma^1,\sigma^2 \in \{\pm 1\}^n \\ |R_{1,2}| \in [a,1]}} (H_k^1(\sigma^1) + H_k^2(\sigma^2)) < 2(P(k) - \hat{\eta}),
\]
where $H_{k}^1$ and $H_{k}^2$ are still $0$-coupled instances of the mean-field $k$-spin glass and $R_{1,2}$ is the overlap between $\sigma^1$ and $\sigma^2$.

By instantiating \autoref{lem:interpolation-kxor} with $S = [a,1]$ and $A = \{\pm 1\}^n$ and $t=0$, we see that
\[\frac{1}{n}\lE\max_{\substack{\sigma^1,\sigma^2 \in \{\pm 1\}^n \\ |R_{1,2}| \in [a,1]}} (H_{\mathrm{signed}}^1(\sigma^1) + H_{\mathrm{signed}}^2(\sigma^2))  = \frac{1}{n}\sqrt{\frac{d}{k}}\lE\max_{\substack{\sigma^1,\sigma^2 \in \{\pm 1\}^n \\ |R_{1,2}| \in [a,1]}} (H_{k}^1(\sigma^1) + H_{k}^2(\sigma^2)) \pm O(\sqrt[3]{d/k}),\]
Combining the two above inequalities,
\[ \frac{1}{n}\lE\max_{\substack{\sigma^1,\sigma^2 \in \{\pm 1\}^n \\ |R_{1,2}| \in [a,1]}} (H_{\mathrm{signed}}^1(\sigma^1) + H_{\mathrm{signed}}^2(\sigma^2)) \le 2(P(k) - \hat{\eta})\sqrt{\frac{d}{k}} + O(\sqrt[3]{d/k}) \]

By an application of Azuma's inequality and concentration of Poisson random variables, there is some $L$ as a function of $\hat{\eta}$ and $\eta_0$ such that 
\[ \frac{1}{n}\max_{\substack{\sigma^1,\sigma^2 \in \{\pm 1\}^n, \\ |R_{1,2}| \ge a}} (H_{\mathrm{signed}}^1(\sigma^1) + H_{\mathrm{signed}}^2(\sigma^2))  \le 2(P(k) - \eta')\sqrt{\frac{d}{k}} + O(\sqrt[3]{d/k})\]
for some $\eta'$ satisfying $\eta_0 < \eta' < \hat{\eta}$, with probability at least $1-2Le^{-n/L}$.
\end{proof}

\section{Discussion}\label{sec:conjectures}

 Our work conclusively establishes the coupled OGP as an obstruction to all local quantum algorithms on \emph{any} $(k, d)\text{-}\mathsf{CSP}(f)$. In doing this, the work hints at and leaves open many interesting questions for future work in areas that are at the intersection of Quantum inapproximability, Statistical Physics, Random Graph Theory, Combinatorial Optimization and Average-Case Complexity.

\subsection{Which CSPs have an OGP?}\label{subsec:which-csps-have-ogp}
While various sparse CSPs such as $\mathsf{k}\text{-}\mathsf{SAT}$, unsigned $\mathsf{max}\text{-}k\text{-}\mathsf{XOR}$ and $\mathsf{k}\text{-}\mathsf{NAE}\text{-}\mathsf{SAT}$ have been shown to exhibit clustering in their solution spaces at different clause-to-variable ratios~\cite{achlioptas2006solution, ding2016satisfiability, chen2019suboptimality}, it is not known whether this property is pervasive to most CSPs or something that happens to a select few. Therefore, in order to understand the complexity landscape of CSPs on typical instances better, the following open question is interesting to investigate:
\begin{conjecture}[Random Predicate CSPs and coupled OGP]
\label{prob:csp-ogps}
  Given a function $f$ chosen uniformly at random from the set of functions $\mathcal{B}_k = \{g\ \mid\ g:\{\pm 1\}^k \to \{0, 1\}\}$, $(k, d)\text{-}\mathsf{CSP}(f)$ has a coupled-OGP for sufficiently large $k$ and $d$ with high probability (over the choice of $f$ and instance $\Psi \sim (k,d)\text{-}\mathsf{CSP}(f)$).
\end{conjecture}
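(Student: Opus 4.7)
The plan is to reduce the conjecture to known (or extensions of known) OGP results for mean-field mixed $p$-spin glass models via Fourier analysis and the same Poisson interpolation machinery used in \autoref{sec:signed-interpolation}. First, I would expand $f$ in the Walsh--Fourier basis, $f(x) = \sum_{S \subseteq [k]} \hat{f}(S) \prod_{i \in S} x_i$, and substitute into the Hamiltonian $H^\Psi(\sigma) = \sum_{i=1}^m f(s_{i,1}\sigma_{i_1},\dots,s_{i,k}\sigma_{i_k})$. Because the random signs $s_{i,j}$ are i.i.d.\ Rademacher, the $S$-indexed contributions decouple, rewriting $H^\Psi(\sigma)$ as $\sum_{S} \hat{f}(S)\, H_S^{\Psi}(\sigma)$ where each $H_S^\Psi$ is a signed $|S|$-XOR Hamiltonian on the hypergraph projected to the coordinates in $S$. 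For uniformly random $f \in \mathcal{B}_k$ the coefficients $\hat{f}(S)$ are mean-zero for $|S|\ge1$ with variance $\Theta(2^{-k})$ and, by a Hanson--Wright type bound, the level weights $w_p(f) := \sum_{|S|=p}\hat{f}(S)^2$ concentrate around $\binom{k}{p}2^{-k-2}$, which is sharply peaked near $p = k/2$ for large $k$.

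Second, I would extend \autoref{lem:interpolation-kxor} from pure $k$-XOR to this Fourier-decomposed Hamiltonian. For each $p \in [k]$, Guerra--Toninelli interpolation between the diluted $p$-XOR component and the mean-field $p$-spin component proceeds as in the existing proof; the new ingredient is handling cross-terms across different Fourier levels. Because within a single clause the signs $s_{i,j}$ supply independent Rademacher factors on distinct monomials $\prod_{i \in S}x_i$ and $\prod_{i \in S'}x_i$ whenever $S \ne S'$, the mixed cross-derivatives in the interpolation vanish in expectation after summing over signs. The surviving derivative then splits as a weighted sum of the single-level derivatives already controlled in \autoref{lem:interpolation-kxor}, giving an approximate identity between the free energy of the diluted random-predicate CSP and the free energy of a mixed $p$-spin mean-field Hamiltonian $\beta(f)\sum_p c_p(f)\, H_p^{\mathrm{MF}}(\sigma)$ with $c_p(f)^2 \propto w_p(f)$, up to an $O((d/k)^{1/3})$ error uniform in the overlap-constrained set $A$ as well as in the coupling parameter $t$ (the coupling is preserved edgewise by running the interpolation simultaneously on a $t$-coupled pair as in \autoref{def:coupled-interpolation}).

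Third, I would invoke (or prove) a coupled OGP for the resulting random mean-field mixture. At this point, the coupling parameter $t$ enters only through the shared vs.\ independent edge sets, which after interpolation become shared vs.\ independent Gaussian disorder, so the target reduces to a standard coupled OGP statement for mixed $p$-spin models with disorder profile $(c_p(f))_{p=1}^k$. Given the sharp concentration of $w_p(f)$ around the binomial profile, the effective mixture puts $\Omega(1)$ weight on levels $p \ge k^{1/2}$ with exponentially high probability over $f$; this places the model in the regime where the replica-symmetry-breaking arguments of Chen et al.\ and Auffinger--Chen (and their adaptations in \cite{chen2019suboptimality, chen2018disorder}) for pure high-$p$ spin glasses apply almost verbatim, after replacing the single scalar interpolation parameter by a vector of per-level parameters. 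Transferring the dense coupled OGP back through the interpolation of step two then yields \autoref{prob:csp-ogps} for all sufficiently large $k$ and $d$.

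The hardest step will be the third one: establishing the coupled OGP quenched over the random mixture profile $(w_p(f))_{p=1}^k$ and uniform in $t \in [0,1]$. The standard second-moment/Guerra replica-bound arguments for pure $p$-spin rely on explicit control of the Parisi functional associated with the pure monomial $x^p$; the mixture $\sum_p c_p(f)^2 x^p$ arising from random $f$ is effectively $((1+x)/2)^{k/2}$-like in shape, and one must verify that the corresponding Parisi/Crisanti--Sommers functional has the strict convexity and ``gap'' structure that underlies OGP proofs. I expect the sharp concentration of the binomial profile around $p = k/2$ to make this verification tractable for large $k$ (where the mixture limits to a canonical high-degree model), but for the moderate $k$ regime a quenched second-moment calculation, averaged over $f$, will likely be needed; this is the principal technical obstacle.
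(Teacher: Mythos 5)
\autoref{prob:csp-ogps} is posed as an open conjecture and not proved in the paper, so there is no reference proof to compare against; what you have written is a research program, and you are upfront that the third step is the bottleneck. I agree it is, but I think there are two additional gaps that sit earlier in the plan and change which underlying spin-glass results you would need.

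The first is the odd-degree Fourier mass. Your decomposition $H^\Psi(\sigma) = \sum_{S} \hat{f}(S)\, H_S^\Psi(\sigma)$ is algebraically correct, and the cross-level expectations do vanish because $\prod_{j\in S} s_{i,j}$ and $\prod_{j\in S'} s_{i,j}$ are independent Rademacher variables for $S\neq S'$. But for uniformly random $f$ roughly half of the level weight $\sum_p w_p(f)$ sits at odd $p$, including around $p\approx k/2$ where you want to concentrate. Every OGP statement the paper leans on --- Chen et al.'s Theorems~2 and~5, their Lemma~4.1, and \autoref{thm:ogp-kxors-coupled} here --- is proved only for \emph{even} $k\geq 4$, and the evenness is load-bearing: it gives $H(\sigma)=H(-\sigma)$ and makes the overlap-restricted free energy symmetric in $R_{1,2}\leftrightarrow -R_{1,2}$, which is used both in the Guerra bound and in the interpolation error control. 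Your reduction therefore lands on a mixed even-plus-odd $p$-spin model with comparable even and odd mass, and a coupled OGP in the form this paper's obstruction machinery consumes is not, to my knowledge, available off the shelf for such mixtures from \cite{chen2019suboptimality,chen2018disorder}; you would either have to prove it, or prove a quantitative perturbation-stability lemma for the OGP that lets you discard the odd levels, and neither is in your plan. The degree-$1$ terms are a special instance of this problem: they constitute a random external field of per-site variance $\Theta(2^{-k})$ that breaks the $\mathbb{Z}_2$ symmetry outright, and ``exponentially small'' is not by itself an argument that an OGP survives a symmetry-breaking perturbation.

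The second gap is in step two. You argue the cross-level contributions vanish, and that is true at first order in the Taylor expansion of $\phi'(s)$ in $\tanh(\delta)$. But \autoref{lem:interpolation-kxor} needs the second-order terms to cancel against the Gaussian-interpolation term via the tuning $\beta=\sqrt{d/k}\tanh(\delta)$, and needs the third-order remainder bounded uniformly over the Gibbs measure. With multiple Fourier levels present, the second- and third-order terms contain cross-level products such as $\langle \sigma_{j_1}\cdots\sigma_{j_{|S|}}\rangle_s\langle \sigma_{j_1}\cdots\sigma_{j_{|S'|}}\rangle_s$ whose expectation over the signs does not vanish pointwise inside the Gibbs bracket, so the cancellation becomes a per-level tuning problem with one parameter $\delta$ and $k$ constraints --- this is exactly why the Guerra--Toninelli interpolation literature for mixtures introduces one inverse temperature per level, and that change propagates into the error analysis. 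None of this is necessarily fatal, but as written your steps two and three are both substantially open, and the odd-degree issue in particular means the plan as stated would not close even granting the mean-field OGP you conjecture in step three, because that OGP would itself need to be proved for the wrong class of mixtures.
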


Notice that the conjecture above is specifically interested in the solution geometry of a CSP in the \emph{unsatisfiable} regime (large $d$). A positive resolution to the above conjecture will make the obstructions stated in~\autoref{thm:informal-obstruct-everything} hold for \emph{almost all} CSPs for the family of generic local algorithms. \\
Another question of interest is which properties (if any) about a predicate $f$ can be identified which would conclusively imply that a random instance $\Psi$ of a $(k, d)$-$\mathsf{CSP}(f)$ will have an OGP.

\begin{problem}[Properties of coupled-OGP predicates]\label{conj:prop-coupled-ogp}
    Can we enumerate a set of necessary and sufficient conditions on $f$ to be such that $(k,d)\text{-}\mathsf{CSP}(f)$ satisfies a coupled-OGP for sufficiently large $k$ and $d$?
\end{problem}

\subsection{Beyond \texorpdfstring{$\log$}{log}-depth obstructions for \texorpdfstring{$QAOA_p$}{QAOA(p)}?}\label{subsec:beyond-log-depth}
Work on obstructing $QAOA_p$ using an OGP heavily relies on the locality of the algorithm at shallow depths. It is interesting to investigate whether this obstruction can be extended beyond the $\epsilon\log(n)$-depth regime to make this a non-local obstruction. Recent work~\cite{gamarnik2020low, wein2022optimal} suggests that the OGP may actually result in stronger obstructions than just local ones, and it would be interesting to see if these techniques can be generalized to the setting of $QAOA_p$ to yield obstructions that are non-local.

\begin{problem}[Poylogarithmic obstructions to $QAOA_p$ in the OGP regime]\label{prob:polylog-obst}
  Given a $QAOA_p$ circuit with depth $p \leq \epsilon\left(\log(n)\right)^c$ for some $c > 1$, does there exist $\epsilon_0 > 0$, such that $QAOA_p$ is obstructed on a $(k, d)\text{-}\mathsf{CSP}(f)$ with a coupled OGP from outputting solutions that are better than (1 - $\epsilon_0$) approximations to the optimal?
\end{problem}

\subsection{A Quantum OGP and lifting ``classical" obstructions}
The idea of the OGP obstructing families of algorithms that are \emph{stable} under small perturbations~\cite{gamarnik2020low} motivates the idea of a quantized version of the OGP, to apply to  \emph{quantum} CSPs. To define such a property over quantum states, however, there would need to be a metric that is very similar to the classical hamming distance over $\mathbb{F}_2$ and has the property that it is invariant over permutations of the canonical basis, while still quantifying entanglement in a desired way. One such possible metric is a \emph{quantum} version of the Wasserstein distance of first order that was proposed by De-Palma et al.~\cite{de2021quantum}. In particular, given a natural generalization of~\autoref{def:coupled-ogp-informal} to a quantized setting using a quantum version of the Wasserstein distance of first order, it is interesting to investigate if a larger family of quantum circuits up to some depth $p(n)$ can be obstructed by a family of $d$-local hamiltonians $\{H_n\}_{n \geq n_0}$ that possess a qOGP (quantized Overlap-Gap Property). A result of this type could imply a way to generically ``lift" classical obstructions for \emph{stable} classical algorithms to a corresponding family of quantum algorithms.

\subsection{Message-Passing algorithm for \texorpdfstring{$\mathsf{MAX\text{-}CUT}$}{MAX-CUT} of \emph{all} \texorpdfstring{$d$}{d}-regular graphs?}\label{subsec:max-cut-d-reg}
Finding an efficient classical algorithm that can output cuts that are arbitrary approximations of the optimal ones for $d$-regular graphs is a long-standing open problem in Random Graph Theory and Theoretical Computer Science. Recently, this problem was nearly completely solved by Alaoui et al.~\cite{alaoui2021local} as they constructed a Message-Passing algorithm for random regular graphs of very large degree under the widely believed no-OGP assumption about the SK model. However, the problem does not provide a complete solution as it needs the degree $d$ to be larger than $O(\frac{1}{\epsilon})$ in order to output a $(1-\epsilon)$-optimal cut. A natural question is whether, under a no-OGP assumption, the result can be extended to output $(1-\epsilon)$-optimal cuts for $d$-regular graphs for \emph{any} $d \geq 3$.

\begin{conjecture}[AMP algorithm for Random $d$-Regular Graphs]
  There exists a $poly(n, \frac{1}{\epsilon})$ time algorithm $A$ that outputs a (1 - $\epsilon)$-approximate cut of a random $d$-regular graph $G$ with high probability under a ``no-OGP" assumption for \emph{any} $d \geq 3$.
\end{conjecture}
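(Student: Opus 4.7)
The plan is to bootstrap from the recent message-passing algorithm of Alaoui et al.~\cite{alaoui2021local} by replacing its dense AMP primitive with a sparse belief-propagation primitive that is tuned to the diluted 2-spin glass, and to show that under an appropriate no-OGP assumption on that diluted model the primitive converges to a nearly optimal cut. First I would formulate the correct no-OGP hypothesis: rather than the SK no-OGP assumption used in the large-$d$ regime, one needs a no-(coupled)-OGP assumption for the diluted 2-spin glass on random $d$-regular graphs for all $d\geq 3$, formulated for the entire continuous coupled interpolation of \autoref{def:coupled-interpolation} restricted to the regular configuration model. This matches the $k=2$ case of the geometric objects our work studies and is the natural ``absence of clustering'' hypothesis on the level of the Parisi variational problem for the diluted model.

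Next I would construct the algorithm. Because local neighborhoods of a random $d$-regular graph are with high probability trees of girth $\Omega(\log n)$, belief propagation on these trees is well defined and has a natural interpretation as a message-passing scheme whose fixed points encode Gibbs-type marginals. The proposed algorithm runs $T=T(\epsilon,d)$ rounds of a tempered/Onsager-corrected BP iteration driven by a slowly varying external field, in the spirit of the incremental AMP schedule of \cite{alaoui2021local, montanari2021optimization, elalaoui2021optimization} but adapted to the sparse setting. At each step one follows a discretized path in the replica-symmetric free energy landscape; under the no-OGP assumption, this path does not encounter a barrier and terminates at a near-ground-state configuration. A rounding step based on the signs of the BP marginals converts the fixed point into a cut.

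The analysis would proceed in three stages. Stage one is local convergence: show that the BP iteration on the (locally tree-like) random regular graph tracks, in Wasserstein distance, the corresponding recursion on the infinite $d$-regular tree, with error $o_n(1)$ for $T=O(\log(1/\epsilon))$ iterations, using standard coupling arguments that exploit girth $\Omega(\log n)$. Stage two is an energy/optimality comparison: express the expected cut value produced by the algorithm as an integral of a free-energy derivative along the incremental path and use the Guerra--Toninelli type interpolation (the $k=2$ case of \autoref{lem:interpolation-kxor}) to relate this integral to the ground state energy of the diluted 2-spin glass, showing the gap is at most $\epsilon$ provided no-OGP holds along the path. Stage three is concentration: invoke the concentration machinery of \autoref{sec:concentration-results} (in particular \autoref{lem:concentration-local-functions} and \autoref{lem:concentration-random-coupled-hypergraphs}, which apply to any generic $p$-local algorithm, including truncated BP) to upgrade the expected-value statement to a high-probability statement over both the random graph and any internal algorithmic randomness.

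The main obstacle I anticipate is stage two, specifically the transfer of the ``ascent without barriers'' argument of \cite{alaoui2021local,montanari2021optimization} from the mean-field setting to the sparse regular regime. In the mean-field case one has an exact Gaussian structure and an explicit Parisi-type formula that makes the free-energy derivative analytically tractable; in the sparse case the free energy is only implicitly characterized via distributional fixed-point equations, so a key technical step will be to show that a no-OGP hypothesis on the diluted 2-spin glass implies replica symmetry (or at most replica-symmetry-breaking on a scale that does not obstruct incremental AMP) along every intermediate field strength. A secondary obstacle is handling the discretization error for small $d$: since girth is only $\Omega(\log n)$, the number $T$ of useful BP rounds is bounded, and one must carefully trade off $T$, $\epsilon$, and $d$ to ensure the $\text{poly}(n,1/\epsilon)$ runtime with the required approximation guarantee uniformly in $d\geq 3$.
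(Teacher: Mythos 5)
The statement you are addressing is stated in the paper as a \emph{conjecture} and is not proved there; it appears as an open problem in the discussion section. Immediately after the conjecture, the paper explicitly notes that the approach of Alaoui et al.~\cite{alaoui2021local} critically relies on the Guerra--Toninelli interpolation between the random regular graph model and the SK model, which only works in the regime where $d$ grows with $1/\epsilon$, and it concludes that a solution working for \emph{all} fixed $d \ge 3$ ``will require a fundamentally different approach.'' Your proposal is essentially an attempt to carry out the Alaoui et al.\ strategy in the diluted model, which is precisely the route the paper warns against.

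Concretely, the gap is in your stage two. The Guerra--Toninelli-type interpolation you invoke (the $k=2$ analogue of \autoref{lem:interpolation-kxor}) relates the diluted free energy to the mean-field free energy only up to an additive error of order $(d/k)^{1/3}$, while the ground-state energy density itself scales as $\Theta(\sqrt{d/k})$. The relative error is therefore $\Theta((d/k)^{-1/6})$, which is a fixed positive constant for fixed $d$ and does not shrink with $n$. To obtain a multiplicative $(1-\epsilon)$ approximation through this comparison you must take $d$ growing as a polynomial in $1/\epsilon$, which is exactly the restriction the conjecture is asking you to remove. Your stages one and three are plausible, and you correctly flag stage two as the danger point, but the proposal does not supply any mechanism for avoiding this loss: a genuine proof would have to work intrinsically in the sparse model---for example, by analyzing distributional fixed points of the BP recursion on the $d$-regular tree directly and relating them to the ground state via cavity or local-weak-convergence arguments, with no appeal to a mean-field comparison---and no such mechanism is present here. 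As written, the proposal does not establish the conjecture.
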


Note that the approach of Alaoui et al.~\cite{alaoui2021local} critically relies on the Guerra-Tonnineli interpolation between the $\mathcal{G}_{n, d}$ model and the SK-model which will \emph{only} work for $d \geq O\left(\frac{1}{\epsilon}\right)$. Consequently, a solution that works for \emph{all} $d \geq 3$ will require a fundamentally different approach. A natural question that is motivated by the above conjecture is to then investigate if there is any range of degree for which the \maxcut~problem over $d$-regular graphs possesses an OGP. Given the belief that the SK model does not exhibit an OGP, this would only be an interesting question in the relatively low-degree regime.

\begin{problem}[Random $d$-Regular Graphs don't have an OGP]
  Does the $\mathsf{MAX}$-$\mathsf{CUT}$ problem on random $d$-regular graphs have an OGP for some $d \geq 3$? If so, for what $\{d_0, d_1\} \subset \mathbb{N}$ does the problem exhibit an OGP?
\end{problem}

The $QAOA_p$ algorithm was initiated and analyzed on the $\mathsf{MAX}$-$\mathsf{CUT}$ of $d$-regular graphs and positive answers to the conjectures above will close the scope for any quantum advantage on the problem.

\newpage

\section*{Acknowledgements}
We thank Jonathan Wurtz for many insightful discussions about QAOA. We are grateful to Amartya Shankha Biswas for patiently explaining the \emph{factors of i.i.d.}\ framework to us. We would also like to thank Antares Chen for many invigorating and profound discussions which culminated as the open problem proposed in~\autoref{conj:prop-coupled-ogp}. Lastly, we would like to thank Boaz Barak for providing detailed and helpful feedback on a prior version of this manuscript, and David Gamarnik for his explanations on the state of the art results in the research area.

\bibliography{./main.bib}

\newcommand{\etalchar}[1]{$^{#1}$}
\begin{thebibliography}{DPMTL21}

\bibitem[AAB{\etalchar{+}}20]{arute2020hartree}
Frank Arute, Kunal Arya, Ryan Babbush, Dave Bacon, Joseph~C. Bardin, Rami
  Barends, Sergio Boixo, Michael Broughton, Bob~B. Buckley, David~A. Buell,
  et~al.
\newblock {H}artree-{F}ock on a superconducting qubit quantum computer.
\newblock {\em Science}, 369(6507):1084--1089, 2020.

\bibitem[ACZ20]{auffinger2020sk}
Antonio Auffinger, Wei-Kuo Chen, and Qiang Zeng.
\newblock The SK model is infinite step replica symmetry breaking at zero
  temperature.
\newblock {\em Communications on Pure and Applied Mathematics}, 73(5), 2020.

\bibitem[AMS21]{alaoui2021local}
Ahmed~El Alaoui, Andrea Montanari, and Mark Sellke.
\newblock Local algorithms for Maximum Cut and Minimum Bisection on locally
  treelike regular graphs of large degree.
\newblock {\em arXiv preprint arXiv:2111.06813}, 2021.

\bibitem[ART06]{achlioptas2006solution}
Dimitris Achlioptas and Federico Ricci-Tersenghi.
\newblock On the solution-space geometry of random constraint satisfaction
  problems.
\newblock In {\em Proceedings of the thirty-eighth annual ACM symposium on
  Theory of computing}, pages 130--139, 2006.

\bibitem[BBF{\etalchar{+}}18]{brandao2018fixed}
Fernando~GSL Brandao, Michael Broughton, Edward Farhi, Sam Gutmann, and Hartmut
  Neven.
\newblock For fixed control parameters the quantum approximate optimization
  algorithm's objective function value concentrates for typical instances.
\newblock {\em arXiv preprint arXiv:1812.04170}, 2018.

\bibitem[Bel64]{bell1964einstein}
John~S Bell.
\newblock On the Einstein-Podolsky-Rosen paradox.
\newblock {\em Physics Physique Fizika}, 1(3):195, 1964.

\bibitem[BKKT20]{bravyi2020obstacles}
Sergey Bravyi, Alexander Kliesch, Robert Koenig, and Eugene Tang.
\newblock Obstacles to variational quantum optimization from symmetry
  protection.
\newblock {\em Physical review letters}, 125(26):260505, 2020.

\bibitem[BM21]{barak2021classical}
Boaz Barak and Kunal Marwaha.
\newblock Classical algorithms and quantum limitations for maximum cut on
  high-girth graphs.
\newblock {\em arXiv preprint arXiv:2106.05900}, 2021.

\bibitem[BMZ05]{braunstein2005survey}
Alfredo Braunstein, Marc M{\'e}zard, and Riccardo Zecchina.
\newblock Survey propagation: An algorithm for satisfiability.
\newblock {\em Random Structures \& Algorithms}, 27(2):201--226, 2005.

\bibitem[BPS99]{brailsford1999constraint}
Sally~C Brailsford, Chris~N Potts, and Barbara~M Smith.
\newblock Constraint satisfaction problems: Algorithms and applications.
\newblock {\em European journal of operational research}, 119(3):557--581,
  1999.

\bibitem[Can17]{canonne2017short}
Cl{\'e}ment Canonne.
\newblock A short note on Poisson tail bounds.
\newblock {\em Retrieved from the website: http://www. cs. columbia. edu/~
  ccanonne}, 2017.

\bibitem[CGP{\etalchar{+}}19]{chen2019suboptimality}
Wei-Kuo Chen, David Gamarnik, Dmitry Panchenko, Mustazee Rahman, et~al.
\newblock Suboptimality of local algorithms for a class of max-cut problems.
\newblock {\em Annals of Probability}, 47(3):1587--1618, 2019.

\bibitem[Cho94]{10.2307/2160389}
K.~P. Choi.
\newblock On the Medians of Gamma Distributions and an Equation of Ramanujan.
\newblock {\em Proceedings of the American Mathematical Society},
  121(1):245--251, 1994.

\bibitem[Con18]{concetti2018full}
Francesco Concetti.
\newblock The full replica symmetry breaking in the Ising spin glass on random
  regular graph.
\newblock {\em Journal of Statistical Physics}, 173(5):1459--1483, 2018.

\bibitem[CP18]{chen2018disorder}
Wei-Kuo Chen and Dmitry Panchenko.
\newblock Disorder chaos in some diluted spin glass models.
\newblock {\em The Annals of Applied Probability}, 28(3):1356--1378, 2018.

\bibitem[CST22]{chen2022cut}
Antares Chen, Jonathan Shi, and Luca Trevisan.
\newblock Cut Sparsification of the Clique Beyond the Ramanujan Bound: A
  Separation of Cut Versus Spectral Sparsification.
\newblock In {\em Proceedings of the 2022 Annual ACM-SIAM Symposium on Discrete
  Algorithms (SODA)}, pages 3693--3731. SIAM, 2022.

\bibitem[DMS17]{dembo2017extremal}
Amir Dembo, Andrea Montanari, and Subhabrata Sen.
\newblock Extremal cuts of sparse random graphs.
\newblock {\em The Annals of Probability}, 45(2):1190--1217, 2017.

\bibitem[DPMTL21]{de2021quantum}
Giacomo De~Palma, Milad Marvian, Dario Trevisan, and Seth Lloyd.
\newblock The quantum Wasserstein distance of order 1.
\newblock {\em IEEE Transactions on Information Theory}, 67(10):6627--6643,
  2021.

\bibitem[DS04]{de2004random}
Luca De~Sanctis.
\newblock Random multi-overlap structures and cavity fields in diluted spin
  glasses.
\newblock {\em Journal of statistical physics}, 117(5):785--799, 2004.

\bibitem[DSS16]{ding2016satisfiability}
Jian Ding, Allan Sly, and Nike Sun.
\newblock Satisfiability threshold for random regular NAE-SAT.
\newblock {\em Communications in Mathematical Physics}, 341(2):435--489, 2016.

\bibitem[EAMS21]{elalaoui2021optimization}
Ahmed El~Alaoui, Andrea Montanari, and Mark Sellke.
\newblock Optimization of mean-field spin glasses.
\newblock {\em The Annals of Probability}, 49(6):2922--2960, 2021.

\bibitem[EWL{\etalchar{+}}21]{ebadi2021quantum}
Sepehr Ebadi, Tout~T Wang, Harry Levine, Alexander Keesling, Giulia Semeghini,
  Ahmed Omran, Dolev Bluvstein, Rhine Samajdar, Hannes Pichler, Wen~Wei Ho,
  et~al.
\newblock Quantum phases of matter on a 256-atom programmable quantum
  simulator.
\newblock {\em Nature}, 595(7866):227--232, 2021.

\bibitem[FGG14]{farhi2014quantum}
Edward Farhi, Jeffrey Goldstone, and Sam Gutmann.
\newblock A quantum approximate optimization algorithm.
\newblock {\em arXiv preprint arXiv:1411.4028}, 2014.

\bibitem[FGG20a]{farhi2020quantum}
Edward Farhi, David Gamarnik, and Sam Gutmann.
\newblock The Quantum Approximate Optimization Algorithm Needs to See the Whole
  Graph: A Typical Case.
\newblock {\em arXiv preprint arXiv:2004.09002}, 2020.

\bibitem[FGG20b]{farhi2020quantumw}
Edward Farhi, David Gamarnik, and Sam Gutmann.
\newblock The Quantum Approximate Optimization Algorithm Needs to See the Whole
  Graph: Worst Case Examples.
\newblock {\em arXiv preprint arXiv:2005.08747}, 2020.

\bibitem[FGGZ19]{farhi2019quantum}
Edward Farhi, Jeffrey Goldstone, Sam Gutmann, and Leo Zhou.
\newblock The Quantum Approximate Optimization Algorithm and the
  Sherrington-Kirkpatrick Model at Infinite Size.
\newblock {\em arXiv preprint arXiv:1910.08187}, 2019.

\bibitem[FGL12]{fan2012hoeffding}
Xiequan Fan, Ion Grama, and Quansheng Liu.
\newblock Hoeffding’s inequality for supermartingales.
\newblock {\em Stochastic Processes and their Applications},
  122(10):3545--3559, 2012.

\bibitem[FL03]{franz2003replica}
Silvio Franz and Michele Leone.
\newblock Replica bounds for optimization problems and diluted spin systems.
\newblock {\em Journal of Statistical Physics}, 111(3):535--564, 2003.

\bibitem[GJ21]{gamarnik2021overlap}
David Gamarnik and Aukosh Jagannath.
\newblock The overlap gap property and approximate message passing algorithms
  for $ p $-spin models.
\newblock {\em The Annals of Probability}, 49(1):180--205, 2021.

\bibitem[GJW20]{gamarnik2020low}
David Gamarnik, Aukosh Jagannath, and Alexander~S Wein.
\newblock Low-Degree Hardness of Random Optimization Problems.
\newblock In {\em 2020 IEEE 61st Annual Symposium on Foundations of Computer
  Science (FOCS)}, pages 131--140. IEEE Computer Society, 2020.

\bibitem[GS14]{gamarnik2014limits}
David Gamarnik and Madhu Sudan.
\newblock Limits of local algorithms over sparse random graphs.
\newblock In {\em Proceedings of the 5th conference on Innovations in
  theoretical computer science}, pages 369--376, 2014.

\bibitem[GT02]{guerra2002thermodynamic}
Francesco Guerra and Fabio~Lucio Toninelli.
\newblock The thermodynamic limit in mean field spin glass models.
\newblock {\em Communications in Mathematical Physics}, 230(1):71--79, 2002.

\bibitem[GT04]{guerra2004high}
Francesco Guerra and Fabio~Lucio Toninelli.
\newblock The high temperature region of the Viana--Bray diluted spin glass
  model.
\newblock {\em Journal of statistical physics}, 115(1):531--555, 2004.

\bibitem[GWZ{\etalchar{+}}21]{gong2021quantum}
Ming Gong, Shiyu Wang, Chen Zha, Ming-Cheng Chen, He-Liang Huang, Yulin Wu,
  Qingling Zhu, Youwei Zhao, Shaowei Li, Shaojun Guo, et~al.
\newblock Quantum walks on a programmable two-dimensional 62-qubit
  superconducting processor.
\newblock {\em Science}, 372(6545):948--952, 2021.

\bibitem[GZ19]{gamarnik2019landscape}
David Gamarnik and Ilias Zadik.
\newblock The landscape of the planted clique problem: Dense subgraphs and the
  overlap gap property.
\newblock {\em arXiv preprint arXiv:1904.07174}, 2019.

\bibitem[H{\etalchar{+}}63]{harris1963theory}
Theodore~Edward Harris et~al.
\newblock {\em The theory of branching processes}, volume~6.
\newblock Springer Berlin, 1963.

\bibitem[Has19]{hastings2019classical}
Matthew~B Hastings.
\newblock Classical and quantum bounded depth approximation algorithms.
\newblock {\em arXiv preprint arXiv:1905.07047}, 2019.

\bibitem[KKMO07]{khot2007optimal}
Subhash Khot, Guy Kindler, Elchanan Mossel, and Ryan O’Donnell.
\newblock Optimal inapproximability results for MAX-CUT and other 2-variable
  CSPs?
\newblock {\em SIAM Journal on Computing}, 37(1):319--357, 2007.

\bibitem[Kum92]{kumar1992algorithms}
Vipin Kumar.
\newblock Algorithms for constraint-satisfaction problems: A survey.
\newblock {\em AI magazine}, 13(1):32--32, 1992.

\bibitem[KV05]{khot2005unique}
Subhash Khot and Nisheeth~K Vishnoi.
\newblock On the unique games conjecture.
\newblock In {\em FOCS}, volume~5, page~3. Citeseer, 2005.

\bibitem[Mar21]{marwaha2021local}
Kunal Marwaha.
\newblock Local classical MAX-CUT algorithm outperforms $ p= 2$ QAOA on
  high-girth regular graphs.
\newblock {\em Quantum}, 5:437, 2021.

\bibitem[Mon19]{montanari2021optimization}
A.~Montanari.
\newblock Optimization of the Sherrington-Kirkpatrick Hamiltonian.
\newblock In {\em 2019 IEEE 60th Annual Symposium on Foundations of Computer
  Science (FOCS)}, pages 1417--1433, Los Alamitos, CA, USA, nov 2019. IEEE
  Computer Society.

\bibitem[MP01]{mezard2001bethe}
Marc M{\'e}zard and Giorgio Parisi.
\newblock The Bethe lattice spin glass revisited.
\newblock {\em The European Physical Journal B-Condensed Matter and Complex
  Systems}, 20(2):217--233, 2001.

\bibitem[Pan14a]{panchenko2014introduction}
Dmitry Panchenko.
\newblock Introduction to the SK model.
\newblock {\em arXiv preprint arXiv:1412.0170}, 2014.

\bibitem[Pan14b]{panchenko2014parisi}
Dmitry Panchenko.
\newblock The Parisi formula for mixed $ p $-spin models.
\newblock {\em The Annals of Probability}, 42(3):946--958, 2014.

\bibitem[Par80]{parisi1980sequence}
Giorgio Parisi.
\newblock A sequence of approximated solutions to the SK model for spin
  glasses.
\newblock {\em Journal of Physics A: Mathematical and General}, 13(4):L115,
  1980.

\bibitem[Pre18]{preskill2018quantum}
John Preskill.
\newblock Quantum computing in the NISQ era and beyond.
\newblock {\em Quantum}, 2:79, 2018.

\bibitem[PRTR14]{parisi2014diluted}
Giorgio Parisi, Federico Ricci-Tersenghi, and Tommaso Rizzo.
\newblock Diluted mean-field spin-glass models at criticality.
\newblock {\em Journal of Statistical Mechanics: Theory and Experiment},
  2014(4):P04013, 2014.

\bibitem[PT04]{panchenko2004bounds}
Dmitry Panchenko and Michel Talagrand.
\newblock Bounds for diluted mean-fields spin glass models.
\newblock {\em Probability Theory and Related Fields}, 130(3):319--336, 2004.

\bibitem[Rag08]{raghavendra2008optimal}
Prasad Raghavendra.
\newblock Optimal algorithms and inapproximability results for every CSP?
\newblock In {\em Proceedings of the fortieth annual ACM symposium on Theory of
  computing}, pages 245--254, 2008.

\bibitem[Sen18]{sen2018optimization}
Subhabrata Sen.
\newblock Optimization on sparse random hypergraphs and spin glasses.
\newblock {\em Random Structures \& Algorithms}, 53(3):504--536, 2018.

\bibitem[SK75]{sherrington1975solvable}
David Sherrington and Scott Kirkpatrick.
\newblock Solvable model of a spin-glass.
\newblock {\em Physical review letters}, 35(26):1792, 1975.

\bibitem[Tak90]{TAKACS1990321}
Lajos Takács.
\newblock On Cayley's formula for counting forests.
\newblock {\em Journal of Combinatorial Theory, Series A}, 53(2):321--323,
  1990.

\bibitem[Tal06]{talagrand2006parisi}
Michel Talagrand.
\newblock The parisi formula.
\newblock {\em Annals of mathematics}, pages 221--263, 2006.

\bibitem[Wei22]{wein2022optimal}
Alexander~S Wein.
\newblock Optimal low-degree hardness of maximum independent set.
\newblock {\em Mathematical Statistics and Learning}, 2022.

\bibitem[WL20]{wurtz2020bounds}
Jonathan Wurtz and Peter~J Love.
\newblock Bounds on MAXCUT QAOA performance for p > 1.
\newblock {\em arXiv preprint arXiv:2010.11209}, 2020.

\bibitem[YFW{\etalchar{+}}03]{yedidia2003understanding}
Jonathan~S Yedidia, William~T Freeman, Yair Weiss, et~al.
\newblock Understanding belief propagation and its generalizations.
\newblock {\em Exploring artificial intelligence in the new millennium},
  8:236--239, 2003.

\bibitem[ZWC{\etalchar{+}}20]{zhou2020quantum}
Leo Zhou, Sheng-Tao Wang, Soonwon Choi, Hannes Pichler, and Mikhail~D Lukin.
\newblock Quantum approximate optimization algorithm: Performance, mechanism,
  and implementation on near-term devices.
\newblock {\em Physical Review X}, 10(2):021067, 2020.

\end{thebibliography}
\bibliographystyle{alpha-betta}

\newpage
\appendix
\section{Proof of Proposition~\ref{prop:p-local-vs-factors}}\label{sec:proof-bell-generalization}
\begin{proof}
    We describe a circuit that implements a Bell experiment: a Bell pair of entangled qubits is created, and then a unitary transformation is randomly and independently applied to each qubit before they are measured (equivalently, a random basis is chosen for each measurement).
    
    Consider 4 qubits $\ket{a_ca_eb_eb_c}$ in the state $\ket{0}^{\otimes 4}$. Apply the $H$ gate to $a_c$, $a_e$ and $b_c$ so that they enter into the $\ket{+}$ state, and leave $b_e$ as is. Using $a_e$ as the control qubit, apply a CNOT gate to $b_e$. This results in the creation of a Bell pair $\ket{\phi_+}$ between $a_e$ and $b_e$. Now apply a controlled unitary $C$-$U$ to $a_e$ using $a_c$ as the control qubit. Similarly, apply the controlled unitary $C$-$U$ to $b_e$ using $b_c$ as the control qubit. Finally, measure all qubits in the $Z$ basis. \Snotes{Make consistent CNOT/controlled-NOT/C-U/controlled-U} \\
    
    \noindent Now, form a graph over the qubits of this circuit, with edges between pairs of qubits that are interacted on by the same gate, as well as a self-loop on $b_e$ so as to distinguish it from $a_e$:
    \[
        G = (\{a_c, a_e, b_e, b_c\},\; \{\{a_c, a_e\}, \{a_e, b_e\}, \{b_e, b_c\}, \{b_e\}\})\, .
    \]
    An 1-local algorithm on graphs which implements the above quantum circuit when run on $G$ is as follows:
    \begin{enumerate}
        \item Create a qubit for each vertex, in the $\ket{0}$ state.
        \item Apply a Hadamard gate to each vertex without a self-loop.
        \item For every edge between a vertex of degree 4 with a self-loop and a vertex of degree 2, apply a controlled-not gate from the vertex of degree 2 to the one of degree 4.
        \item For every edge incident to a vertex of degree $1$, apply a controlled-U gate from that vertex to the one at the other endpoint of the edge.
        \item Measure all qubits in the $Z$ basis and output the results.
    \end{enumerate}
    
    \begin{figure}[ht!]
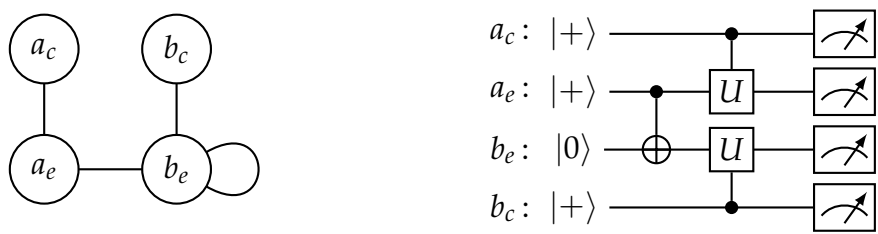
\label{fig:interfere-graph}
    \centering
    \begin{minipage}[c]{0.35\textwidth}
    \centering
    \begin{qcircuit}[]
		\tikzset{
			xscale=0.8,
			yscale=0.8,
			vertex/.style={draw, circle, minimum size=26pt, inner sep=0pt, outer sep=0pt},
			}
			
			\node[vertex] at (0,0) (ac) {\large $a_c$};
			\node[vertex] at (0,-2) (ae) {\large $a_e$} edge [-] (ac);
			\node[vertex] at (2.2,-2) (be) {\large $b_e$} edge [-] (ae);
			\node[vertex] at (2.2,0) (bc) {\large $b_c$} edge [-] (be);
			\draw (be) to [in=30,out=-30,loop,looseness=6] (be);
		\end{qcircuit}
    \end{minipage}
    \begin{minipage}[c]{0.5\textwidth}
    \centering
    \begin{qcircuit}[]
		\tikzset{
			xscale=0.5,
			yscale=0.7,
			dot/.style={fill, circle, inner sep=0pt, outer sep=0pt, minimum size=5pt},
			cross/.style={path picture={ 
					\draw[thick,black](path picture bounding box.north) -- (path picture bounding box.south) (path picture bounding box.west) -- (path picture bounding box.east);
				}},
				target/.style={draw,circle,cross,minimum size=0.26 cm},
			}
			\node at (-1.9,0) (label1) {\large $a_c$\,:};
			\node at (-1.9,-1.1) (label2) {\large $a_e$\,:};
			\node at (-1.9,-2.2) (label3) {\large $b_e$\,:};
			\node at (-1.9,-3.3) (label4) {\large $b_c$\,:};
			\node at (-0.2,0) (init1) {\large $\ket{+}$};
			\node at (-0.2,-1.1) (init2) {\large $\ket{+}$};
			\node at (-0.2,-2.2) (init3) {\large $\ket{0}$};
			\node at (-0.2,-3.3) (init4) {\large $\ket{+}$};
			\node[meter] at (7,0) (meas1) {} edge [-] (init1);
			\node[meter] at (7,-1.1) (meas2) {} edge [-] (init2);
			\node[meter] at (7,-2.2) (meas3) {} edge [-] (init3);
			\node[meter] at (7,-3.3) (meas4) {} edge [-] (init4);
			\node[dot] at (2,-1.1) (bellcnotsource) {};
			\node[target] at (2,-2.2) (bellcnottarget) {} edge [-] (bellcnotsource);
			\node[dot] at (4,0) (alicecontrolusource) {};
			\node[gate] at (4,-1.1) (alicecontrolutarget) {\large $U$} edge [-] (alicecontrolusource);
			\node[dot] at (4,-3.3) (bobcontrolusource) {};
			\node[gate] at (4,-2.2) (bobcontrolutarget) {\large $U$} edge [-] (bobcontrolusource);
		\end{qcircuit}
    \end{minipage}
    \caption{A graph $G$ (left) on which the algorithm described in the proof of \autoref{prop:p-local-vs-factors} executes a quantum circuit (right) which performs a Bell experiment. $a_e$ and $b_e$ are entangled, as suggested by their names. $a_c$ and $b_c$ are independent and correspond to control qubits.}
    \label{fig:interference-graph}
    \end{figure}
    
    We show that this algorithm is $1$-local by doing casework on each "type" of vertex.
    It is ok for the output on a vertex to depend on the degree of the vertex and whether it has any self-loops, as these are functions of the $1$-neighborhood of the vertex.
    Thus the types of vertices are as follows:
    \begin{description}
        \item[Degree-1 vertices]
        The output of a vertex of degree 1 is always an independent uniform distribution over $\{-1,1\}$.
        The only interaction of the corresponding qubit is being the source of a controlled-$U$ gate.
        This gate commutes with Pauli-$Z$ operators on the source qubit.
        Therefore, the measurement in the $Z$ basis commutes past the controlled-$U$ gate, effectively making this part of a circuit a $Z$ measurement on a $\ket{+}$ state, followed by classical control determining whether to apply a $U$ gate on its neighbor.
        Since this measurement is always independent and invariant with respect to the hypergraph, the algorithm is $1$-local on these vertices.
        \item[Degree-2 vertices]
        These can interact with degree-1 vertices and degree-4 vertices with self-loops.
        If we consider the Heisenberg picture and propagate the $Z$ measurements backward through the circuit, the first gates we encounter are controlled-$U$ gates from the degree-1 vertices.
        Recall that we can treat those controlled-$U$ gates as a single-qubit $U$ gate controlled classically by a single random bit.
        If the $U$ gate happens, then a $Z$ measurement after the $U$ gate is equivalent to a $U^{\dagger}ZU$ measurement before the $U$ gate; otherwise it remains a $Z$ measurement.
        If there's more than one degree-$1$ neighbor and more than one consequent $U$ gate, then our measurement is $U^{2\dagger}ZU^{2}$ instead, and we've reached the initialization of our qubit.
        If the other neighbor is a degree-$4$ vertex with a self-edge, then there's a controlled-NOT acting on our degree-$2$ vertex, and $U^{\dagger}ZU$ after the controlled-NOT is equivalent to some linear combination of phase changes on the degree-$4$ vertex multiplied by some single-qubit unitary on the degree-$2$ vertex.
        At this point, our measurement operator has reached the initializations of 2 qubits, both in the $1$-neighborhood of the degree-$2$ vertex we started with.
        If there was no $U$ gate applied in step $4$, and only one or more controlled-NOTs instead, then a $Z$ on the target after a controlled-NOT gate is equivalent to the product of $Z$ on the target and $Z$ on the source before the gate.
        Then the $Z$ on the target will reach an initialization, while the $Z$ on the source commutes past other being on the source of other controlled-NOT gates, to also reach the initialization.
        Thus we have described how the measurement is a function of edges in the $1$-neighborhood, and how the support of the measurement in the Heisenberg picture is within the $1$-neighborhood, so that if all other measurements also have supports in their respective $1$-neighborhoods, the algorithm is $1$-local on vertices of degree $2$.
        \Snotes{Ugh, make this... more readable? Maybe??? Is it worth it?}
        \item[Degree-4 vertices]
        These can interact with degree-$1$ vertices and degree-$2$ vertices, but only if they have a self-loop.
        The action of degree-$1$ vertices (in step $4$ of the algorithm) is identical with the degree-2 case above.
        If the backwards-propagating measurement is $Z$ at the time when it hits the source of a controlled-NOT gate, the $Z$ on the source after the controlled-NOT is equivalent to a $Z$ on the source multiplied by an $X$ on the target before the gate.
        The $Z$ on the source then propagates back through another controlled-NOT if it exists, remaining a $Z$ until it hits the initialization.
        The $X$s on the target degree-2 vertices commute through at most one other controlled-NOT gate on the target end until it hits the initialization of the degree-2 vertex.
        At the end of this process, the support of the measurement operator on the qubit initializations is fully determined by and restricted to the $1$-neighborhood of the vertex.
        On the other hand, if the measurement is $U^{\dagger}ZU$ by the time it touches the source of a controlled-not, it propagates backwards into a linear combination of single-qubit unitaries on the source and I or X operators on the target.
        The unitary on the source then hits the initialization of the degree-$4$ vertex, while the $I$ or the $X$ commutes past any other controlled-NOTs the degree-2 vertex might be the target of, to hit the initialization of the adjacent degree-2 vertex.
        Again, the measurements in the Heisenberg picture are fully determined by and restricted to the $1$-neighborhood.
        \item[All other degrees]
        Vertices of any other degree do not interact with other vertices in the algorithm.
    \end{description}
    By the above casework, this algorithm is $1$-local.
    
    Note that this process \emph{cannot} be encoded in a $1$-local factors of i.i.d.\ algorithm since this setup allows for signaling strategies that violate Bell inequalities, whereas $1$-local factors of i.i.d.\ can be explained by using latent variables to describe the evolution of the randomness of the $1$-neighborhood of every vertex.
    An extension of this circuit which would involve generating $p$ entangled qubits in a similar process generalizes the argument to $p$-local algorithms.
\end{proof}

\section{Proof of Theorem~\ref{thm:vanish-nghbd}}\label{sec:proof-vanishing-nhbhd}

\begin{proof}
    First, note that given $\frac{dn}{k}$ hyperedges (which is the expectation of the $\poisson(dn/k)$ distribution from which the number of edges are sampled), each of size $k$, the expected number of hyperedges some vertex $v_i$ shows up in is,
    \begin{equation}\label{eq:indicator-exp-hyperedges}
        \mathlarger{\Ex_{|E|}}[\Pi_i(E(HG))] = \frac{dn}{k}\cdot \Pr_{e \sim [n]^k}[v_i \in e] \leq \frac{dn}{k}\cdot\left(\frac{n^{k-1}}{n^k}\right)k = \frac{dn}{k}\cdot\left(\frac{1}{n}\right)k = d\, .
    \end{equation}
    Now, consider another model of a $k$-uniform hypergraph in which we sample $n^k$ edges independently, each with probability $p$. This induces a $Bin(n^k, p)$ distribution on the number of hyperedges. To compare this with our model, we compare the expected number of hyperedges as,
    \[
        pn^k = \frac{dn}{k} \implies p = \frac{d}{kn^{k-1}}\, .
    \]
    Note that the degree distribution of a vertex (which is equivalent to the number of hyperedges it appears in) in this model is given as,
    \[
        Bin\left(kn^{k-1}, \frac{d}{kn^{k-1}}\right)\, .
    \]
    The distribution above converges to $\poisson(d)$ in the large $n$ limit, and its moment generating function is dominated by that of the $\poisson(d)$ distribution for all large but finite $n$. Formally,
    \[
        \phi\left(Bin\left(kn^{k-1}, \frac{d}{kn^{k-1}}\right)\right) = (1 - d(1 - e^t))^{kn^{k-1}} \leq e^{d(e^t - 1)} = \phi(\poisson(d))\, .
    \]
    Therefore, the number of vertices in the 1-neighborhood of any vertex $v \in V(G)$ can be bounded from above (in the large $n$ limit) as,
    \[
        (k-1)\poisson(d)\, .
    \]
    So, we will consider the scaled Galton-Watson process above starting at some vertex $v \in V(G)$, which is itself a Galton-Watson process~\cite{harris1963theory}.
    We upper bound the scaled Galton-Watson process above with the Galton-Watson process induced by the $\poisson(d(k-1))$ distribution (at an appropriate level) by comparing their respective probability generating functions. Note that the probability generating function of $(k-1)\poisson(d)$ is $f_{\poisson(d)}(z^{k-1})$. To look at the $x$-th neighborhood, we will use the probability generating function of the $x$-th level of the underlying Galton-Watson process, denoted as $f_x$. Now, by \cite[Theorem 4.1]{harris1963theory},
    \[
        f_x = f_1^{\circ x} := \overbrace{f_{1}  \circ \dots \circ f_{1}}^x \, ,
    \]
    where $f_1 = f_{\poisson(d)}(z^{k-1}) = f_{(k-1)\poisson(d)}(z)$. By comparing the pgf of the Galton-Watson process of $f_{(k-1)\poisson(d)}(z)$ and the Galton-Watson process of $f_{\poisson(d(k-1))}(z)$, one can observe that,
    \[
        f_{x-1} \leq g_{x}\, ,
    \]
    where $g_{x} = g^{\circ x}_1$, and $g_1 = f_{\poisson(d(k-1))}(z)$. We now repeat the argument in \cite[Neighborhood Theorem]{farhi2020quantum} that bounds the size of a $\poisson(d(k-1))$ branching process. \\
    Let $Z_x$ denote the size of the $x$-th generation of a $\poisson(d(k-1))$ branching process. Note that $Z_0 = 1$, $Z_1 = \poisson(d(k-1))$ and, more generally, one can look at the moment generating function of the $\poisson(d(k-1))$ branching process as,
    \[
        \E_{\mathrm{Branching}(d(k-1))}[e^{tZ_x}] = e^{d(k-1)(\E[e^{tZ_{x-1}}] - 1)}\, .
    \]
    We denote by $\phi_x(t)$ the moment generating function of the $\poisson(d(k-1))$ branching process. It is straightforward to see by an inductive argument used in \cite[Neighborhood Size Theorem]{farhi2020quantum} that,
    \[
        \phi_x\left(\left(\frac{\ln 2}{d(k-1)}\right)^x\right) \leq e\, ,\ \forall k \geq 0.
    \]
    Furthermore, by an application of Markov's inequality to the moment generating function, the following is true for any $u,\ t > 0$,
    \begin{equation}\label{eq:markov-poisson-process}
        \Pr_{\mathrm{Branching}(d(k-1))}\left[Z_x \geq u\left(\frac{(d(k-1))}{\ln 2}\right)^x\right] \leq e^{-u\left(\frac{d(k-1)}{\ln 2}\right)^x}\phi_{x}(t) \leq e^{-u}e\, ,
    \end{equation}
    where the last inequality follows by a choice of $t = \left(\frac{\ln 2}{d(k-1)}\right)^x $. \\

    We now bound the probability that the total number of nodes in the branching process at height $x$ is \emph{at least} $c$ via a union bound,
    \[
        \Pr_{\mathrm{Branching}(d(k-1))}\left[\sum_{i=1}^{x}Z_i \geq c\right] \leq \sum_{i=1}^x\Pr[Z_i \geq \frac{c}{x}]\, .
    \]
    Let $c = (d(k-1))^{sx}\left(\frac{d(k-1)}{\ln 2}\right)^x$ and $u = \frac{d^{sx}}{x}$, where $s$ will be chosen later to demonstrate the existence of $A$. Substituting these into \autoref{eq:markov-poisson-process} yields a lower bound on the size of the neighborhoods induced by the $\poisson(d(k-1))$ branching process as,
    \begin{equation}\label{eq:p-nghbd-size}
        \Pr_{\mathrm{Branching}(d(k-1))}\left[\sum_{i=1}^{x}Z_i \geq (d(k-1))^{sx}\left(\frac{d(k-1)}{\ln 2}\right)^x\right] \leq xe^{-\frac{d^{sx}}{x}}e \leq e^{-d^{sx/2}}\, ,
    \end{equation}
    where we assume a sufficiently large choice of $x$. Denote by $p_r(v_i)$ the probability that a vertex $v_i$ in a random $k$-uniform hypergraph with $p = \frac{d}{kn^{k-1}}$ has a $x$-neighborhood with size \emph{at least} that of \autoref{eq:p-nghbd-size}, conditioned on the hypergraph having $r$ hyperedges. Extending the argument in \cite[Neighborhood Size Theorem]{farhi2020quantum} further,
    \[
        \sum_{r = dn/k}^{\infty}\Pr[\text{G has r hyperedges}]\cdot p_r(v_i) \leq \sum_{r = 0}^{\infty}\Pr[\text{G has r hyperedges}]\cdot p_r(v_i) \leq e^{-d^{sx/2}}\, .
    \]
    Since an increase in the number of sampled edges will only increase $p_r(v_i)$, it follows that,
    \[
        \sum_{r = dn/k}^{\infty}\Pr[\text{G has r hyperedges}]\cdot p_{dn/k}(v_i) \leq e^{-d^{sx/2}}\, .
    \]
    Now, the number of edges are distributed as $\poisson(dn/k)$. Since $\E[\poisson(dn/k)] = \frac{dn}{k}$,
    \[
        \sum_{r = dn/k}^{\infty}\Pr[\text{G has r hyperedges}] = \Pr[\text{G has }\geq dn/k \text{ hyperedges}] \, .
    \]
    Note that the mean of the $\poisson(dn/k)$ distribution is an integer. Consequently, the median of the distribution is also $\frac{dn}{k}$ \cite{10.2307/2160389}. Then,
    \begin{align*}
        \Pr[\poisson(dn/k) \geq dn/k] \in \left(\frac{1}{2} - \Pr[\poisson(dn/k) = dn/k], \frac{1}{2} + \Pr[\poisson(dn/k) = dn/k]\right)\, .
    \end{align*}
    Now, by applying Stirling's approximation,
    \[
        \Pr[\poisson(dn/k) = dn/k] \leq \frac{1}{\sqrt{2\pi dn/k}}\, .
    \]
    Consequently,
    \[
        \Pr[\poisson(dn/k) \geq dn/k] \geq \frac{1}{2} - \Pr[\poisson(dn/k) = dn/k] \geq \frac{1}{2} - \frac{1}{\sqrt{2\pi dn/k}} = \frac{1}{2} - o_{d, n}(1)\, .
    \]
    This yields an upper bound for the probability of a $x$-neighborhood exceeding the desired size as, 
    \[
        p_{dn/k}(v_i) \leq \left(2 + o_{d, n}(1)\right) e^{-d^{sx/2}} \leq e^{-d^{sx/3}}\, ,
    \]
    for sufficiently large $x$. \\
    We set
    \[
        2p \leq x - 1= \frac{(1-\tau)\log n}{\log(\frac{d(k-1)}{\ln 2})} - 1
    \]
    for some $\tau \in (0, 1)$. Then, let $\log = \log_{d(k-1)}$ and define $L = \log_{d(k-1)}\left(\frac{1}{\ln 2}\right)$. Consequently, the remaining argument follows exactly as in the last part of the proof of \cite[Neighborhood Size Theorem]{farhi2020quantum}. Specifically, \autoref{eq:p-nghbd-size} reduces to \cite[Eq. 75, Neighborhood Size Theorem]{farhi2020quantum} with $w = (1 - \tau)$ given the parameter choices above after some algebra.
\end{proof}

\end{document}